\documentclass[11pt]{article}
\usepackage{url,amsfonts, amsmath, amssymb, amsthm}
\usepackage{mathrsfs} 
\usepackage[pdftex]{graphicx}
\usepackage{rotating}
\usepackage{hyperref}

\setlength{\textheight}{8.75in} \setlength{\columnsep}{2.0pc}
\setlength{\textwidth}{6.5in} \setlength{\topmargin}{0in}
\setlength{\headheight}{0.0in} \setlength{\headsep}{0.0in}
\setlength{\oddsidemargin}{0in} \setlength{\evensidemargin}{0in}
\setlength{\parindent}{1pc}
\newcommand{\shortbar}{\begin{center}\rule{5ex}{0.1pt}\end{center}}

\theoremstyle{plain}
\newtheorem{theorem}{Theorem}[section]
\newtheorem{lemma}[theorem]{Lemma}
\newtheorem{corollary}[theorem]{Corollary}

\theoremstyle{definition}
\newtheorem{definition}[theorem]{Definition}

\theoremstyle{remark}
\newtheorem{remark}[theorem]{Remark}


%


\newcommand{\Patrascu}{P\v{a}tra\c{s}cu}
\newcommand{\Furer}{F\"{u}rer}
\newcommand{\Boruvka}{Bor\r{u}vka}

\newcommand{\rb}[2]{\raisebox{#1 mm}[0mm][0mm]{#2}}
\newcommand{\istrut}[2][0]{\rule[- #1 mm]{0mm}{#1 mm}\rule{0mm}{#2 mm}}

\newcommand{\E}{{\mathbb E\/}}
\newcommand{\paren}[1]{\left( #1 \right)}
\newcommand{\ang}[1]{\left< #1 \right>}
\newcommand{\curly}[1]{\left\{ #1 \right\}}

\newcommand{\floor}[1]{\lfloor #1 \rfloor}
\newcommand{\f}[2]{\frac{#1}{#2}}

\newcommand{\poly}{\operatorname{poly}}
\newcommand{\bydef}{\stackrel{\operatorname{def}}{=}}

\newcommand{\ignore}[1]{}

\newcommand{\Euler}{\operatorname{Euler}}

\newcommand{\hcm}[1][1]{\hspace*{#1 cm}}

\newcommand{\ET}{\mbox{{\bf ET}}}

\newcommand{\NIm}{\bar{m}}
\newcommand{\FRTree}{{\sf FR-Tree}}
\newcommand{\Decomp}{{\sf Decomp}}
\newcommand{\Comp}{\mathcal{C}}
\newcommand{\LDForest}{\mathcal{T}}

\newcommand{\Mat}{\Upsilon}
\newcommand{\HMat}{\widehat{\Upsilon}}

\newcommand{\HE}{\widehat{E}}

\newcommand{\VStruct}{\mathscr{V}}
\newcommand{\CStruct}{\mathscr{C}}

\newcommand{\HVStruct}{\widehat{\mathscr{V}}}
\newcommand{\HCStruct}{\widehat{\mathscr{C}}}
\newcommand{\BipStruct}{\widehat{\mathscr{B}}}

\newcommand{\dmax}{d_\star}


\title{Connectivity Oracles for Graphs Subject to Vertex Failures\thanks{Supported by
NSF CAREER grant CCF-0746673 and NSF grants CCF-1217338, CNS-1318294, CCF-1514383, CCF-1637546.
R. Duan is supported by a China Youth 1000-Talent grant.  This paper includes material from two extended
abstracts published in STOC 2010~\cite{DuanP10} and SODA 2017~\cite{DuanP17}.}}
\author{Ran Duan\\ Tsinghua University \and Seth Pettie\\ University of Michigan}
\date{}

\begin{document}
\maketitle

\begin{abstract}
We introduce new data structures for answering connectivity queries in graphs subject
to batched {\em vertex failures}.  A deterministic structure processes
a batch of $d\leq \dmax$ failed vertices in $\tilde{O}(d^3)$ time and thereafter answers
connectivity queries in $O(d)$ time.  It occupies space $O(\dmax m\log n)$.
We develop a randomized Monte Carlo version of our data structure 
with update time $\tilde{O}(d^2)$, query time $O(d)$, and space 
$\tilde{O}(m)$ for any failure bound $d\le n$.  This is the first connectivity oracle for general graphs 
that can efficiently deal with an unbounded number of vertex failures.

We also develop a more efficient Monte Carlo \emph{edge}-failure connectivity oracle.  
Using space $O(n\log^2 n)$, $d$ edge failures are processed in $O(d\log d\log\log n)$ time
and thereafter, connectivity queries are answered in $O(\log\log n)$ time, which are correct w.h.p.

Our data structures are based on a new decomposition theorem for an undirected graph $G=(V,E)$, 
which is of independent interest.  It states that for any terminal set 
$U\subseteq V$ we can remove a set $B$ of $|U|/(s-2)$ vertices such that the remaining graph contains
a Steiner forest for $U-B$ with maximum degree $s$.
\end{abstract}

\section{Introduction}

The {\em dynamic subgraph model}~\cite{Chan06b,ChanPR11,Duan10,DuanP10,FrigioniI00,PatrascuT07} 
is a constrained dynamic graph model.  
Rather than allow the graph to evolve in completely arbitrary ways (via an unbounded sequence of edge insertions and deletions), there is assumed to be a {\em fixed} ideal graph $G=(V,E)$
that can be preprocessed in advance.  The ideal graph is susceptible only to the {\em failure} of edges/vertices and their subsequent {\em recovery}, 
possibly with a bound $\dmax$ on the number of failures at one time.  Queries naturally answer questions
about the current failure-free subgraph.
This model is useful because it more accurately represents the behavior of many real-world networks: changes to the underlying topology are relatively rare but transient failures very common.
More importantly, this model offers the algorithm designer the freedom to explore exotic graph representations.  Because {\em preprocessing time} is not the most critical measure of efficiency,
it may be desirable to build a specialized graph representation that facilitates more efficient updates and queries.

\paragraph{Dynamic Subgraph Connectivity.}
The dynamic subgraph model was introduced by Frigioni and Italiano~\cite{FrigioniI00} who showed that when the ideal graph is planar,
vertex failures/recoveries and connectivity queries could be handled in $O(\log^3 n)$ amortized time, after $\tilde{O}(n)$ preprocessing.
Their algorithm even allowed the ideal graph to evolve via edge updates, also in $O(\log^3 n)$ amortized time, so long as it remained planar.
Dynamic subgraph connectivity structures were later developed for general graphs~\cite{BaswanaCCK16,Chan06b,ChanPR11,Duan10,DuanZ17a}.
Chan, \Patrascu, and Roditty~\cite{ChanPR11} gave an $O(m^{4/3})$-space structure that 
handles vertex failures/recoveries in $\tilde{O}(m^{2/3})$ amortized time and connectivity queries in $O(m^{1/3})$ time.
Duan~\cite{Duan10} developed a different $O(m)$-space structure with the same amortized update and query time as~\cite{ChanPR11},
and a new $\tilde{O}(m)$-space structure with {\em worst case} $\tilde{O}(m^{4/5})$-time updates and $O(m^{1/5})$ time queries.
More recently Duan and Zhang~\cite{DuanZ17a} presented a worst-case randomized (\emph{Monte Carlo}) structure with 
update time $\tilde{O}(m^{3/4})$ and query time $\tilde{O}(m^{1/4})$.  Each of~\cite{ChanPR11,Duan10,DuanZ17a} has an $\Omega(m)$ update time-query time product.  Baswana et al.~\cite{BaswanaCCK16} (see also~\cite{ChenDWZ16}) showed how to maintain a DFS tree 
in the dynamic subgraph model with $\tilde{O}(\sqrt{mn})$ update time, which supports $O(1)$-time connectivity queries.

\Patrascu{} and Thorup~\cite{PatrascuT07} considered a situation where a {\em batch} of $d$ edges fail simultaneously.
They showed that an $O(m)$-space structure could be constructed that handles updates in $O(d\log^2 n\log\log n)$ time
and subsequently answers connectivity queries in $O(\log\log n)$ time.
Moreover, they observed that the query time could not be unilaterally improved, by a reduction to the predecessor problem~\cite{PatrascuT06,PatrascuT07b}.
One downside of the \Patrascu-Thorup structure is that it requires {\em exponential} time to compute: 
it involves solving {\em sparsest cut} $\tilde{O}(n)$ times on various subgraphs.
Using a polynomial time $O(\sqrt{\log n})$-approximate sparsest cut algorithm~\cite{AroraRV09,AroraHK10,Sherman09} 
instead increases the update time to $O(d \log^{5/2} n\log\log n)$.
\Patrascu{} and Thorup~\cite{PatrascuT07} were motivated by the absence of a fully dynamic 
connectivity data structure with $\poly(\log n)$ worst case update
time.\footnote{There are dynamic connectivity structures with {\em amortized} $\poly(\log n)$ update 
time~\cite{HuangHKP17,HolmLT01,Wulff-Nilsen13}.  However, the fastest deterministic worst-case update time
is $O(\sqrt{\f{n(\log\log n)^2}{\log n}})$~\cite{KejlbergKPT16}, a small improvement over the long-standing 
$O(\sqrt{n})$ bound of~\cite{Frederickson85,EppsteinGIN97}.
See~\cite{NanongkaiSW17} for \emph{Las Vegas} randomized dynamic connectivity structures with $n^{o(1)}$ worst case bounds.}
Kapron, King, and Mountjoy~\cite{KapronKM13} discovered a randomized dynamic connectivity 
structure with $O(c\cdot \poly(\log n))$ update time that errs with probability $n^{-c}$.
Gibb, Kapron, King, and Thorn~\cite{GibbKKT15} observed that this data structure can function correctly, w.h.p.,
without actually storing the graph.  This leads to a $d$-edge failure connectivity oracle
with update and query time similar to~\cite{PatrascuT07}, but using just $\tilde{O}(n)$ space.

The analogous $d$-{\em vertex} failure connectivity problem is inherently more complex.  
Whereas removing $d$ edges can only increase the number of connected components
by $d$, removing $d$ vertices can have an impact on the connectivity that is completely disproportionate to $d$.  
When $d=1$ we can use the block tree representation of biconnected
components to answer connectivity queries in constant time; see~\cite{BorradailePW12} for data structural details.  
When $d=2$ we can use the SPQR tree~\cite{DiBattistaT96,BorradailePW12} of each biconnected component
to answer queries in $O(1)$ time.  A data structure of Kanevsky et al.~\cite{KanevskyTBC91} 
can answer queries in $O(1)$ time when $d=3$.  
Similar ad hoc solutions can also be designed for $d$-edge failure connectivity 
oracles, for constant $d\le 4$~\cite{DinitzW98,GalilI93,PoutreLO93,WestbrookT92}.
However, scaling these solutions up, even to an arbitrarily large constant $d$, 
becomes prohibitively complex, even in the simpler case of edge failures.
In a $\lambda$-edge connected graph, encoding all $\lambda$-edge cuts is simple with the {\em cactus}~\cite{DinicKL76} representation,
but the simplicity is lost when encoding both $\lambda$- and $(\lambda+1)$-edge cuts.  See~\cite{DinitzN95,DinitzN99a,DinitzN99b}.\\

In previous work~\cite{DuanP10} we designed a $d$-edge failure oracle that reduces the problem to 2D orthogonal range reporting.
Using the range reporting structure of Chan, Larsen, and \Patrascu~\cite{ChanLP11} gives
a $d$-edge failure structure with $O(d^2\log\log n)$ update time, 
$O(\min\{\frac{\log d}{\log\log n}, \frac{\log\log n}{\log\log\log n}\})$ query time,
and $O(m\log\log n)$ space, or a somewhat slower update time with $O(m)$ space.  
By itself, this structure compares favorably with the $d$-edge failure oracles of~\cite{PatrascuT07,KapronKM13} when $d=O(\log n)$.
However, it has additional properties that make it attractive for use in $d$-{\em vertex} failure oracles.
Specifically, if $D$ is the set of failed vertices, the update time is actually 
$O\paren{(\sum_{v\in D} \deg_T(v))^2 \log\log n}$, where $T$ is any spanning tree of the graph.
In other words, the update time is quadratic {\em in the sum of the $T$-degrees}, independent of their degrees in $G$.

If $G$ were guaranteed to have an $O(1)$-degree spanning tree we would immediately 
have a satisfactory $d$-vertex failure connectivity oracle with update time $\tilde{O}(d^2)$ 
and query time $\tilde{O}(1)$.
Of course, there is no such guarantee.  
Every {\em bridge} edge appears in every spanning tree $T$, so a vertex incident 
to many bridges must have high $T$-degree.  
Since bridges are easy to deal with this is not a very convincing counterexample.  
One might hope that if $G$ had sufficient connectivity, a low-degree spanning tree could be found.
This is the approach taken by Borradaile, Pettie, and Wulff-Nilsen's~\cite{BorradailePW12} $d$-failure connectivity oracles for {\em planar} graphs.  
Barnette's theorem~\cite{Barnette66} states that every triconnected planar graph has a degree-3 spanning tree,
which can be found in linear time~\cite{CzumajS97,Strothmann97}.  
However, the analogues of Barnette's theorem for general graphs are too weak to be of any use.
Czumaj and Strothmann~\cite{CzumajS97,Strothmann97} proved that a $k$-connected graph with maximum degree 
$\Delta(G) \le k(\Delta_T-2)+2$ has a degree-$\Delta_T$ spanning tree, which can be found in polynomial time.
If, however, the maximum degree is at least $\Delta(G) \ge k(\Delta_T-1)$ it is NP-hard to decide if there is a degree-$\Delta_T$ spanning tree.
Thus, even if we could force $G$ to be $k$-connected for some large constant $k$, it would not help to find a low-degree spanning tree.

In~\cite{DuanP10} we developed a $\dmax$-vertex failure connectivity oracle,
that offers a tradeoff between update time and size.
For any integer parameter $c\ge 1$, the space of the data structure is $O(\dmax^{1-2/c}mn^{1/c - 1/(c\log(2\dmax))}\log^2 n)$
and the time to process $d\le\dmax$ vertex failures is $O(d^{2c+4}\log^2 n\log\log n)$.  Thereafter connectivity
queries can be answered in $O(d)$ time.  The main drawbacks of~\cite{DuanP10} 
are its conceptual complexity and very poor tradeoff between space and update time.  
Henzinger and Neumann~\cite{HenzingerN16} recently showed how any $d$-vertex failure connectivity oracle
could be transformed to support fully dynamic updates in the dynamic subgraph model, where vertices fail and recover individually.

\begin{sidewaystable}
\centering
\scalebox{0.92}{
\begin{tabular}{|l|l|l|l|l|}
\multicolumn{5}{l}{$d$-{\large\sc Edge Failure Structures and Lower Bounds}\istrut[3]{0}}\\
\multicolumn{1}{l}{}	&	\multicolumn{1}{l}{Update}				&	\multicolumn{1}{l}{Query}	&	\multicolumn{1}{l}{Space}						&	\multicolumn{1}{l}{Preprocessing}\\\hline
							& $O(d\log^2 n\log\log n)$		&	\rb{-3}{$O(\log\log n)$}	&	\rb{-3}{$O(m)$}								&	exponential\istrut[1]{4}\\\cline{2-2}\cline{5-5}
\Patrascu{} \& Thorup (2007)		& $O(d\log^{5/2} n\log\log n)$	&						&											&	polynomial\istrut[1]{4}\\\cline{2-5}
							& $O(d\poly(\log n))$			&	$\Omega(\log\log n)$	&	any										&	any\istrut[1]{4}\\\hline
\rb{-3.5}{Duan \& Pettie (2010)}		& $O(d^2 \log\log n)$		&	$\min\left\{O(\f{\log d}{\log\log n}),\right.$	&	$O(m\log\log n)$				&	\rb{-3.5}{linear in space}\\\cline{2-2}\cline{4-4}
							& $O(d^2 \log^\epsilon n)$	&	$\;\;\;\;\,\left.O(\f{\log\log n}{\log\log\log n})\right\}$ &	$O(m)$					&\\\hline
Kapron, King \& Mountjoy	(2013)	& \rb{-2.5}{$O(d\log d \log^3 n)$}	&	$O(\log\log n)$, 	&	\rb{-2.5}{$O(n \log^2 n)$}						&	\rb{-2.5}{near linear (Rand.)}\istrut[1]{4}\\
Gibb, Kapron, King, Thorn (2015)	&						&	correct w.h.p.			&											&\\\hline
\rb{-3}{\bf New}					& $O(d\log d\log\log n)$\hfill in expect.	&	$O(\log\log n)$			&	\rb{-2.5}{$O(n\log^2 n)$}					&	\rb{-2.5}{near linear (Rand.)}\istrut[1]{4}\\
							& $O(d\log n\log\log n)$\hfill w.h.p.	&	correct w.h.p.			&										&\\\hline\hline

\multicolumn{5}{l}{}\\
\multicolumn{5}{l}{}\\
\multicolumn{5}{l}{$d$-{\large\sc Vertex Failure Structures and Lower Bounds}\istrut[3]{0}}\\
\multicolumn{1}{l}{}	&	\multicolumn{1}{l}{Update}				&	\multicolumn{1}{l}{Query}	&	\multicolumn{1}{l}{Space}						&	\multicolumn{1}{l}{Preprocessing}\\\hline
$d =1$ : Block tree	& $O(1)$			&	$O(1)$				&	$O(n)$									&	linear\\\hline
$d =2$ : SPQR tree	& $O(1)$	&	$O(1)$		&	$O(n)$										&	linear\\\hline
$d =3$ : Kanevsky et al. (1991)		& $O(1)$		&	$O(1)$				&	$O(n)$									&	near linear\\\hline\hline
Duan \& Pettie (2010)			& $\tilde{O}(d^{2c+4})$			&	$O(d)$				&	$\tilde{O}(\dmax^{1-\f{2}{c}}\NIm n^{\f{1}{c} - \f{1}{c\log(2\dmax)}})$	& linear in space\\\hline
Henzinger et al.~(2014)			& $O(\poly(d,\log n))$ \emph{or}	&	\rb{-2.5}{$\Omega(d^{1-o(1)})$}	&	\rb{-2.5}{any}							&	\rb{-2.5}{$O(\poly(n))$}\istrut[0]{5}\\
(assuming OMv Conjecture)		& $O((dn)^{1-\epsilon})$			&						&											&	\istrut[2]{0}\\\hline
Kopelowitz, Pettie \& Porat (2016)	& $O(\poly(d,\log n))$ \emph{or}	&	\rb{-2.5}{$\Omega(d^{1/2-o(1)})$}	&	\rb{-2.5}{any}						&	\rb{-2.5}{$O(mn^{1-\epsilon}\poly(d))$}\istrut[0]{5}\\
(assuming 3SUM Conjecture)		& $O((dn)^{1-\epsilon})$			&						&											&	\istrut[2]{0}\\\hline	
							& $O(d^3\log^3 n)$				&	$O(d)$				&	$O(\dmax \NIm\log n)$						&	$O(\NIm n\log n)$ (Det.)\istrut[2]{4}\\\cline{2-5}
{\bf New}						& $O(d^2\log d \log^2 n\log\log n)$	in exp. & 	\rb{-3}{$O(d)$, correct w.h.p.} &	\rb{-3}{$O(\NIm \log^6 n)$	}				&	\rb{-3}{$O(\NIm n\log n)$ (Rand.)}\istrut[0]{4}\\
							& $O(d^2\log^3 n\log\log n)$\hfill w.h.p. & 						&											&\istrut[0]{4}\\\hline\hline
\end{tabular}
}
\caption{\label{tab:prior} The lower bounds of \Patrascu{} and Thorup are unconditional whereas the lower bounds of Kopelowitz et al.~and Henzinger et al.~rely on unproven conjectures.
Whenever $\dmax$ is available at construction time we can replace $m$ (the number of edges) with $\NIm \bydef \min\{m,(\dmax+1)n\}$~\cite{NagamochiI92}.
See \cite{BorradailePW12,DiBattistaT96,KanevskyTBC91} for data structure details when $d \in\{1,2,3\}$.
}
\end{sidewaystable}

\paragraph{New Results.}
In this paper we present dramatically better
$d$-vertex failure connectivity oracles 
that match or improve on~\cite{DuanP10} in every measure of efficiency except construction time.
Using space $O(\dmax m\log n)$, 
a batch $D$ of $d\le \dmax$ vertex failures is processed in $O(d^3\log^3 n)$ time such that connectivity queries in $G - D$
can be answered in $O(d)$ time.\footnote{The notation $G-D$ is short for the subgraph of $G$ induced by $V(G) - D$.}
The construction time is $O(mn\log n)$.  Note that there is now no tradeoff between space and update time. 
Clearly any pair of $(\dmax+1)$-connected vertices cannot be disconnected by $d$ failures.  
By preprocessing the graph with the linear time Nagamochi-Ibaraki algorithm~\cite{NagamochiI92},
we can replace $E(G)$ by an equivalent subgraph containing $\NIm = \min\{m, (\dmax+1)n\}$ edges.
Thus, the factors of $m$ in the space and construction time can be replaced with $\NIm$.

In the extended abstract of this work~\cite[\S 7]{DuanP17}, we claimed a randomized Monte Carlo structure that occupies $\tilde{O}(m)$
space and has update and query times $\tilde{O}(d^2)$ and $O(d)$.  This was an erroneous claim; we do not see any way
to store this structure in less than $\Omega(\dmax m)$ space.  In this paper we present a different randomized Monte Carlo structure
that uses space $O(m\log^6 n)$, and has update and query times $O(d^2\log d\log^2 n\log\log n)$ and $O(d)$.
This solution is more sophisticated than the one described in~\cite[\S 7]{DuanP17}, and
generalizes the Kapron et al.~\cite{KapronKM13} sketch technique in ways that may be of independent interest.  
We use vertex-sampling rather than edge-sampling,
and show that sketches for certain subgraphs of a 
complete bipartite graph $A\times B$ can be generated ``on the fly'' using space
$\tilde{O}(|A| + |B|)$ rather than a naive bound of $O(|A\times B|)$. 

Some of the techniques used in our Monte Carlo $d$-vertex failure oracle can be repurposed to improve the state-of-the-art in $d$-\emph{edge} failure
oracles~\cite{PatrascuT07,KapronKM13,GibbKKT15}.  We show that with $O(n\log^2 n)$ space, $d$ edge failures can be processed
in $O(d\log d\log\log n)$ time in expectation and and thereafter support connectivity queries in $O(\log\log n)$ time, which are correct w.h.p.

Our data structures are based on a new graph decomposition theorem, which is obtained from a recursive
version of the \Furer-Raghavachari~\cite{FurerR94} algorithm for approximating the minimum degree spanning tree.
The theorem states that for any undirected graph $G=(V,E)$, terminal set $U\subseteq V$, and integer $s$,
there exists a set of $n/(s-2)$ vertices $B$ that can be removed, such that $U-B$ is spanned by 
a degree-$s$ Steiner forest in the graph $G-B$.  We believe this decomposition theorem is of independent interest.

Refer to Table~\ref{tab:prior} for a summery of $d$-edge failure and $d$-vertex failure connectivity oracles.

\paragraph{Lower Bounds.}
One question raised by~\cite{DuanP10} is whether it is possible for a $d$-{\em vertex} failure oracle
to match the $\tilde{O}(1)$ query time of existing $d$-{\em edge} failure oracles~\cite{PatrascuT07,DuanP10,KapronKM13,GibbKKT15}.  
There is now strong circumstantial evidence
that no such data structure exists with reasonable update time.  In particular, if the {\em Integer 3SUM Conjecture}\footnote{The 3SUM problem is,
given a set $A$ of $n$ numbers, to determine if there exist $a,b,c\in A$ for which $a+b+c=0$.  There are now known to be $O(n^2 / \poly(\log n))$ algorithms
for both integer inputs~\cite{BaranDP08} and real inputs~\cite{GronlundP14,Freund15,GoldS17}.  
The Integer 3SUM Conjecture asserts that the problem requires $\Omega(n^{2-o(1)})$ time,
even if $A\subset \{-n^3,\ldots,n^3\}$.} 
holds then any $d$-vertex failure connectivity oracle with subquadratic preprocessing and reasonable update time must have
$\Omega(d^{1/2 - o(1)})$ query time~\cite{KopelowitzPP16}.  
Henzinger et al.~\cite{HenzingerKNS15} showed that the {\em OMv conjecture}\footnote{The OMv conjecture is that given
a matrix $M \in \{0,1\}^{n\times n}$ to be preprocessed and $n$ vectors $v_1,\ldots, v_n \in \{0,1\}^n$ presented online, the total cost of preprocessing and computing the products
$\{Mv_i\}_{1\le i\le n}$ is $\Omega(n^{3-o(1)})$.  
Note that fast matrix multiplication is not obviously helpful in this context since $Mv_i$ must be reported before receiving $v_{i+1}$.}
on the hardness of online matrix-vector
multiplication implies an $\Omega(d^{1-o(1)})$ query lower bound, even if any polynomial preprocessing is allowed.
Thus, beating $O(d)$ query time would require refuting a plausible conjecture.  
Of course, the plausibility of the 3SUM and OMv conjectures continue to be actively scrutinized.  
Stronger forms of the 3SUM and OMv conjectures have already been refuted; see~\cite{BaranDP08,GronlundP14,LarsenW17}.
Whereas $d$-\emph{edge} failure connectivity oracles can be stored in sublinear $\tilde{O}(n)$ space~\cite{GibbKKT15}, 
this is not possible for \emph{vertex} failures.  It is straightforward to see that any subgraph of the complete bipartite 
graph $K_{n,d_\star+1}$ can be reconstructed with a $d_\star$-failure oracle, implying such an oracle 
occupies $\Omega(\min\{m,d_\star n\})$ bits of space.

\paragraph{Related Work.}

Much of the previous work in the $d$-failure model has focussed on computing approximate shortest paths avoiding edge and vertex failures.
Demetrescu et al.~\cite{DTCR08} gave an exact shortest path oracle for weighted directed graphs subject to $d=1$ failure.
It occupies $O(n^2\log n)$ space and answers queries in constant time.
The construction time for this oracle was later improved by Bernstein and Karger~\cite{BernsteinK09}.
An analogous result for $d=2$ failures was presented by 
Duan and Pettie~\cite{DP09a}, which uses space $O(n^2\log^3 n)$ and query time $O(\log n)$.
{\em Approximate} distance oracles for $d$ edge failures were given for 
general graphs~\cite{ChechikLPR12}, with stretch that grows linearly in $d$.

These problems have also been studied on special graph classes.  
Borradaile et al.~\cite{BorradailePW12} described connectivity oracles for planar graphs subject to 
$d$-edge failures or $d$-vertex failures.
See Baswana et al.~\cite{BaswanaLM12} for exact distance oracles for planar graphs avoiding $d=1$ failure,
and Abraham et al.~\cite{AbrahamCG12,AbrahamCGP16} for approximate distance oracles for planar graphs
and graphs of bounded doubling dimension.

Parter and Peleg~\cite{ParterP13} considered the problem of computing a subgraph that preserves shortest
paths from $s$ sources after a single edge or vertex failure.  They proved that $\Theta(s^{1/2}n^{3/2})$ edges
are necessary and sufficient, for every $s$.
See also~\cite{BiloGG0P15,BraunschvigCP12,BraunschvigCPS15,ChechikLPR10,DinitzK11,Parter16,ParterP14}
for spanners (subgraphs) that preserve approximate distances subject to edge or vertex failures.

Very recently researchers have considered reachability problems on directed graphs subject to vertex failures.
Choudhary~\cite{Choudhary16} gave an optimal $O(n)$-space, $O(1)$-query time reachability oracle for $d=2$ failures.
Baswana, Choudhary, and Roditty~\cite{BaswanaCR16} considered the problem of finding a sparse subgraph that preserves reachability
from a single source, subject to $d$ vertex failures.  They proved that $\Theta(2^d n)$ edges are necessary and sufficient.

\subsection{Organization}\label{sect:organization}

In Section~\ref{sect:ET-tree} we review the Euler Tour structure of~\cite{DuanP10} for handling $d$ edge failures.
We begin Section~\ref{sect:decomp} with a sketch of the \Furer-Raghavachari algorithm \FRTree,
then describe our decomposition algorithm \Decomp.  
In Section~\ref{sect:LDH} we observe that by applying \Decomp{} iteratively, we naturally 
obtain a representation of the graph as a {\em low degree hierarchy}.
Section~\ref{sect:LDH} describes how to build a $d$-failure connectivity oracle, 
by supplementing the low degree hierarchy with suitable data structures.
The algorithms for deleting failed vertices and answering connectivity queries
are presented in Section~\ref{sect:ops}.  The basic algorithm for deleting failed
vertices takes $\tilde{O}(d^4)$ time using standard 2D orthogonal range reporting data structures.
In Section~\ref{sect:updatetime} we give three distinct ways to reduce this to $\tilde{O}(d^3)$ 
using other orthogonal range searching structures.
In Section~\ref{sect:MC} we present a randomized Monte Carlo version of our data structure with
update time $\tilde{O}(d^2)$ and space $\tilde{O}(m)$, and in Section~\ref{sect:MCedge} we 
give a more efficient $d$-\emph{edge} failure connectivity oracle.
Several open problems are discussed in Section~\ref{sect:conclusion}.

\section{The Euler Tour Structure}\label{sect:ET-tree}

In this section we describe the {\em ET-structure} for handling connectivity queries avoiding multiple vertex and edge failures.
When handling only $d$ edge failures, the performance of the ET-structure is incomparable to that of \Patrascu{} and Thorup~\cite{PatrascuT07} in nearly every respect.\footnote{The ET-structure is significantly faster in terms of construction time (near-linear vs.~a large polynomial or exponential time) though it may use slightly more space: $O(m\log\log n)$ vs. $O(m)$.  It handles $d$ edge deletions exponentially faster for 
bounded $d$ ($O(\log\log n)$ vs. $\Omega(\log^2 n\log\log n)$)
but is slower as a function of $d$: $O(d^2\log\log n)$ vs. $O(d\log^2 n\log\log n)$ time.
The query time is essentially the same for both structures, namely $O(\log\log n)$.  Whereas the ET-structure naturally maintains a certificate of connectivity (a spanning tree), the \Patrascu-Thorup structure requires modification and an additional logarithmic factor in the update time to maintain a spanning tree.}
The strength of the ET-structure is that if the graph contains a low-degree tree $T$, 
the time to delete a {\em vertex} is a function of its degree in $T$; incident edges not in $T$ are deleted implicitly.
We prove Theorem~\ref{thm:reconnect} in the remainder of this section.

\begin{theorem}\label{thm:reconnect}
Let $G=(V,E)$ be a graph, with $m=|E|$ and $n=|V|$, and let $\mathcal{F} = \{T_1,\ldots,T_{|\mathcal{F}|}\}$ be a set of vertex disjoint trees in $G$. ($\mathcal{F}$ does not necessarily span connected components of $G$.)
There is a data structure $\ET(G,\mathcal{F})$ 
that supports the following operations.  Suppose $D$ is a set of failed edges, of which $d$ are tree edges in $\mathcal{F}$ and $d'$ are non-tree edges.
Deleting $D$ splits some subset of the trees in $\mathcal{F}$ into at most $2d$ trees $\mathcal{F}' = \{T_1',\ldots,T_{2d}'\}$.
In $O(d^2q + d')$ time we can report which pairs of trees in $\mathcal{F}'$ are 
connected by an edge in $E - D$.  In $O(\min\left\{\f{\log\log n}{\log\log\log n},\, \f{\log d}{\log\log n}\right\})$ time we can determine 
which tree in $\mathcal{F}'$ contains a given vertex.  Using space $O(m\log\log n)$ the value of $q$ is $O(\log\log n)$; using space $O(m)$ the value of $q$ is $O(\log^\epsilon n)$.
\end{theorem}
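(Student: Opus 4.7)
The plan is to base the structure on Euler tours of the trees in $\mathcal{F}$ together with a 2D orthogonal range reporting structure over the non-tree edges. Concatenate the Euler tours of $T_1,\ldots,T_{|\mathcal{F}|}$ into a single sequence, so each $T_i$ becomes a contiguous block and every vertex has a canonical position. The standard fact about Euler tours is that deleting a single tree edge splits a tree's block into two contiguous subintervals (after a standard cyclic renumbering of the block). Consequently, after the $d$ tree-edge failures in $D$, the trees of $\mathcal{F}'$ correspond to at most $2d$ intervals $I_1,\ldots,I_{2d}$ on the sequence, whose boundaries are the $O(d)$ Euler-tour positions of the failed tree edges and can be computed in $O(d)$ time.

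Next, encode each non-tree edge $(u,v)\in E$ as a 2D point $(\pi(u),\pi(v))$, where $\pi(\cdot)$ gives Euler-tour position (with both orientations stored). Then a non-tree edge connects intervals $I_a$ and $I_b$ iff the axis-aligned rectangle $I_a\times I_b$ contains such a point, so reporting all pairs $(T_a',T_b')\in\mathcal{F}'\times\mathcal{F}'$ joined by a surviving edge reduces to $O(d^2)$ range-reporting queries, one per ordered pair $(I_a,I_b)$. We preprocess the point set with a 2D orthogonal range reporting structure; to realize the two tradeoffs we instantiate it either with a $q=O(\log\log n)$-time, $O(m\log\log n)$-space variant or with a $q=O(\log^{\epsilon} n)$-time, $O(m)$-space variant (e.g.\ following Chan--Larsen--\Patrascu{}).

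To subtract the $d'$ failed non-tree edges, first mark them in $O(d')$ time via a hash table. Then iterate each rectangle query through the range structure's reported points, skipping marked points and stopping at the first unmarked one (or when the query is exhausted). By using a range structure that supports reporting the ``next'' point in a rectangle in $O(q)$ time and a soft deletion that permanently removes an encountered marked point in $O(1)$ time, each of the $d'$ marked points is encountered at most once across all $O(d^2)$ queries. This amortizes to $O(d^2 q + d')$ total, matching the claim. For the vertex-to-tree query, once the $\le 2d$ interval boundaries are stored in a predecessor structure on a universe of size $O(n)$, answering a query is just a predecessor search at position $\pi(v)$; using fusion trees we get $O(\log\log n/\log\log\log n)$ and using $y$-fast tries on $O(d)$ elements we get $O(\log d/\log\log n)$, whichever is smaller.

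The main obstacle is the handling of the $d'$ failed non-tree edges: a naive approach that re-queries and filters risks an $O(d^2 q\cdot d')$ blow-up. The argument above requires a range-reporting data structure supporting both ``report next point in rectangle'' and an $O(1)$-time soft deletion/restore operation, so that marked points are each paid for once globally rather than $\Omega(d^2)$ times. A secondary obstacle is realizing both space-time tradeoffs without altering the overall reduction; this is achieved by plugging in the two different range-reporting structures, which share the interface above but give different $q$ and different space usage.
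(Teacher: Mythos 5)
Your proposal follows the same overall blueprint as the paper's proof: Euler-tour position encoding of vertices, non-tree edges stored as 2D points, rectangle range-reporting queries between Euler-tour intervals to detect reconnecting edges, and a predecessor structure over interval endpoints for the vertex-to-tree query. The space/time tradeoff via the two instantiations of Chan--Larsen--P\v{a}tra\c{s}cu matches the paper exactly.

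However, the handling of the $d'$ failed non-tree edges is where you over-engineer and, in doing so, rely on an operation the cited structure does not offer. You posit a range-reporting structure that supports ``report next point'' \emph{together with} a persistent $O(1)$-time soft deletion so that a marked point is never re-reported in a later rectangle. The Chan--Larsen--P\v{a}tra\c{s}cu structure is static and has no such deletion interface, so as written the amortization you invoke is not actually available. The good news is that you do not need it. The query rectangles $I_a \times I_b$, taken over all pairs of Euler-tour intervals, are pairwise disjoint --- they tile the relevant portion of the grid, because the intervals $I_1,\ldots$ are disjoint sub-intervals of the (disjoint) Euler tours of the affected trees. Consequently each 2D point can appear in at most one query rectangle, so across all $O(d^2)$ queries each failed edge can be enumerated (and skipped) at most once, giving $O(d^2 q + d')$ with a purely static structure. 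This disjointness observation is precisely what the paper uses, and it replaces your soft-deletion gadget with a one-line accounting argument.

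Two smaller points. First, be careful to state that you are using the \emph{first-occurrence} Euler tour (one position per vertex), which is what makes the map $v \mapsto \pi(v)$ well-defined; with the full $2|T|-1$-length tour a vertex has many positions and the point-set encoding breaks. Second, the predecessor citations are scrambled: the $O(\log\log n/\log\log\log n)$ bound is the P\v{a}tra\c{s}cu--Thorup bound on universe $[n]$ (not fusion trees), and the $O(\log d/\log\log n)$ bound on $O(d)$ keys after $O(d^2)$ preprocessing is from P\v{a}tra\c{s}cu--Thorup's later dynamic-integer-set work (not $y$-fast tries, which give $O(\log\log n)$). These mislabellings do not affect correctness of the claimed bounds, only the attributions.
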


Our data structure uses Chan, Larsen, and \Patrascu's~\cite{ChanLP11} 
structure for orthogonal range reporting on the integer grid $[U]\times[U]$.
They showed that given a set of $N$ points, there is a data structure with size $O(N\log\log N)$ such that
given $x,y,w,z \in [U]$, the set of points in $[x,y]\times[w,z]$ can be reported in $O(\log\log U + k)$ time,
where $k$ is the number of reported points.  
If the space is reduced to $O(N)$ the update time becomes $O(\log^\epsilon U + k)$ for any fixed $\epsilon>0$.

\begin{figure}
\centerline{
\begin{tabular}{cc}
\scalebox{.7}{\includegraphics{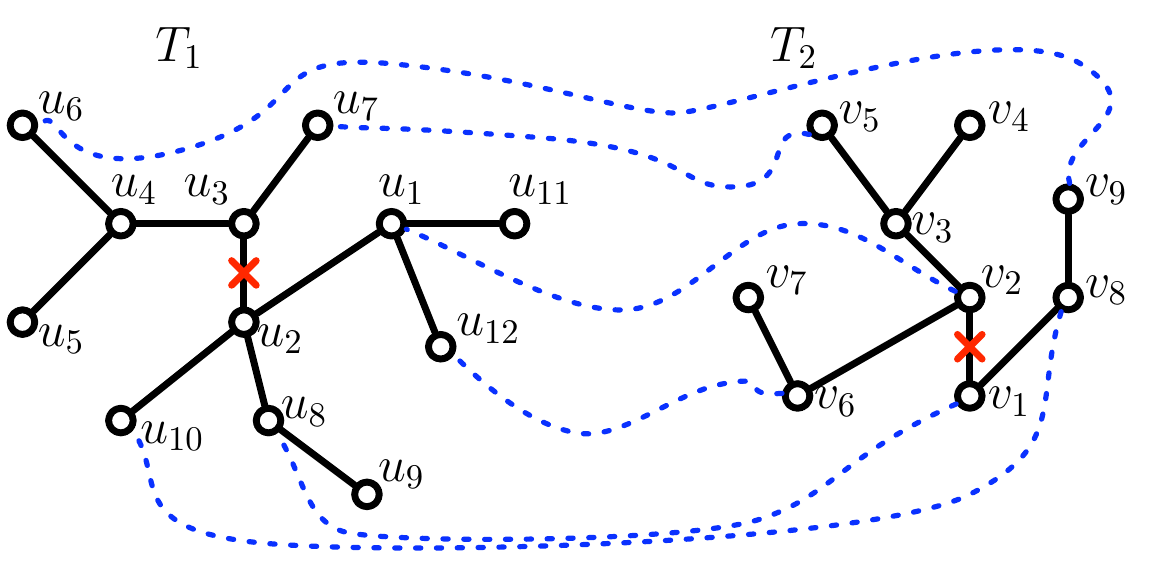}}
& \scalebox{.7}{\includegraphics{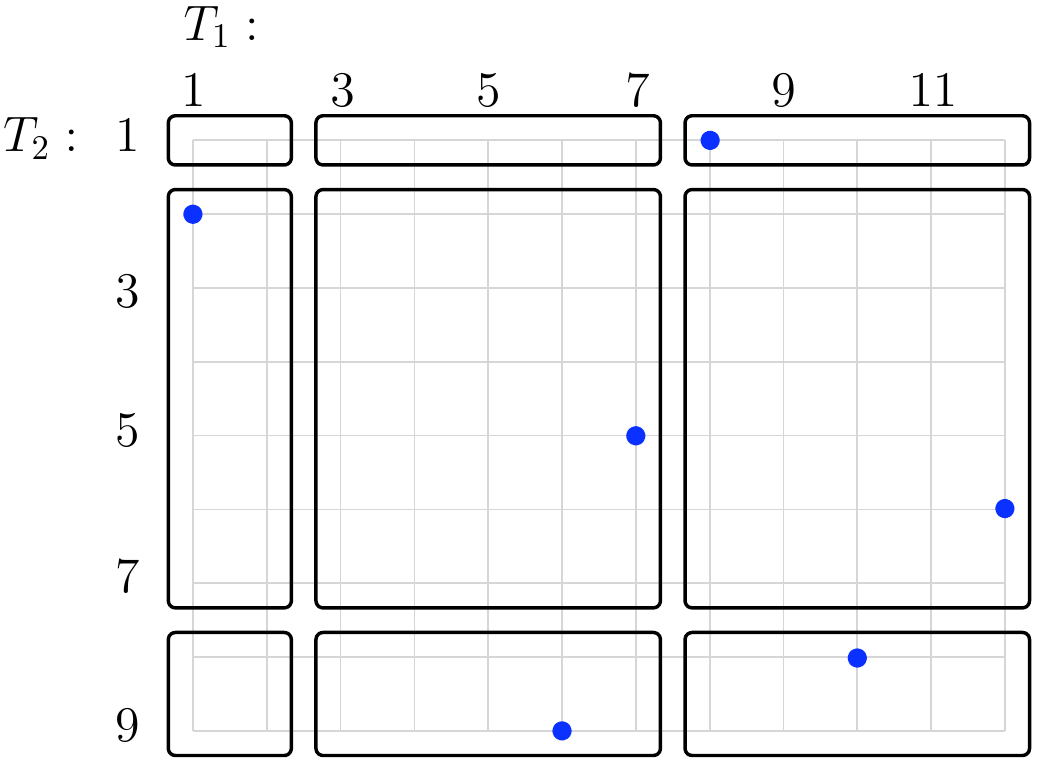}}\\
{\bf (A)} & {\bf (B)}
\end{tabular}
} \caption[The Euler Tour structure]{\small\label{fig:ET}{\bf (A)} Here $T_1$ and $T_2$ are
two trees where $\Euler(T_1)=(u_1,\ldots,u_{12})$ and $\Euler(T_2) =
(v_1,\ldots,v_9)$ list their vertices according to first
appearance in some Euler tours of $T_1$ and $T_2$.  (It does not
matter which Euler tour we pick.)  There are six non-tree edges
connecting $T_1$ and $T_2$, marked by dashed curves. If the edges
$\{u_2,u_3\}$ and $\{v_1,v_2\}$ are removed, $T_1$ and $T_2$ are split
into four subtrees, say $T_1',T_2',T_3',T_4'$, and both $\Euler(T_1)$
and $\Euler(T_2)$ are split into three intervals, namely $X_1 =
(u_1,u_2), X_2 = (u_3,\ldots,u_7), X_3 = (u_8,\ldots,u_{12}), Y_1
= (v_1), Y_2 = (v_2,\ldots,v_7),$ and $Y_3 = (v_8,v_9)$.   Each
tree $T_i'$ is identified with some subset of the intervals:
$T_1',\ldots,T_4'$ are identified with $\{X_1,X_3\}, \{X_2\},
\{Y_1,Y_3\},$ and $\{Y_2\}$. {\bf (B)} The point $(i,j)$ (marked
by a blue dot) is in our point set if $\{v_i,u_j\}$ is a non-tree
edge.  To determine if, for example, $T_1'$ and $T_4'$ are
connected by an edge, we perform two 2D range queries, $X_1\times
Y_2$ and $X_3\times Y_2$, and keep at most one point (i.e., a
non-tree edge) for each query.  In general, removing $d_1$ edges
from $T_1$ and $d_2$ edges from $T_2$ necessitates
$(2d_1+1)(2d_2+1)$ 2D range queries to determine incidences
between all pairs of subtrees.  In this example we require nine 2D
range queries, indicated by boxes in the point set diagram. }
\end{figure}

For a tree $T$, let $\Euler(T)$ be a list of its vertices encountered
during an Euler tour of $T$ (an undirected edge is treated as two
directed edges), where we only keep the {\em first} occurrence of
each vertex.  One may easily verify that removing $f$ edges from
$T$ partitions it into $f+1$ connected subtrees and splits $\Euler(T)$
into at most $2f+1$ intervals, where the vertices of a connected
subtree are the union of some subset of the intervals. To build
$\ET(G=(V,E),\mathcal{F})$ we build the following structure for
each pair of trees $(T_1,T_2)\in \mathcal{F}\times\mathcal{F}$;
note that $T_1$ and $T_2$ may be the same. Let $m'$ be the number
of edges connecting $T_1$ and $T_2$. Let $\Euler(T_1) =
(u_1,\ldots,u_{|T_1|})$, $\Euler(T_2) = (v_1,\ldots,v_{|T_2|})$, and
$U = \max\{|T_1|,|T_2|\}$.  We define the point set
$P\subseteq [U]\times [U]$ to be $P = \{(i,j) \;|\; \{u_i,v_j\} \in
E\}$.  Suppose $D$ is a set of edge failures including $d_1$ edges
in $T_1$, $d_2$ in $T_2$, and $d'$ non-tree edges. Removing $D$
splits $T_1$ and $T_2$ into $d_1+d_2+2$ connected subtrees and
partitions $\Euler(T_1)$ into a set $I_1 = \{[x_i,y_i]\}_i$ of $2d_1+1$
intervals and $\Euler(T_2)$ into a set $I_2 = \{[w_i,z_i]\}_i$ of
$2d_2+1$ intervals.  For each pair $i,j$ we query the 2D range
reporting data structure for points in $P \cap \paren{[x_i,y_i]\times[w_j,z_j]}$.
However, we stop the query the moment it reports some
point corresponding to a non-failed edge, i.e., one in
$E - D$.  Since there are $(2d_1+1)\times(2d_2+1)$ queries
and each failed edge in $D$ can only be reported in {\em one} such
query, the total query time is $O(d_1d_2q + d')$,
where $q$ is either $\log\log n$ or $\log^\epsilon n$, depending on the space usage.
See Figure~\ref{fig:ET} for an illustration.

Assuming that $m' \ge 1$, the space for the data structure restricted to $T_1$ and $T_2$ 
is $O(m'\log\log n)$ or $O(m')$. 
In order to avoid spending any space on pairs $(T_1,T_2)$ with $m'=0$,
we maintain a hash table of tree-pairs with at least one edge between them.  
Since each non-tree edge contributes to the space of at most one tree pair $(T_1,T_2)$,
the overall space for $\ET(G,\mathcal{F})$ is $O(m\log\log n)$ or $O(m)$.
For the last claim of the Theorem, observe that if a vertex $u$ lies in an original tree $T_1 \in \mathcal{F}$, 
we can determine which tree in $\mathcal{F}'$ contains it by performing a predecessor search over the left endpoints of intervals in $I_1$.  
This can be accomplished in the minimum of $O(\f{\log\log n}{\log\log\log n})$ time~\cite{PatrascuT06}
or $O(\f{\log d}{\log\log n})$ time~\cite{PatrascuT14} after $O(d^2)$ preprocessing on a $\Theta(\log n)$-bit word-RAM.

Corollary~\ref{cor:ET} demonstrates how $\ET(G,\cdot)$ can be used to answer connectivity queries avoiding edge and vertex failures.
\begin{corollary}\label{cor:ET}
Let $T$ be any spanning tree of $G=(V,E)$.
The data structure $\ET(G,\{T\})$ occupies space $O(m\log\log n)$ (or $O(m)$) 
and supports the following operations.
Given a set $D\subset E$ of edge failures, $d$ of which are tree edges and $d'$ are non-tree edges, 
$D$ can be processed in $O(d^2\log\log n+d')$ time (or $O(d^2\log^\epsilon n +d')$ time) 
so that connectivity queries in the graph $(V,E - D)$ can be answered in $O(\min\left\{\f{\log\log n}{\log\log\log n},\, \f{\log d}{\log\log n}\right\})$ time.
If $D\subset V$ is a set of vertex failures, let $d = \sum_{v\in D} \deg_T(v)$ be the sum of their $T$-degrees. 
The update time is $O(d^2\log\log n)$ (or $O(d^2\log^\epsilon n)$) 
and the query time is $O(\min\left\{\f{\log\log n}{\log\log\log n},\, \f{\log d}{\log\log n}\right\})$.
\end{corollary}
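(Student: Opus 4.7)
\medskip

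\textbf{Proof plan.} I would instantiate Theorem~\ref{thm:reconnect} with the single-tree family $\mathcal{F}=\{T\}$, for which the stated space bound of $O(m\log\log n)$ (or $O(m)$) transfers verbatim. The edge-failure case is nearly immediate: given $D\subseteq E$ with $d$ tree edges and $d'$ non-tree edges, I would invoke the update procedure of Theorem~\ref{thm:reconnect} to obtain, in $O(d^2q + d')$ time (with $q\in\{\log\log n,\log^\epsilon n\}$), both the partition $\mathcal{F}'=\{T_1',\ldots,T_k'\}$ with $k\le 2d$ and a list of the pairs $(T_i',T_j')$ connected by some non-failed edge. Feeding these pairs into a union--find structure on $k$ nodes yields component labels in an additional $O(d^2)$ time. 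A query on $u,v\in V$ then reduces to locating the tree in $\mathcal{F}'$ containing each endpoint (a predecessor search taking $O(\min\{\log\log n/\log\log\log n,\,\log d/\log\log n\})$ time by Theorem~\ref{thm:reconnect}) and comparing their component labels.

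The vertex-failure case will reduce to the edge-failure case, but with a crucial twist that avoids paying for the potentially $\Theta(m)$ non-tree edges incident to $D$. The key observation I would exploit is that every vertex $v$ appears exactly once in $\Euler(T)$, so each edge incident to $v$ is represented by a point of $P$ whose $\Euler(T)$-coordinate equals $v$'s unique position $p_v$. When all $\deg_T(v)$ tree edges at $v$ fail, $v$ becomes isolated in the surviving forest and the interval of $\Euler(T)$ containing $p_v$ shrinks to the singleton $\{p_v\}$. I would explicitly discard these $|D|$ singleton intervals from $I_1$ before running the 2D range queries, so that no query rectangle ever touches a coordinate $p_v$ with $v\in D$. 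Consequently, no point of $P$ representing an edge incident to $D$ can ever be returned, and the ``$d'$'' term of Theorem~\ref{thm:reconnect} effectively vanishes.

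Setting $d=\sum_{v\in D}\deg_T(v)$, which upper bounds the number of distinct failed tree edges (edges between two failed endpoints are double-counted, which only helps), the number of non-failed intervals in $I_1$ is at most $2d+1-|D|=O(d)$, and the update runs in $O(d^2 q)$ time. I would build and component-label the auxiliary graph $H$ on the non-failed tree-pieces exactly as in the edge case, and answer queries by the same predecessor-plus-label procedure. The most delicate step is verifying that the Chan--Larsen--\Patrascu{} query still reports only true non-failed edges after we excise failed positions from $I_1$---immediate because failed positions lie in no query interval---and that the predecessor search over the altered endpoint set still meets the quoted query-time bound. Both points follow from the construction already used in the proof of Theorem~\ref{thm:reconnect}, applied to the trimmed interval set.
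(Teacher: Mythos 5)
Your proposal is correct and follows essentially the same approach as the paper's proof. The paper's proof of the vertex-failure case is quite terse---it simply states "In the case of vertex failures, no range queries are performed for the intervals containing singleton vertices in $D$"---while you spell out the key reasoning: each $v\in D$, once stripped of its $\deg_T(v)$ incident tree edges, becomes isolated, hence occupies a singleton interval of $\Euler(T)$; excluding those $|D|$ singletons from the range queries guarantees that no query rectangle contains the coordinate $p_v$ of any failed vertex, so no point encoding an edge incident to $D$ can ever be reported, which is exactly why the potentially $\Theta(m)$-sized non-tree ``failure'' term disappears from the update cost. This is the same observation, stated more explicitly.
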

\begin{proof}
Using $\ET(G,\{T\})$ we split $T$ into $d+1$ subtrees and $\Euler(T)$ into a set $I$ of $2d+1$ connected intervals, 
in which each connected subtree is made up of some subset of the intervals. Using $O(d^2)$ 2D range queries, 
in $O(d^2\log\log n + d')$ time we find at most one edge connecting each pair in $I\times I$.  
(In the case of vertex failures, no range queries are performed for the intervals containing singleton vertices in $D$.)
In $O(d^2)$ time we find the connected components
of $E - D$ or $V - D$ and store 
with each interval a representative vertex from its component.
To answer a query $(u,v)$ we only need to determine which subtree 
$u$ and $v$ are in, which involves two predecessor queries over the left endpoints of intervals in $I$.
This takes $O(\min\left\{\f{\log\log n}{\log\log\log n},\, \f{\log d}{\log\log n}\right\})$ time.
\end{proof}

Corollary~\ref{cor:ET} motivates us to look for conditions under which $G$ contains a low degree spanning forest, say with degree at most $s$.
In the next section we show that although $G$ may not have a degree-$s$ spanning forest, 
there are $O(n/s)$ critical nodes that, if they were removed, would let the remaining graph be spanned
by a degree-$s$ spanning forest.

\section{A New Graph Decomposition Theorem}\label{sect:decomp}

Let $G=(V,E)$ be an undirected graph and $U\subseteq V$ be a set of {\em terminals}.  
We call a forest $T\subseteq E$ a {\em Steiner forest} for $U$ if $u,v\in U$ are connected in $T$ 
if and only if they are connected in $G$.
\Furer{} and Raghavachari~\cite{FurerR94} proved that the minimum degree spanning forest (if $U=V$) 
and minimum degree Steiner forest could be approximated to within 
1 of optimal in polynomial time.\footnote{\Furer{} and Raghavachari~\cite{FurerR94}
claimed a running time of $O(|U|m\alpha(m,n)\log|U|)$.  
The $\alpha(m,n)$ factor can be removed using the incremental-tree set-union structure
of Gabow and Tarjan~\cite{GT85}.}

\begin{theorem} (\Furer{} and Raghavachari~\cite{FurerR94})
Suppose $G$ contains a Steiner forest for $U$ with maximum degree $\Delta^*$.  
A Steiner forest $T$ for $U$ with maximum degree $\Delta^*+1$ can be computed in 
$O(|U|m\log |U|)$ time.
\end{theorem}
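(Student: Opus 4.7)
The plan is to follow the local-improvement approach pioneered by Fürer and Raghavachari. Start with any Steiner forest $T \subseteq E$ for $U$, for instance by taking a spanning forest of $G$ and repeatedly pruning leaves that are not terminals. Iteratively try to reduce the maximum $T$-degree by single edge swaps. Let $\Delta$ denote the current maximum degree of $T$ and call a vertex \emph{heavy} if $\deg_T(v) \ge \Delta - 1$. For each vertex $u$ with $\deg_T(u) = \Delta$ and each tree edge $(u,w)$, removing $(u,w)$ splits $T$ into components $C_1, C_2$; call a non-tree edge $(x,y) \in E \setminus T$ with $x \in C_1$, $y \in C_2$ and $\deg_T(x), \deg_T(y) \le \Delta - 2$ an \emph{improving edge}. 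If any improving edge exists, perform the swap by adding $(x,y)$ and deleting $(u,w)$: the number of degree-$\Delta$ vertices strictly decreases and no new ones are created, because $x$ and $y$ rise to degree at most $\Delta - 1$. Once the count of degree-$\Delta$ vertices falls to zero, $\Delta$ itself drops. Throughout, one also prunes non-terminal leaves so $T$ remains a Steiner forest.

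The heart of the argument is to show that when no improving edge exists, the current $\Delta$ already satisfies $\Delta \le \Delta^* + 1$. Let $H$ be the set of heavy vertices and let $\mathcal{F}^\circ$ be the collection of connected components of $T - H$ that contain at least one terminal. Because no swap applies, every $G$-edge whose endpoints lie in two distinct members $F, F' \in \mathcal{F}^\circ$ must have at least one endpoint in $H$; otherwise, paired with any tree edge on the $T$-path between its endpoints (which must cross $H$), it would constitute an improving swap. Hence $H$ is a $G$-separator of the terminals of distinct members of $\mathcal{F}^\circ$. In any competing Steiner forest $T^*$ of maximum degree $\Delta^*$, the $T^*$-path between each pair of terminals lying in distinct $F_i$'s must therefore traverse $H$. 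A double-counting argument comparing the roughly $|H|(\Delta - 1)$ tree edges of $T$ incident to $H$ against the $T^*$-paths through $H$ forces some vertex of $H$ to have $T^*$-degree at least $\Delta - 1$, establishing $\Delta^* \ge \Delta - 1$.

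For the running time, each iteration can be implemented in $O(m)$ amortized time by choosing a degree-$\Delta$ vertex, enumerating its incident tree edges in turn, and scanning the non-tree edges once while maintaining the split $C_1, C_2$; the Gabow--Tarjan incremental-tree set-union structure absorbs the inverse-Ackermann factor that would otherwise appear. A potential of the form $\Phi(T) = \sum_v c^{\deg_T(v)}$ for a suitable constant $c > 1$ drops by an appropriate factor with each swap, bounding the number of swaps by $O(|U| \log |U|)$; multiplying gives the target $O(|U| m \log |U|)$ bound. The principal obstacle I anticipate is the separator step, which is clean in the spanning-tree case ($U = V$) but delicate for Steiner forests: one must verify that restricting attention to terminal-containing components of $T - H$ still yields a genuine $G$-cut witnessing $\Delta^* \ge \Delta - 1$, and that the periodic pruning of non-terminal leaves does not interact badly with the potential argument or invalidate swap-applicability checks.
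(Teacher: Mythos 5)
The theorem you are proving is a citation to F\"{u}rer and Raghavachari; the paper does not prove it, but rather sketches the algorithm because certain structural byproducts of the search (not the approximation guarantee itself) are what the decomposition theorem relies on. Your proposal has a genuine gap in the local-search step, and it is precisely the subtlety that distinguishes the $\Delta^*+1$ guarantee from a weaker one.

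You restrict attention to \emph{single} swaps: an improving edge $(x,y)$ must, in one step, reduce a degree-$\Delta$ vertex, with both $x$ and $y$ of degree $\le \Delta-2$. When no such single swap exists, you argue that the heavy set $H = \{v : \deg_T(v) \ge \Delta-1\}$ separates the terminal-bearing components of $T-H$. This does not follow. Take $x,y$ light, in different components of $T-H$. The $T$-path from $x$ to $y$ certainly crosses $H$, but it may cross only at vertices of degree exactly $\Delta-1$; in that case $(x,y)$ is \emph{not} an improving edge under your definition (no degree-$\Delta$ vertex lies on the fundamental cycle), so local optimality places no constraint on such an edge, and $H$ need not be a $G$-separator. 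Consequently the double-counting argument against $T^*$ has nothing to count, and $\Delta \le \Delta^*+1$ does not follow. This is exactly why F\"{u}rer--Raghavachari use chains of swaps: vertices of degree $\le \Delta-2$ are initially ``good,'' and whenever a non-tree edge has two good endpoints, the degree-$(\Delta-1)$ vertices on its fundamental cycle are promoted to good, because a sequence of swaps (not one swap) can lower them to $\Delta-2$ without creating any new degree-$\Delta$ vertex. Only when a degree-$\Delta$ vertex becomes good is the (possibly long) chain of swaps actually executed. When the marking saturates without promoting any degree-$\Delta$ vertex, the still-bad set $B$ --- all degree-$\Delta$ vertices plus some degree-$(\Delta-1)$ vertices --- is the separator witnessing $\Delta^* \ge \Delta - 1$. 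The single-swap local optimum is a strictly weaker condition and, on its own, is known to give only a $\Delta^* + O(\log n)$ guarantee.

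Two smaller points. First, your potential $\Phi(T) = \sum_v c^{\deg_T(v)}$ can be made to decrease per swap (with $c>2$), but it does not obviously yield the $O(|U|\log|U|)$ bound on the number of \emph{improvements}; the cleaner counting in the paper uses the fact that a Steiner tree on $|U|$ terminals has fewer than $|U|/(k-1)$ nodes of degree $\ge k$, summing $\sum_{k} |U|/(k-1) = O(|U|\log|U|)$ over all possible maximum-degree values. Second, each ``improvement'' in the multi-swap scheme is itself a linear-time pass (the marking propagation), not a constant number of edge exchanges, so the accounting must be improvements $\times$ $O(m)$, which is where the Gabow--Tarjan set-union structure enters to eliminate the inverse-Ackermann factor.
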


Let \FRTree$(G,U)$ be the procedure that computes $T$.
Our decomposition theorem is not concerned with $\Delta^*$, but with {\em other} properties of the forest $T$.
In order to see how these properties arise, we sketch how 
the \FRTree$(G,U)$ algorithm works in the simpler case in which $U=V$.
Let $\Delta(G')$ denote the maximum degree in the graph $G'$.
\\

The algorithm begins with any spanning forest $T_0$
and iteratively tries to {\em improve} $T_0$, yielding $T_1, T_2, \ldots, T_\omega$,
such that (i) $\Delta(T_{i+1}) \le \Delta(T_i)$, and (ii) 
the set of degree-$\Delta(T_i)$ nodes in $T_{i+1}$ is a {\em strict} subset of the degree-$\Delta(T_i)$ nodes in $T_i$.
The number of improvements is clearly finite.
Since any tree contains fewer than $n/(k-1)$ nodes with degree at least $k$, 
the total number of improvements is at most 
$\sum_{k = \Delta(T_\omega)}^{\Delta(T_0)} n/(k-1) = O(n\log\f{\Delta(T_0)}{\Delta(T_\omega)}) = O(n\log n)$.

The \FRTree{} algorithm only searches for a particular class of improvements that can be found in linear time,
leading to an $O(mn\log n)$ time bound.  Let $T_0$ be the current spanning tree.  
All vertices with degree $\Delta(T_0)$ and $\Delta(T_0)-1$ are initially marked {\em bad} and all others {\em good}.  
(In the diagrams below white nodes have degree $\Delta(T_0)$, gray nodes have degree $\Delta(T_0)-1$, 
and black nodes have degrees less than $\Delta(T_0)-1$.)
The simplest {\em single-swap} improvement arises if there is 
a non-$T_0$ edge $\{u,v\}$ such that $u$ and $v$ are good (black) and 
a bad vertex $x$ with degree $\Delta(T_0)$ appears on the unique cycle of $T\cup \{\{u,v\}\}$.
\begin{figure}[h]
\centering
\scalebox{0.5}{\includegraphics{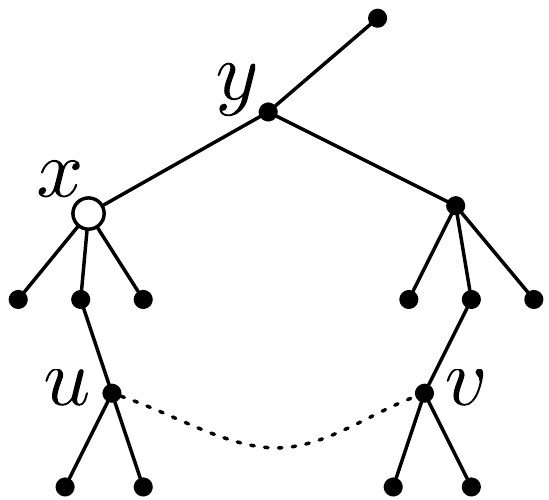}}
\hcm[1.5]
\scalebox{0.5}{\includegraphics{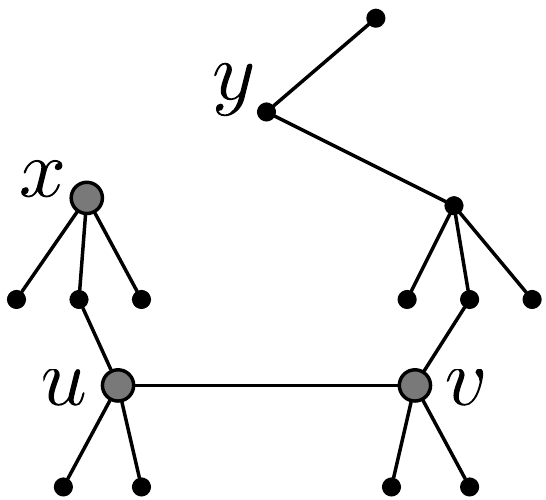}}
\caption{\small A fragment of a larger tree is depicted.  Swapping $\{u,v\}$ for $\{x,y\}$ yields a new tree with at least one fewer node with degree $\Delta(T_0)$.}
\end{figure}
In this case we choose any edge $\{x,y\}$ incident to $x$ on the cycle and set
$T_1 \leftarrow T_0 - \{\{x,y\}\} \cup \{\{u,v\}\}$, thereby eliminating a degree-$\Delta(T_0)$ vertex
(namely $x$, and perhaps even $y$) 
but possibly increasing the number of degree-$(\Delta(T_0)-1)$ vertices (namely $u$ and $v$). 

In general the \FRTree{} algorithm considers improvements composed of an arbitrarily large number
of edge-swaps.  While there exists an unscanned edge $\{u,v\}$ where both $u$ and $v$ are marked good,
it marks all bad vertices {\em good} on the fundamental cycle of $T_0 \cup \{\{u,v\}\}$.  Thus, a formerly-bad
good vertex is one whose degree can be reduced by 1 via a sequence of edge-swaps that does not introduce
any degree-$\Delta(T_0)$ vertices.  If a degree-$\Delta(T_0)$ vertex is ever marked good, an improvement
has been detected and the sequence of swap edges that created it can easily be reconstructed.
See Figure~\ref{fig:multiswap}.
\begin{figure}[h]
\centering
\scalebox{0.5}{\includegraphics{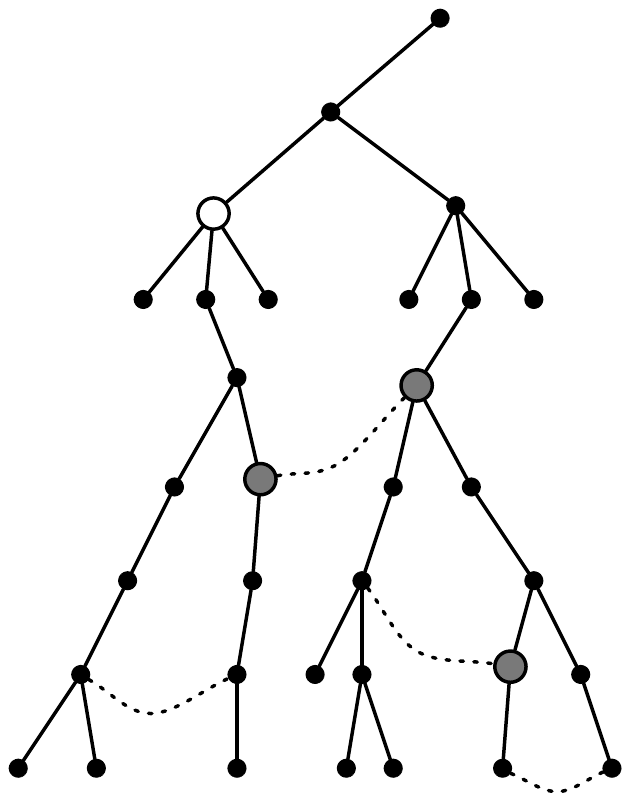}}
\hcm[1.5]
\scalebox{0.5}{\includegraphics{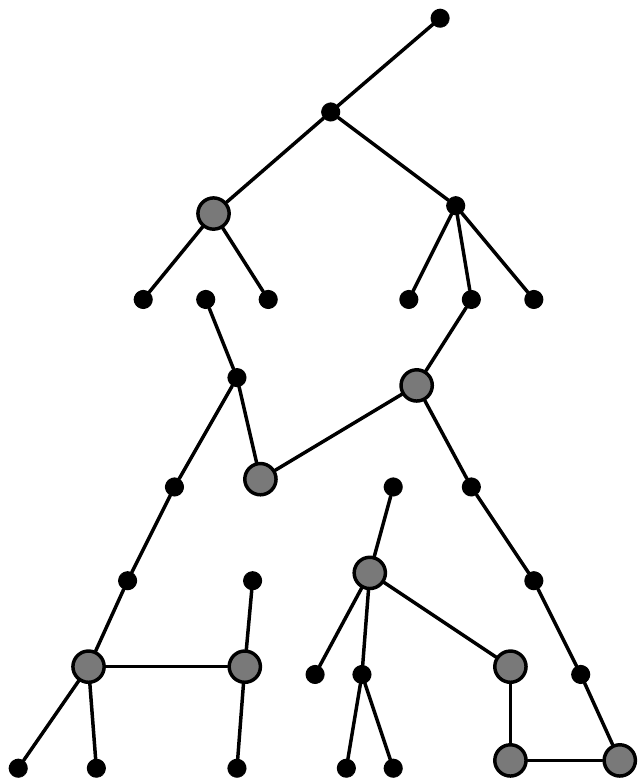}}
\caption{\label{fig:multiswap}\small A fragment of a larger tree is depicted.  A sequence of edge-swaps reduces the number of 
degree-$\Delta(T_0)$ vertices but may increase the number of degree-$(\Delta(T_0)-1)$ vertices.}
\end{figure}
Every time this procedure finds an improvement we obtain a new spanning tree 
and begin the search for another improvement from scratch.  
Let $T_\omega$ be the spanning
tree for which this procedure fails to find an improvement.  
Let $B$ be 
the set of vertices still marked {\em bad}.  By definition $B$ includes {\em all}
vertices with degree $\Delta(T_\omega)$ and some subset
of the vertices with degree $\Delta(T_\omega)-1$.
Consider what happens to $G$ and $T_\omega$ if we removed all $B$-vertices
from the graph.  
\FRTree's search for improvements guarantees that 
$T_\omega - B$ is a {\em spanning} forest of the graph $G - B$.  
Indeed, if there {\em were} an edge $\{u,v\}$ connecting two distinct 
trees of $T_\omega - B$ then all $B$-vertices on the fundamental cycle of $T_\omega \cup \{\{u,v\}\}$
would have been marked {\em good} and therefore $u$ and $v$ would not have been in distinct trees
of $T_\omega - B$ after all.   In general, the output of \FRTree$(G,V)$ is the pair $(T_{\omega}, B)$.

When the terminal set $U$ is a strict subset of $V$, the execution of \FRTree$(G,U)$ is similar, except
that $T_0,\ldots,T_\omega$ are \emph{Steiner} trees (which might not not span $V$).  Each 
improvement to $T_i$ substitutes for some edges in $T_i$ an equal number of \emph{paths},
whose intermediate vertices come from $V-V(T_i)$.  See~\cite{FurerR94}.
Theorem~\ref{thm:FR} summarizes the properties of the \FRTree{} algorithm that we actually use.

\begin{theorem}\label{thm:FR} (\cite{FurerR94})
The \FRTree$(G,U)$ algorithm returns a pair $(T,B)$, where $T$ is a Steiner forest for $U$
and $B\subset V$ comprises all vertices with $T$-degree $\Delta(T)$ and some subset
of vertices with $T$-degree $\Delta(T)-1$.  
If $u,v\in U$ are disconnected in $T-B$ then they are also disconnected in $G-B$.
\end{theorem}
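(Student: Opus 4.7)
The plan is to verify the three claims of the theorem as direct consequences of the \FRTree{} algorithm sketched above. The first two are loop invariants, and the third is a contradiction argument based on what it means for the inner loop to have terminated without finding any further improvement.

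For the first claim I would carry the invariant that at every outer iteration $i$, the current candidate $T_i$ is a Steiner forest for $U$. This holds at the start for $T_0$, and each iteration replaces certain edges of $T_i$ by an equal number of edges (or, in the Steiner case, paths through $V \setminus V(T_i)$) in such a way that the connectivity partition of $U$ in $T_{i+1}$ equals that in $T_i$. Hence $T = T_\omega$ is a Steiner forest for $U$. The second claim is essentially by construction of $B$: \FRTree{} initially marks as bad \emph{every} vertex of $T$-degree $\Delta(T)$ or $\Delta(T)-1$; a bad vertex is relabeled good only via cycle propagation from some good--good edge; and the inner loop terminates in the final iteration precisely because no degree-$\Delta(T)$ vertex was ever relabeled good (otherwise an improvement would have been detected and a new outer iteration begun). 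Thus the final bad set $B$ contains every vertex of $T$-degree $\Delta(T)$ together with some subset of those of degree $\Delta(T)-1$, exactly as claimed.

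The substantive content is the third claim. I would prove the contrapositive: every path $P = x_0, x_1, \ldots, x_k$ in $G - B$ between terminals $u = x_0$ and $v = x_k$ in $U$ lies inside a single tree of $T-B$. Walking $P$ edge by edge, if $\{x_i, x_{i+1}\}$ is an edge of $T$ then both endpoints lie in $V \setminus B$ and hence in the same tree of $T-B$. Otherwise $\{x_i, x_{i+1}\}$ is a non-tree edge with both endpoints good. In the spanning-forest case $U = V$ the endpoints lie in the same component of $G$, hence of $T$, so the fundamental cycle $C$ in $T \cup \{\{x_i, x_{i+1}\}\}$ is well defined. If $x_i$ and $x_{i+1}$ were in different trees of $T-B$, the cycle $C$ would have to cross some $z \in B$. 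But the termination condition of \FRTree{}'s inner loop is precisely that every good--good edge has been scanned, and scanning propagates the good label to all bad vertices on its fundamental cycle; so $z$ would have been relabeled good before termination, contradicting $z \in B$. Chaining this along $P$ yields $u$ and $v$ in a single tree of $T-B$.

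The main obstacle I expect is extending the swap argument to the Steiner case $U \subsetneq V$, where $T$ need not span $V$ and the path $P$ may traverse vertices of $V \setminus V(T)$; in particular the fundamental cycle in $T \cup \{\{x_i,x_{i+1}\}\}$ may be undefined if $x_i, x_{i+1} \notin V(T)$. There the ``bad edge'' straddling two trees of $T-B$ must be replaced by a \Furer--Raghavachari improving \emph{path}: a walk in $G-B$ whose endpoints lie in distinct trees of $T-B$ and whose interior vertices lie outside $V(T)$. The inner loop of \FRTree{} propagates good marks along such paths as well, so the identical ``$z$ would be relabeled good'' contradiction carries over, provided one first pins down that the inner loop exhaustively explores all good--good improving paths and not merely good--good edges.
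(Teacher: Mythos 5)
Your proposal is correct and follows essentially the same route as the paper's own (sketchy) treatment of this cited result: for the spanning case, the argument that ``if $\{x_i,x_{i+1}\}$ were a good--good edge whose fundamental cycle crossed a $B$-vertex $z$, then $z$ would have been relabeled good before the inner loop terminated'' is exactly the paper's justification for why $T_\omega - B$ is a spanning forest of $G-B$, merely rephrased as a walk along a path rather than as a single non-tree edge joining two trees. Both you and the paper stop short of carrying out the Steiner-case extension (improving \emph{paths} through non-terminal vertices rather than single non-tree edges) and defer those details to F\"urer--Raghavachari.
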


The degree $\Delta(T-B)$ is by definition at most $\Delta(T)-1$, which may still be too large.
Theorem~\ref{thm:lowdegreetree} shows that by iteratively applying the \FRTree{} algorithm to the components of $T-B$ we can reduce the 
maximum degree to any desired bound $s\ge 3$, at the cost of increasing the set $B$ of ``bad'' vertices.

\begin{theorem}\label{thm:lowdegreetree} {\bf (The Decomposition Theorem)}
Let $U\subseteq V$ be a terminal set in a graph $G=(V,E)$ and $s\ge 3$.  
There is an algorithm \Decomp$(G,U,s)$ that returns a pair $(T,B)$
such that the following hold.
\begin{enumerate}
\item $T$ is a Steiner forest for $U$ and $T-B$ is a Steiner forest for $U-B$.
\item $\Delta(T-B) \le s$.
\item $|B| < |U|/(s-2)$ and $|B\cap U| < |U|/(s-1)$.
\end{enumerate}
The running time of \Decomp{} is $O(|U| m\log|U|)$.
\end{theorem}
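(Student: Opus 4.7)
The plan is to realize \Decomp{} as a modified \FRTree{} that uses a two-tier definition of ``bad''. Starting from any Steiner forest $T$ for $U$ in $G$, I declare a non-terminal $v\in V-U$ to be \emph{bad} if $\deg_T(v)\ge s$, and a terminal $v\in U$ to be \emph{bad} if $\deg_T(v)\ge s+1$; all other vertices are \emph{good}. I then run exactly the \FRTree{} improvement loop: while some unscanned non-tree edge $e=\{u,v\}$ has both endpoints good, propagate the ``good'' label through the fundamental cycle of $T\cup\{e\}$, and whenever a bad vertex becomes good, reconstruct and apply the implied sequence of edge swaps and restart. Let $(T,B)$ be the Steiner forest and the set of still-bad vertices when the loop terminates.

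First I would verify properties (1) and (2). Each swap preserves the Steiner-forest property, exactly as in \FRTree{}, so the final $T$ is a Steiner forest for $U$ in $G$. The bound $\Delta(T-B)\le s$ is immediate from the definition of $B$: every non-terminal in $V-B$ has $T$-degree at most $s-1$ and every terminal in $V-B$ has $T$-degree at most $s$. That $T-B$ is a Steiner forest for $U-B$ in $G-B$ follows by the same contradiction used for Theorem~\ref{thm:FR}: if $u,v\in U-B$ were connected in $G-B$ but not in $T-B$, some edge $\{u',v'\}\in E(G-B)$ would have its endpoints in distinct components of $T-B$; both $u',v'$ are good, so the fundamental cycle of $T\cup\{\{u',v'\}\}$ must contain a bad vertex, and the propagation rooted at $\{u',v'\}$ would have marked it good and triggered an improvement, contradicting termination.

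For property (3) I first prune any non-terminal leaf of $T$---this preserves the Steiner-forest property and only lowers degrees---so that every leaf of $T$ lies in $U$ and hence the number of leaves $\ell$ satisfies $\ell\le|U|$. Applying the standard tree identity $\sum_{v:\deg_T(v)\ge 3}(\deg_T(v)-2)=\ell-2$ componentwise and summing, and using that each bad $v\notin U$ contributes at least $s-2$ and each bad $v\in U$ contributes at least $s-1$, we obtain
\[
(s-2)\,|B\setminus U|\;+\;(s-1)\,|B\cap U|\;\le\;\sum_{v\in B}(\deg_T(v)-2)\;\le\;\ell-2\;<\;|U|.
\]
The terminal part alone yields $|B\cap U|<|U|/(s-1)$, and since $s-2\le s-1$ the same inequality gives $(s-2)|B|<|U|$, i.e.\ $|B|<|U|/(s-2)$. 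The running time matches one call to \FRTree{}: the improvement rules are structurally identical, so the amortized analysis of~\cite{FurerR94} (with the incremental-tree set-union structure driving the fundamental-cycle traversals) yields $O(|U|\,m\log|U|)$.

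The main obstacle I anticipate is extending \FRTree{}'s propagation analysis to the asymmetric classification. In the original \FRTree{} every bad vertex shares the target degree $\Delta(T)$, so a single swap can move exactly one bad vertex across the threshold; here a swap that helps a bad terminal may leave a bad non-terminal stuck one above its (stricter) threshold. To handle this I would track, for each scanned vertex, its excess above its own threshold, and argue that when propagation first turns a bad $x$ good, the reconstructed multi-swap sequence strictly decreases $x$'s degree into the good range without pushing any currently good vertex into the bad range. Once this invariant is verified, the bookkeeping and amortized analysis of~\cite{FurerR94} transfer verbatim and yield all four claims of the theorem.
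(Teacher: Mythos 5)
Your proposal takes a genuinely different route: instead of the paper's recursive structure (repeated calls to \FRTree{} on successively smaller pieces, using uniform degree threshold $\Delta(T')$ at each level), you propose a single pass of a modified \FRTree{} in which each vertex has its own bad/good threshold. Your cardinality argument via the tree identity $\sum_{\deg\ge3}(\deg-2)=\ell-2$ is clean and would give exactly the bounds of the theorem if the algorithm indeed produced a minimal Steiner tree $T$ in which every $B$-non-terminal has degree $\ge s$ and every $B$-terminal has degree $\ge s+1$. Unfortunately the crucial step --- that your modified improvement loop terminates with such an output --- has a gap that I do not think is repairable along the lines you sketch.

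The issue is the safety invariant for improvements. In \FRTree{} a ``good'' vertex has degree $\le\Delta-2$; after a swap adds an edge touching it, its degree is $\le\Delta-1$, which is still not the target degree $\Delta$, so no new degree-$\Delta$ vertex is ever created. Progress is then measured by the number of degree-$\Delta$ vertices, which strictly decreases. In your scheme a good non-terminal has degree up to $s-1$ (a good terminal up to $s$); after a swap adds an edge touching it, its degree can reach $s$ (resp.\ $s+1$), which \emph{is} bad by your definition. So a single improvement can turn $x$ good while simultaneously creating two new bad vertices $u$ and $v$, and the natural potentials (number of bad vertices, $\sum_v\max(0,\deg_T(v)-t(v)+1)$, etc.) all fail to be monotone. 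This is exactly the gap you flag in your last paragraph, but the invariant you propose to establish --- ``the reconstructed multi-swap sequence strictly decreases $x$'s degree into the good range without pushing any currently good vertex into the bad range'' --- is false in general, for the reason just described. You cannot restrict improvements to only use vertices with degree at most $\mathrm{threshold}-2$ without re-introducing a second tier of ``bad'' vertices (those at $\mathrm{threshold}-1$) that must be allowed to temporarily participate; once you do that, the set $B$ you end up with contains vertices at degree $s-1$ (for non-terminals), and your tree-identity computation only yields $|B|<|U|/(s-3)$, not $|U|/(s-2)$. This tension is exactly what the paper's recursion dissolves: at each recursive level the algorithm reduces the \emph{current} maximum degree by $1$ using \FRTree{}'s native two-tier scheme, the stubborn vertices are peeled off, and the counting is done once, on the final tree, via Lemma~\ref{lem:High} --- which, notably, achieves the sharper bound for $U$-nodes with a combinatorial leaf-removal argument rather than by insisting that $B$-terminals have degree $\ge s+1$.
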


In the remainder of this section we give the \Decomp$(G,U,s)$ algorithm and prove Theorem~\ref{thm:lowdegreetree}.
An invocation of \Decomp{} consists of the following three steps.

\paragraph{Step 1.} Let $(T',B')$ be the output of \FRTree$(G,U)$. If $\Delta(T') \le s$
then we are done, and return the pair $(T',\emptyset)$.

\begin{figure}
\centering
\begin{tabular}{cc}
\scalebox{.7}{\includegraphics{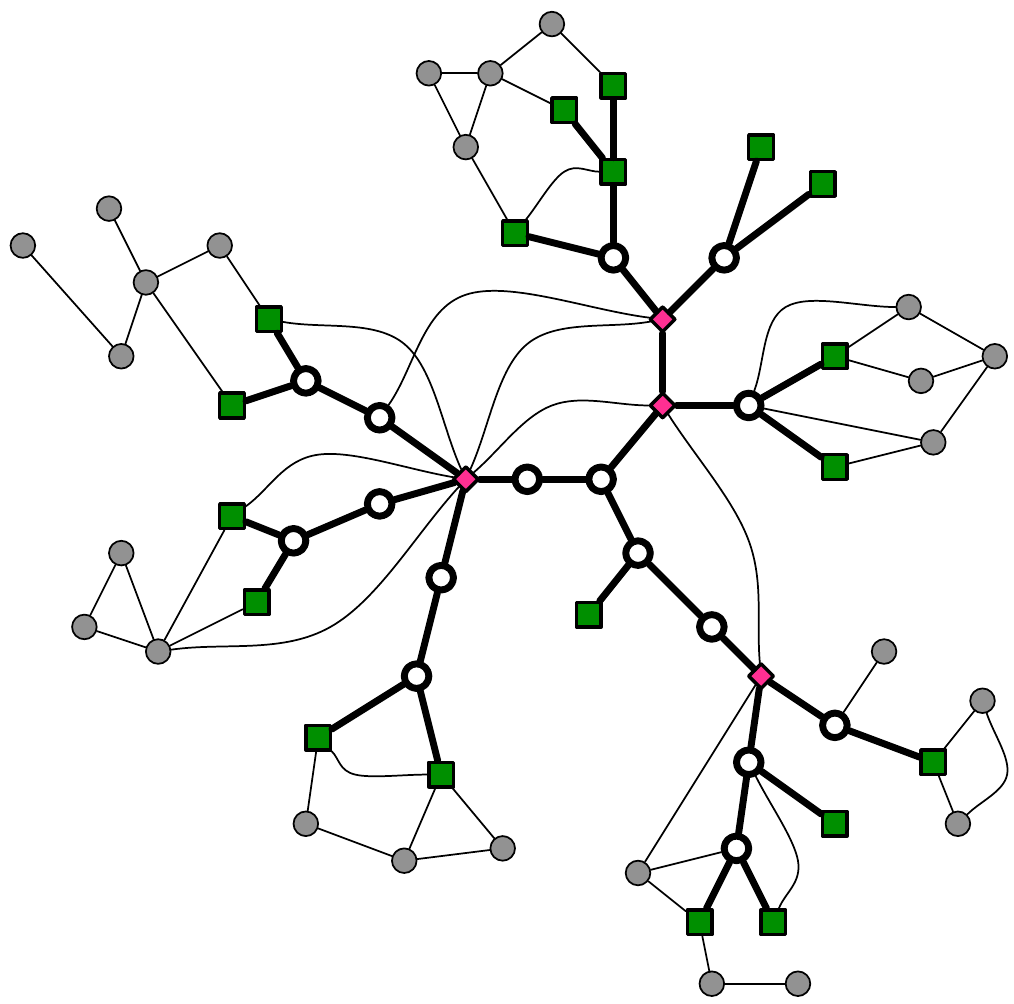}} &
\scalebox{.7}{\includegraphics{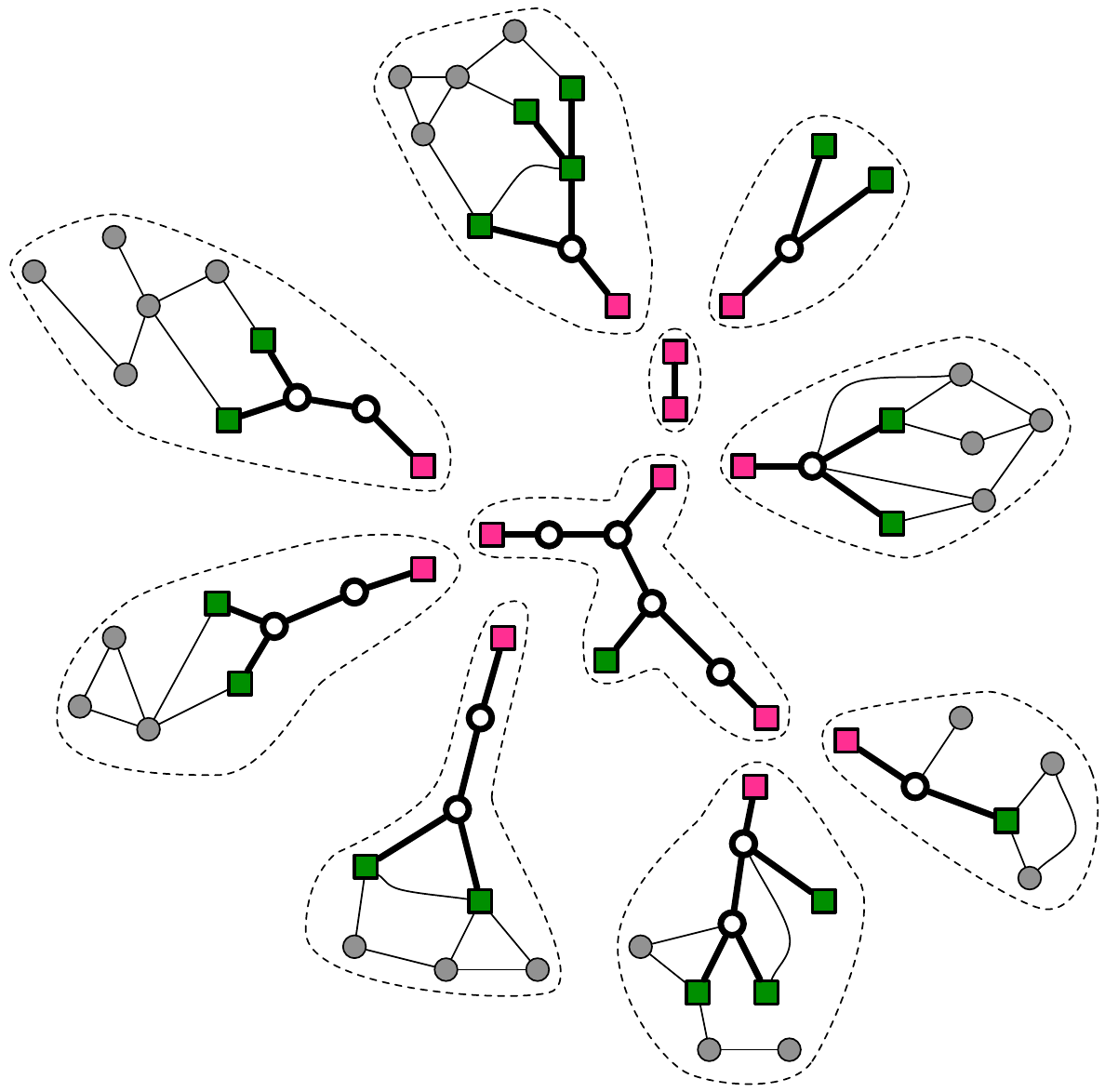}}
\end{tabular}
\caption{\label{fig:Decomp} Left: the output of \FRTree.  Square green nodes are terminals; pink diamonds
are $B'$-nodes (and may be terminals); thick edges are part of $T'$; gray vertices are outside $V(T')$.
Right: detaching the edges adjacent to $B'$ nodes creates ten subtrees; non-$V(T')$ nodes are connected
to at most one subtree; \Decomp{} is called recursively on each subgraph; $B'$-nodes have degree 1
in these recursive calls and are designated terminals (square nodes).}
\end{figure}

\paragraph{Step 2.} Partition the edge set of $T'$ into minimal trees $\{t_i\}$ such that the leaves
of each $t_i$ are either $B'$-nodes or leaves of $T'$, and hence $U$-nodes.
Let $B'[t_i]$ be the $B'$ nodes in $t_i$ and
$V[t_i]$ be the set of all vertices in $G-B'$ reachable from vertices in $V(t_i) - B'[t_i]$.
(When $U=V$, $V[t_i]$ is exactly $V(t_i) - B'[t_i]$; in general $V[t_i]$ may contain vertices outside of $V(T')$. See Figure~\ref{fig:Decomp}.)
Let $G[t_i]$ be the graph whose vertex set is $V[t_i]\cup B'[t_i]$ and whose edge 
set includes all edges induced by $V[t_i]$ and, for each $u\in B'[t_i]$, the unique $T'$-edge connecting $u$ to $V(t_i)$.
For each $t_i$, obtain a pair $(T_i,B_i)$ by recursively calling \Decomp$(G[t_i], (V[t_i] \cap U) \cup B'[t_i], s)$.
Observe that $B'[t_i]$ are included as terminals in the recursive call, even if they are not members of $U$.
See Figure~\ref{fig:Decomp} for an illustrative example.

\paragraph{Step 3.} Return the pair $(T,B)$ where
\[
T = \bigcup_i T_i	\mbox{\hcm[.5] and \hcm[.5]} B = B' \cup \bigcup_i B_i.
\]

We need to establish all the claims: that $T-B$ is, in fact, a Steiner forest of $U-B$
with maximum degree $s$, that $B$ has the right cardinality, and that the running time is $O(|U|m\log|U|)$.\\

If the algorithm halts at Step 1 then $T'$ is, by Theorem~\ref{thm:FR}, a Steiner forest for $U$ in $G$.
Suppose that the algorithm does not halt at Step 1 and let $P(u_0,u_k)$ be a path in $T'$ between $u_0,u_k\in U$.
Partition it into subpaths $P(u_0,u_1),\ldots,P(u_{k-1},u_k)$, where $u_1,\ldots,u_{k-1}$ are all the $B'$-nodes encountered on the path.
By construction, each $P(u_i,u_{i+1})$ is completely contained in some tree $t_i$ and the endpoints of this path
are terminals in the recursive call to \Decomp$(G[t_i], (V[t_i] \cap U) \cup B'[t_i], s)$, so, by the inductive hypothesis,
the tree $T_i$ returned contains a (possibly different) path between $u_i$ and $u_{i+1}$.  
By Theorem~\ref{thm:FR} again, the graphs $\{G[t_i]\}$ intersect only at $B'$-nodes, which necessarily
occur as leaves in the $\{T_i\}$ trees, so the edge-set $T = \bigcup_i T_i$ returned is, in fact, a Steiner forest for $U$.
By Theorem~\ref{thm:FR}, all nodes in $B$ have $T$-degree at least $s$ and all nodes in $T-B$ have $T$-degree at most $s$.
Moreover, if $u,v\in U$ are disconnected in $T-B$ then they are disconnected in $G-B$.  This follows 
from Theorem~\ref{thm:FR} if $u$ and $v$ are in different trees $t_i,t_j$, and by induction on the output
of $\Decomp(G[t_i], (V[t_i] \cap U) \cup B'[t_i], s)$ if $u,v$ are both in $t_i$.

We now prove that $B$ has the claimed cardinality, using the property that all $B$-nodes have degree at least $s$ in $T$.

\begin{lemma}\label{lem:High}
Let $T$ be any minimal Steiner tree for $U$.  The number of nodes in $T$ with $T$-degree at least $s$ is
at most $g(|U|) = \floor{\frac{|U|-2}{s-2}}$.
The number of $U$-nodes in $T$ with $T$-degree at least $s$ is at most $h(|U|) = \floor{\frac{|U|-2}{s-1}}$.
\end{lemma}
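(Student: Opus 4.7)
The plan is to combine the handshake identity for trees with the minimality hypothesis, which forces every leaf of $T$ to lie in $U$. Let $n = |V(T)|$, let $\ell$ denote the number of leaves of $T$, let $H \subseteq V(T)$ be the set of vertices with $T$-degree at least $s$, and set $H_U = H \cap U$. Since $T$ is a tree, $\sum_{v} \deg_T(v) = 2(n-1)$. Every non-leaf has degree at least $2$, and vertices in $H$ have degree at least $s$, so I would split the degree sum as leaves plus moderate-degree non-leaves plus $H$-vertices to obtain
\[
2(n-1) \;\geq\; \ell \;+\; 2(n - \ell - |H|) \;+\; s\,|H|,
\]
which rearranges immediately to $(s-2)|H| \leq \ell - 2$.

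For the first bound I would then invoke minimality: if some leaf of $T$ were not in $U$, deleting it would still leave a Steiner tree for $U$, contradicting minimality. Hence the leaf set $L$ satisfies $L \subseteq U$ and $\ell \leq |U|$, which combined with the inequality above yields $|H| \leq \lfloor(|U|-2)/(s-2)\rfloor$ as required.

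The second bound needs one extra observation: since $s \geq 3$, vertices in $H$ cannot be leaves, so $L$ and $H_U$ are \emph{disjoint} subsets of $U$; hence $\ell + |H_U| \leq |U|$. Substituting $\ell \leq |U| - |H_U|$ into $(s-2)|H_U| \leq (s-2)|H| \leq \ell - 2$ gives $(s-1)|H_U| \leq |U| - 2$, and therefore $|H_U| \leq \lfloor(|U|-2)/(s-1)\rfloor$.

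The only step that requires any real insight is the double-counting trick in the second bound: recognizing that the $H_U$-vertices themselves consume part of the ``budget'' of terminals that could otherwise be leaves, which is precisely what upgrades the denominator from $s-2$ to $s-1$. Everything else is a routine edge-counting calculation, with the minimality hypothesis used only to conclude that all leaves are terminals.
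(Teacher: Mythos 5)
Your proof is correct, and it takes a genuinely different route from the paper's. The paper proves the bounds by an explicit induction on the tree structure: it assumes WLOG that all internal degrees are at least $3$, repeatedly picks an internal node adjacent to exactly one other internal node, prunes its attached leaves (which drops $|U|$ by at least $s-2$, or $s-1$ if the node itself is already a terminal), and reads off the recurrences $g(|U|) \le g(|U|-(s-2))+1$ and $h(|U|)\le h(|U|-(s-1))+1$. Your argument replaces this induction with a single application of the handshake identity: $(s-2)|H| \le \ell-2$ falls out of $\sum_v \deg_T(v) = 2(|V(T)|-1)$, minimality gives $\ell \le |U|$, and the second bound follows from the observation that leaves and high-degree terminals are disjoint inside $U$, so $\ell \le |U|-|H_U|$. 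Your approach is shorter and avoids the WLOG reduction (splicing out degree-$2$ vertices) and the recursive pruning bookkeeping, at no cost in generality; the paper's inductive version is perhaps slightly more transparent about exactly which leaves are being "spent" per high-degree node, but the degree-counting argument is cleaner. Both proofs implicitly assume $|U|\ge 2$ so that $\ell-2\ge 0$, which is harmless in context.
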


\begin{proof}
Due to the minimality of $T$, all leaves are necessarily $U$-nodes.  Moreover, we can assume
without loss of generality that all internal nodes have degree at least 3, by splicing out paths of degree-2 vertices.
When $|U|\le s-1$ we have $g(|U|) = 0$ and when $|U|\le s$ we have $h(|U|) =0$.
The claimed bounds on $g$ and $h$ hold when there is exactly one internal node.
In general, choose an internal node $u$ adjacent to exactly one internal (non-leaf) node.
If $u$ is adjacent to at least $s-1$ leaves then it contributes 1 to the $g(|U|)$ tally; remove its incident leaves 
and designate $u$ a $U$-node.  We preserve the property that all leaves are $U$-nodes,
and since the net loss in the number of $U$-nodes is at least $s-2$, we have $g(|U|) \le g(|U|-(s-2)) + 1$.
Observe that $u$ only contributes to the $h(|U|)$ tally if it is already a $U$-node.
In this case we have a loss of $s-1$ $U$-nodes, which implies that $h(|U|) \le h(|U| - (s-1)) + 1$.
The claimed bounds on $g$ and $h$ follow by induction on $|U|$.
\end{proof}

To analyze the running time we imagine that a {\em single} global Steiner tree for $U$ is being
maintained, which is the union of the current Steiner trees in the deepest recursive calls.  
The initial tree provided to a call to \FRTree{} is therefore just a fragment of the global Steiner tree, 
whose maximum degree is some $k\ge s+1$.  Each iteration of this call to \FRTree, except the last,
finds an {\em improvement}, which reduces the number of maximum-degree nodes in its fragment by at least one.
Say a $k$-improvement is one that reduces the number of degree-$k$ nodes. 
If the current global Steiner tree has maximum degree $k$, the total number of $k$-improvements
that can be found, in all recursive calls, is at most $|U|/(k-2)$.
The initial value of $k$ is certainly at most $|U|$.
Since each improvement takes linear time, the total time for all improvements
is at most $O(m)\cdot \sum_{k=s+1}^{|U|} |U|/(k-2) = O(|U|m\log(|U|/s))$.

\section{The Low Degree Hierarchy}\label{sect:LDH}

We can apply Theorem~\ref{thm:lowdegreetree} iteratively to create a {\em low degree hierarchy}.
Fix $s=4$ and generate a set of pairs $\{(T_i,B_i)\}$ as follows:
\begin{align*}
(T_0,B_0) &\leftarrow  \Decomp(G,V,4),				\\
(T_1,B_1) &\leftarrow  \Decomp(G,B_0,4),				\\
		&\cdots\\
(T_i,B_i) &\leftarrow  \Decomp(G,B_{i-1},4),			\\
		&\cdots\\
(T_p,\emptyset) &\leftarrow \Decomp(G,B_{p-1},4).		\\
\end{align*}
In other words, the ``bad'' vertices for $T_0$ form the terminal set for $T_1$
and in general, the bad vertices for $T_{i-1}$ form the terminal set for $T_i$.
We end, of course, at the first $T_p$ with degree at most $s=4$, so $B_p = \emptyset$.
It follows from Theorem~\ref{thm:lowdegreetree} that 
$|B_0| < n/3$ and in general, that $|B_i| < |B_{i-1}|/2$, so $p < \log n-1$ levels suffice.

Define $\LDForest_i$ to be the set of trees in $T_i - B_i$ and $\LDForest$ to be the set
of all trees in $\LDForest_0,\ldots,\LDForest_p$, as if each forest were on a disjoint vertex set.
Theorem~\ref{thm:lowdegreetree} implies that the forest $\LDForest_i$ has two useful properties: 
it has maximum degree 4, and it is a Steiner forest for $B_{i-1} - B_i$.
Suppose $v\in V(\LDForest_i) \cap B_{i-1}$ is a \emph{terminal} for the first time in $\LDForest_i$.
We treat this copy of $v$ as the ``principal'' copy in $\LDForest$; all other copies of $v$ that may appear in $\LDForest_{i+1},\ldots,\LDForest_p$ are dummies.  
For example, if $e=\{u,v\}\in E(G)$, we think of $e$ joining the \emph{terminal}/\emph{principal} copies of $u$ and $v$ in $\LDForest$.

\begin{definition}\label{def:descF}
Suppose $\tau_i \in \LDForest_i$ and $\tau_{i'} \in \LDForest_{i'}$, $i\le i'$.  We say $\tau_i$
is a \emph{descendant} of $\tau_{i'}$ if a connected component of $G - B_{i'}$ contains $V(\tau_{i'})$
and at least one vertex of $V(\tau_i)$.
\end{definition}

Observe that if $V(\tau_i) \cap B_{i'} = \emptyset$ then $\tau_i$ can only have one ancestor at level $i'$;
if it had two distinct ancestors then they would be connected by a path in $G-B_{i'}$, contradicting Theorem~\ref{thm:lowdegreetree}.
Unfortunately, it seems that $V(\tau_i)$ {\em can} intersect $B_{i'}$, so in general the ancestry relation between trees in $\LDForest$
induces a $(p+1)$-level {\em dag}, not a rooted tree.  Algorithmically it is much easier to deal with trees rather than dags.  For
this reason we define a variant hierarchy $\Comp$ that is more structured.  Both $\Comp$ and $\LDForest$ are used by our data structures.

\begin{definition}\label{def:descC}
Define $\Comp_i$ to be the set of connected components of $G - (B_i \cup B_{i+1} \cup \cdots \cup B_{p-1})$ containing at least one $B_{i-1}$ (terminal) vertex.
Suppose $\gamma_i \in \Comp_i$ and $\gamma_{i'}\in \Comp_{i'}$, where $i\le i'$.  We say $\gamma_i$
is a \emph{descendant} of $\gamma_{i'}$, written $\gamma_i \preceq \gamma_{i'}$,  if $V(\gamma_i) \cap V(\gamma_{i'}) \neq \emptyset$.
\end{definition}

Lemma~\ref{lem:Cprops} identifies the critical properties of $\{\Comp_i\}$ used by our algorithm.

\begin{lemma}\label{lem:Cprops}
Consider the hierarchy of components $\{\Comp_i\}_{i \in [0,p]}$.
\begin{enumerate}
\item Each $\gamma \in \Comp_i$ has at most one ancestor in $\Comp_{i'}$, for each $i' \in [i,p]$.
\item $V(\gamma) \subseteq V(\gamma')$ for each $\gamma \preceq \gamma'$.
\item If $\{u,v\}\in E$ and $u\in V(\gamma), v\in V(\gamma')$, then $\gamma \preceq \gamma'$ or $\gamma'\prec \gamma$.
\item If $\gamma\in \Comp_i$, the terminals $V(\gamma) \cap B_{i-1}$ are contained in a single tree in $\LDForest_i$, denoted $\tau(\gamma)$.
\end{enumerate}
\end{lemma}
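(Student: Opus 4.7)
The plan is to pin down one monotonicity observation and let the four properties fall out as bookkeeping. Since $B_i \cup B_{i+1} \cup \cdots \cup B_{p-1}$ shrinks as $i$ grows, for every $i \le i'$ the ``level-$i$ graph'' $G - (B_i \cup \cdots \cup B_{p-1})$ is an induced subgraph of the ``level-$i'$ graph'' $G - (B_{i'} \cup \cdots \cup B_{p-1})$. Consequently any connected vertex set that survives at level $i$ also survives at every higher level and sits inside a uniquely determined component there.

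For (1) and (2) I would fix $\gamma \in \Comp_i$ and $i' \ge i$. The vertex set $V(\gamma)$ avoids $B_i \cup \cdots \cup B_{p-1}$, hence avoids the smaller set $B_{i'} \cup \cdots \cup B_{p-1}$, so $V(\gamma)$ is present at level $i'$; its connectedness is inherited from level $i$, so $V(\gamma)$ lies in a single level-$i'$ component $\gamma^\star$. Any candidate ancestor $\gamma' \in \Comp_{i'}$ intersects $V(\gamma)$ by the definition of $\preceq$, so $\gamma'$ must be $\gamma^\star$, yielding both the uniqueness asserted in (1) and the inclusion $V(\gamma) \subseteq V(\gamma')$ asserted in (2).

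For (3) I would take an edge $\{u,v\} \in E$ with $u \in V(\gamma_i)$ for some $\gamma_i \in \Comp_i$ and $v \in V(\gamma_j)$ for some $\gamma_j \in \Comp_j$, assume $i \le j$ without loss of generality, and observe that both endpoints avoid $B_j \cup \cdots \cup B_{p-1}$. Thus $\{u,v\}$ is present at level $j$, which forces $u$ and $v$ into the same level-$j$ component, namely $\gamma_j$; then $u \in V(\gamma_i) \cap V(\gamma_j)$ delivers $\gamma_i \preceq \gamma_j$, and the $j < i$ case is symmetric, giving $\gamma_j \prec \gamma_i$. For (4) I would observe that $V(\gamma) \cap B_{i-1} \subseteq B_{i-1} - B_i$ (terminals in $\gamma$ avoid $B_i$), that any two such terminals are connected through $\gamma$ in the level-$i$ graph and therefore in the strictly larger graph $G - B_i$, and that Theorem~\ref{thm:lowdegreetree} applied to $(T_i, B_i) = \Decomp(G, B_{i-1}, 4)$ guarantees that $T_i - B_i$ is a Steiner forest for $B_{i-1} - B_i$; hence these terminals all lie in a single tree of $\LDForest_i$, which we dub $\tau(\gamma)$.

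The hard part will be merely notational: keeping straight which of the graphs $G - B_i$, $G - (B_i \cup \cdots \cup B_{p-1})$, and the various level-$i'$ graphs is in use at each moment. The one place this matters substantively is in (4), where the weaker connectivity statement in $G - B_i$ (rather than the stronger statement that actually holds inside $\gamma$) is exactly what Theorem~\ref{thm:lowdegreetree} consumes.
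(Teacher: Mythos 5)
Your proof is correct and follows essentially the same approach as the paper's: the key observation in both is the monotone nesting $G - (B_i \cup \cdots \cup B_{p-1}) \subseteq G - (B_{i'} \cup \cdots \cup B_{p-1})$ for $i \le i'$, from which Parts 1 and 2 fall out because a connected set surviving at level $i$ sits inside a unique level-$i'$ component, Part 3 follows by tracking the edge at the higher level, and Part 4 uses the Steiner-forest guarantee of Theorem~\ref{thm:lowdegreetree} applied in the larger graph $G - B_i$. Your Part 3 is a touch cleaner than the paper's in that you argue directly that $u \in V(\gamma_j)$, rather than first positing an ancestor $\gamma''$ of $\gamma$ at level $j$ and deriving a contradiction; this sidesteps having to check that such an ancestor exists.
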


\begin{proof}
For Part 1, note that any two distinct components $\gamma',\gamma'' \in \Comp_{i'}$ have $V(\gamma') \cap V(\gamma'') = \emptyset$.
Since, by construction, $V(\gamma) \cap (B_{i'}\cup \cdots \cup B_{p-1}) = \emptyset$, $\gamma$ cannot share vertices with both $\gamma'$ and $\gamma''$.
We now turn to Part 2.  Suppose $\gamma\in \Comp_i, \gamma' \in \Comp_{i'}$ with $i < i'$.  If $\gamma$ and $\gamma'$ share one vertex then $V(\gamma) \subset V(\gamma')$
since $\gamma$ is connected and $V(\gamma) \cap (B_{i'} \cup\cdots\cup B_{p-1}) = \emptyset$.
If Part 3 were false then $\gamma$ and $\gamma'$ would be unrelated.  Let $\gamma''$ be the ancestor of $\gamma$ 
at the same level as $\gamma'$, so $\gamma',\gamma''$ are two distinct components
in some $\Comp_i$.  Part 2 implies $u\in V(\gamma'')$, meaning $\gamma'$ and $\gamma''$ are joined by an edge $\{u,v\}$, and are therefore not distinct components in $\Comp_i$.
For Part 4, consider a tree $\tau \in \LDForest_i = T_i - B_i$.  By Theorem~\ref{thm:lowdegreetree}, 
$\tau$ spans the terminals ($B_{i-1}$-nodes) in a connected component of $G-B_i$.
A $\gamma\in \Comp_i$ represents a connected component in $G - (B_i\cup\cdots\cup B_{p-1})$,
so if $V(\gamma)$ intersects $V(\tau)$ at one terminal, every terminal of $V(\gamma)$ must be contained in $V(\tau)$.
\end{proof}

Lemma~\ref{lem:Cprops}(1) (unique ancestors) 
shows that the ancestry relationship on $\Comp_0,\ldots,\Comp_{p}$ can be succinctly encoded as a forest of rooted trees.
Let $\Comp$ be the {\em component hierarchy} defined by the $\prec$ relation.  
The nodes of $\Comp$ are in one-to-one correspondence with
the components of $\Comp_0,\ldots,\Comp_{p}$, where $\Comp_0$ form the leaves of $\Comp$.
Slightly abusing notation, we shall say ``$\gamma\in\Comp$'' 
to mean that $\gamma$ is a node in $\Comp$ or that $\gamma$ is a component in some $\Comp_i$.

\subsection{Stocking the Low Degree Hierarchy}

Our goal is to supplement $\Comp$ and $\LDForest$ with useful data structures that allow us to reconnect 
the graph after a set of vertices fail.  Recall that $\LDForest$ is composed of trees with maximum degree at most 4.
If a single tree $\tau \in \LDForest$ experiences the failure of some vertex set $D\subset V$, 
we can find {\em individual} edges that reconnect the subtrees of $\tau - D$ using $O(|D|^2)$ 2D range queries (Theorem~\ref{thm:reconnect}).
However, individual edges are, in general, insufficient to reconnect the subtrees.
There could be long paths that go through vertices that appear in ancestors or descendants of $\tau$ in $\LDForest$.
In order to quickly detect
the existence of these paths we follow an idea from~\cite{ChanPR11} and introduce {\em artificial} edges that capture connectivity via paths.
We do not want to add too many artificial edges, for two reasons.  First, they take up space, which we want to conserve, and second,
after deleting vertices from the graph the {\em validity} of many artificial edges may be cast into doubt.  Any invalid artificial edges must
be ignored when reestablishing connectivity, so it is important that the algorithm not encounter too many of these edges.
Before saying exactly how artificial edges are added we must introduce the concept of a {\em $\dmax$-adjacancy list}.
Recall that $\dmax$ is the maximum number of vertex failures.

\begin{definition}\label{def:d-adjacency}
Let $L=(v_1,v_2,\ldots,v_r)$ be a list of vertices and $\dmax\ge 1$ be an integer.
The \emph{$\dmax$-adjacency edges} $\Lambda_{\dmax}(L)$ connect all vertices at distance at most $\dmax+1$ in the list $L$:
\begin{equation*}
\Lambda_{\dmax}(L) = \{\{v_i,v_j\} \;|\; 1\leq i < j \leq r \mbox{ and } j-i \leq \dmax+1\}.
\end{equation*}
\end{definition}

\begin{lemma}\label{thm:fault-tolerant} 
The following properties hold for any vertex list $L$:
\begin{enumerate}
\item $\Lambda_{\dmax}(L)$ contains fewer than $(\dmax+1)|L|$ edges.
\item If a set $D$ of at most $\dmax$ vertices are removed from $L$ then the subgraph of $\Lambda_{\dmax}(L)$ induced by $L - D$
remains connected.
\item Suppose $L$ is partitioned into consecutive sublists $L_1$ and $L_2$.  Then at most $O(\dmax^2)$ edges
from $\Lambda_{\dmax}(L)$ cross the partition $(L_1,L_2)$.
\end{enumerate}
\end{lemma}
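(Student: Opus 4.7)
All three parts follow directly from the explicit combinatorial description of $\Lambda_{\dmax}(L)$, so the proof is essentially a bookkeeping exercise rather than requiring a genuine idea. My plan is to handle the three parts in order, using only the definition of ``distance in the list $L$.''

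For Part~1, I would observe that each index $i \in \{1,\ldots,r\}$ contributes at most $\dmax+1$ edges of the form $\{v_i,v_j\}$ with $j > i$, namely one for each $j \in \{i+1,\ldots,i+\dmax+1\} \cap \{1,\ldots,r\}$. Summing over $i$ gives the bound $|\Lambda_{\dmax}(L)| \le (\dmax+1)|L|$, and the inequality is strict because the last $\dmax+1$ indices each contribute strictly fewer than $\dmax+1$ edges.

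For Part~2, the key observation is that if $D \subseteq V(L)$ with $|D|\le \dmax$ and $u,w$ are two consecutive surviving vertices in the ordering inherited from $L$, then their indices in $L$ differ by at most $|D|+1 \le \dmax+1$. Hence $\{u,w\}\in \Lambda_{\dmax}(L)$, and the surviving vertices form a Hamiltonian path in the induced subgraph, which is therefore connected.

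For Part~3, I would note that any edge $\{v_i,v_j\} \in \Lambda_{\dmax}(L)$ with $i<j$ crossing the $(L_1,L_2)$ partition must satisfy $i \le |L_1| < j$ and $j-i \le \dmax+1$, which forces $i \in \{|L_1|-\dmax,\ldots,|L_1|\}$ and $j \in \{|L_1|+1,\ldots,|L_1|+\dmax+1\}$. There are at most $(\dmax+1)^2 = O(\dmax^2)$ such pairs $(i,j)$, giving the claimed bound.

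No step here looks like a real obstacle; if anything, the only thing requiring care is being explicit that ``distance in the list'' means the absolute difference of indices, so Part~2 really does produce a path and not merely a connected subgraph.
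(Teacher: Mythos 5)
Your proof is correct and follows essentially the same route as the paper's, which simply observes that Part~1 is a direct count, Part~2 follows because consecutive surviving vertices are within distance $\dmax+1$ and hence adjacent, and Part~3 follows by counting prefix-suffix pairs (the paper gives the sharper constant $(\dmax+1)(\dmax+2)/2$, but your $(\dmax+1)^2$ bound is the same up to constants). Your writeup just spells out the bookkeeping that the paper labels as trivial.
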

\begin{proof}
Part (1) is trivial, as is (2), since each pair of consecutive undeleted vertices is at distance at most $\dmax+1$, and therefore adjacent.
Part (3) is also trivial: the number of edges connecting any prefix and suffix of $L$ is at most $(\dmax+1)(\dmax+2)/2$.
\end{proof}

Fix a $\gamma_i\in \Comp_i$ and let $\gamma_{i+1},\ldots,\gamma_p$ be its ancestors in $\Comp$.
Recall that the terminals of $\gamma_i$ are contained in a single tree $\tau(\gamma_i) \in \LDForest_i$.  The mapping $\tau$ is not necessarily injective: 
one tree in $\LDForest_i$ could be the host for many components in $\Comp_i$.  
Define $A(\gamma_i,\gamma_j)$ to be a list of the terminals in $V(\gamma_j)$ that are adjacent to at least one vertex in $V(\gamma_i)$, listed according
to an Euler tour $\Euler(\tau(\gamma_j))$.  (Recall that the terminals in $V(\gamma_j)$ are exactly those vertices in $V(\gamma_j) \cap (B_{j-1} - (B_j \cup \cdots\cup B_{p-1}))$.)
Let $A(\gamma_i)$ be the concatenation of $A(\gamma_i,\gamma_{i+1}),\ldots,A(\gamma_i,\gamma_p)$.
We interpret elements of $A(\gamma_i)$ as the terminal copies of vertices in $\LDForest$.

\begin{definition}\label{def:H}
The multigraph $H$ is on the vertex set of $\LDForest$.
For each $\{u,v\}\in E$, $H$ contains an {\em original edge} connecting the terminal copies of $u$ and $v$.
For each component $\gamma \in \Comp$, $H$ includes
$\Lambda(\gamma) \bydef \Lambda_{\dmax}(A(\gamma))$.
Each edge in $H$ is labeled with its provenance: either {\em original},
or the name of a $\gamma$ if it appears in $\Lambda(\gamma)$.  Note that $H$ may contain
multiple edges with the same endpoints, but with different provenance.
\end{definition}

Lemma~\ref{thm:ET-adjacency} exhibits the two salient
properties of $\Lambda(\gamma)$: that it encodes useful connectivity
information and that it is economical to effectively destroy
$\Lambda(\gamma)$ when it is no longer valid, often in time sublinear in $|\Lambda(\gamma)|$.

\begin{lemma}\label{thm:ET-adjacency}
Consider a $\Lambda(\gamma_i) \subset E(H)$.
\begin{enumerate}
\item Suppose $d\le \dmax$ vertices fail, none of which are in $V(\gamma_i)$, and suppose $u$ and $v$ are in components of ancestors of $\gamma_i$ and are each adjacent
to at least one vertex in $V(\gamma_i)$.  Then $u$ and $v$ remain connected in the original graph and remain connected in $H$.
\item Suppose the proper ancestors of $\gamma_i$ are $\gamma_{i+1},\ldots,\gamma_p$ and a total of $f$ edges are removed from 
$\tau(\gamma_{i+1}),\ldots,\tau(\gamma_p)$,
breaking their Euler tours into intervals $I_1,\ldots,I_{p-i+2f}$.
Then at most $O(\dmax^2(p+f))$ edges of $\Lambda(\gamma_i)$ connect distinct intervals $I_j,I_{j'}$.
\end{enumerate}
\end{lemma}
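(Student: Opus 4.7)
The plan is to handle Part 1 by showing that both $u$ and $v$ must appear in the list $A(\gamma_i)$, and then to invoke Lemma~\ref{thm:fault-tolerant}(2) to conclude that $\Lambda(\gamma_i)$ still connects them after the failures. First I would observe that any vertex outside $V(\gamma_i)$ that is adjacent to $V(\gamma_i)$ must lie in $B_i\cup\cdots\cup B_{p-1}$, since $V(\gamma_i)$ is a connected component of $G-(B_i\cup\cdots\cup B_{p-1})$. Letting $j^{*}=\max\{j : u\in B_j\}$, the smallest ancestor of $\gamma_i$ whose vertex set contains $u$ is $\gamma_{j^{*}+1}$ (for any smaller index $j'\le j^{*}$, $u\in B_{j^{*}}$ would be excluded from $V(\gamma_{j'})$). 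Then $u\in V(\gamma_{j^{*}+1})\cap B_{j^{*}}$, which is exactly the definition of a terminal of $\gamma_{j^{*}+1}$ appearing in Lemma~\ref{lem:Cprops}(4). Combined with $u$ being adjacent to $V(\gamma_i)$, this gives $u\in A(\gamma_i,\gamma_{j^{*}+1})\subseteq A(\gamma_i)$; an identical argument places $v\in A(\gamma_i)$. Since no failed vertex lies in $V(\gamma_i)$, the internal path of $\gamma_i$ from $u$'s neighbor to $v$'s neighbor survives, giving original-graph connectivity. Since at most $\dmax$ vertices of $A(\gamma_i)$ can fail, Lemma~\ref{thm:fault-tolerant}(2) then keeps $u$ and $v$ connected within $\Lambda(\gamma_i)\subseteq E(H)$.

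For Part 2, the key observation is that $A(\gamma_i)$ is the concatenation $A(\gamma_i,\gamma_{i+1})\cdots A(\gamma_i,\gamma_p)$, and within each chunk the order is the Euler-tour order $\Euler(\tau(\gamma_j))$. Cutting $\tau(\gamma_j)$ at $f_j$ edges splits $\Euler(\tau(\gamma_j))$ into $2f_j+1$ intervals, as recalled in Section~\ref{sect:ET-tree}, so the total number of intervals obtained after removing $f=\sum_j f_j$ edges from the $p-i$ trees $\tau(\gamma_{i+1}),\ldots,\tau(\gamma_p)$ is $\sum_j(2f_j+1)=2f+(p-i)$, which matches the $I_1,\ldots,I_{p-i+2f}$ in the statement. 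These intervals partition $A(\gamma_i)$ into $p-i+2f$ consecutive sublists, so there are $O(p+f)$ separation points between them. Lemma~\ref{thm:fault-tolerant}(3) bounds by $O(\dmax^2)$ the number of $\dmax$-adjacency edges crossing any single separation point; summing over the $O(p+f)$ separators gives the claimed $O(\dmax^2(p+f))$ bound.

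The main conceptual obstacle will be the membership argument in Part 1: carefully tracing an outside neighbor of $V(\gamma_i)$ up through the hierarchy $\{\Comp_j\}$ to identify the unique level $j^{*}+1$ at which it becomes a terminal of its containing ancestor component, and thereby enters $A(\gamma_i)$. This requires digesting the precise interaction between the sets $\{B_j\}$ (which need not be disjoint) and the connected-component definition of $\Comp_j$, guided by Lemma~\ref{lem:Cprops}. Once that is in hand, both parts reduce to direct invocations of the generic $\dmax$-adjacency list properties in Lemma~\ref{thm:fault-tolerant}, with no further numerical work needed.
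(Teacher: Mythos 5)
Your proposal is correct and follows essentially the same route as the paper: Part~1 reduces to showing $u,v \in A(\gamma_i)$ and then invoking Lemma~\ref{thm:fault-tolerant}(2), and Part~2 counts the separation points induced in $A(\gamma_i)$ by the Euler-tour cuts and applies Lemma~\ref{thm:fault-tolerant}(3) to each. The only difference is that you spell out the membership $u\in A(\gamma_i,\gamma_{j^*+1})$ (by tracing the maximal level $j^*$ with $u\in B_{j^*}$ and checking the terminal condition $u\in V(\gamma_{j^*+1})\cap(B_{j^*}-(B_{j^*+1}\cup\cdots\cup B_{p-1}))$), a step the paper's proof treats as immediate from Definition~\ref{def:H}; that extra care is sound and fills a genuine, if small, gap in exposition.
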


\begin{proof}
For Part (1), the vertices $u$ and $v$ are connected in the original graph because they are each adjacent to vertices in $V(\gamma_i)$ and, absent any failures, all vertices
in $V(\gamma_i)$ remain connected.  By Definition~\ref{def:H}, $u$ and $v$ appear in $\Lambda(\gamma_i)$ and, 
by Lemma~\ref{thm:fault-tolerant}, $\Lambda(\gamma_i)$ remains connected after the removal of any $d$ vertices.
Turning to Part (2), recall from Definition~\ref{def:H} that $A(\gamma_i)$ was the concatenation of $A(\gamma_i,\gamma_{i+1}),\ldots,A(\gamma_i,\gamma_p)$ 
and each $A(\gamma_i,\gamma_{i'})$ was ordered according to an Euler tour of $\tau(\gamma_{i'})\in \LDForest_{i'}$.
Removing $f$ edges from $\tau(\gamma_{i+1}),\ldots,\tau(\gamma_p)$ separates their Euler tours (and, hence, the lists $\{A(\gamma_i,\gamma_{i'})\}_{i'}$) into at most $2f+p-i$ intervals. 
By Lemma~\ref{thm:fault-tolerant} at most $(2f+p-i)\cdot O(\dmax^2)$ edges from $\Lambda(\gamma_i)$ connect distinct intervals.
In other words, in order to ``logically'' delete $\Lambda(\gamma_i)$ it suffices to delete $O(\dmax^2(p+f))$ edges from $\Lambda(\gamma_i)$
since all remaining edges do not add to the connectivity of the remaining graph.
\end{proof}

We apply Theorem~\ref{thm:reconnect} and generate an ET-structure $\ET(H,\LDForest)$ for $H$.
Lemma~\ref{lem:space} bounds the space for the overall data structure.

\begin{lemma}\label{lem:space} 
Given a graph $G$ with $m$ edges, $n$ vertices, and a parameter $\dmax\ge 1$,
the {\em $\dmax$-failure connectivity oracle} consists of $\Comp, \ET(H,\LDForest)$,
and various linear-space data structures supporting navigation around $\Comp$.
The space required by the oracle is 
$O(\dmax m\log n\log\log n)$ or $O(dm\log n)$, depending on the 2D 
range searching structure used in $\ET(H,\LDForest)$, and its construction time is $O(mn\log n)$.
\end{lemma}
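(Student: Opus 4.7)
The plan is to break the oracle's space into three pieces named in the statement: the component hierarchy $\Comp$, the ET-structure $\ET(H,\LDForest)$, and auxiliary navigation tables. The hierarchy $\Comp$ is a rooted forest with $O(n)$ nodes, and each vertex of $G$ appears in $\LDForest$ at $O(\log n)$ levels, so these pieces and the navigation data structures on top of them occupy $O(n\log n)$ space, absorbed by the dominant term. By Theorem~\ref{thm:reconnect}, $\ET(H,\LDForest)$ occupies $O(|E(H)|\log\log n)$ space with the fast 2D orthogonal range reporting variant and $O(|E(H)|)$ with the slow variant, so the whole analysis reduces to bounding $|E(H)|$.

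I would write $|E(H)| = m + \sum_{\gamma\in\Comp}|\Lambda(\gamma)|$ and invoke Lemma~\ref{thm:fault-tolerant}(1) to get $|\Lambda(\gamma)|\le(\dmax+1)|A(\gamma)|$, reducing the core of the proof to establishing $\sum_{\gamma}|A(\gamma)| = O(m\log n)$ by a charging argument. Every vertex $v\in V$ becomes a terminal at a \emph{unique} level $\ell(v)\in[0,p]$ -- the smallest $j$ with $v\notin B_j$, taking $B_{-1}=V$ -- and the unique component $\gamma^v\in\Comp_{\ell(v)}$ containing $v$ is the only one in whose terminal lists $v$ may appear. I would then observe that $v$'s entry lies in $A(\gamma,\gamma^v)$ iff some neighbor $u\in N(v)$ satisfies $u\in V(\gamma)$, which forces $\gamma$ to be a descendant of $\gamma^v$ at a level in $[\ell(u),\ell(v)]$; by Lemma~\ref{lem:Cprops}(1) there is at most one such $\gamma$ per level. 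Summing over $u\in N(v)$ and then over $v$, each edge $\{u,v\}\in E$ contributes at most $2(p+1)=O(\log n)$ entries to $\sum_\gamma|A(\gamma)|$, yielding $\sum_\gamma|A(\gamma)|=O(m\log n)$ and hence $|E(H)|=O(\dmax m\log n)$; the two claimed space bounds then follow from the two variants of Theorem~\ref{thm:reconnect}.

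For the construction time, the $p+1=O(\log n)$ iterated calls to \Decomp\ would dominate. The $i$-th call runs $\Decomp(G,B_{i-1},4)$ in time $O(|B_{i-1}|\cdot m\log|B_{i-1}|)$ by Theorem~\ref{thm:lowdegreetree}. Because $|B_{-1}|=n$, $|B_0|<n/3$, and $|B_i|<|B_{i-1}|/2$ for $i\ge1$, the geometric series $\sum_{i=-1}^{p-1}|B_i|$ is $O(n)$, so the total \Decomp\ cost would be $O(mn\log n)$. Materializing $H$ would amount to walking the Euler tours of the relevant trees $\tau(\gamma')$ to produce each $A(\gamma)$ and then emitting $\Lambda(\gamma)$, in time linear in the output $O(\dmax m\log n)$, which is dominated whenever $\dmax\le n$; building $\ET(H,\LDForest)$ is likewise linear in its space and dominated. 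The main obstacle will be the charging step: I must argue precisely that the descendants of $\gamma^v$ meeting a \emph{fixed} neighbor $u$ form a chain in $\Comp$ -- one per level in $[\ell(u),\ell(v)]$ -- which hinges critically on the uniqueness of ancestors in Lemma~\ref{lem:Cprops}(1); without this structural property the charging would fail and the space bound would degrade.
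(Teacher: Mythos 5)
Your decomposition matches the paper's: bound $|E(H)| = O(\dmax m\log n)$ by charging, apply the two space bounds of Theorem~\ref{thm:reconnect} to $\ET(H,\LDForest)$, and sum the geometrically-decaying \Decomp{} costs to get $O(mn\log n)$ construction time. The paper disposes of $\sum_\gamma |A(\gamma)| = O(m\log n)$ with the single assertion that each edge contributes to at most $p$ lists, whereas you spell out a charging argument via $\ell(\cdot)$; that expansion is welcome, but one local fact in it needs repair. The sets $B_0, B_1, \ldots$ are \emph{not} nested --- a call $\Decomp(G,B_{i-1},4)$ can return a $B_i$ containing Steiner vertices outside $B_{i-1}$ --- so ``the smallest $j$ with $v\notin B_j$'' does not identify the terminal level. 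The correct characterization, matching the parenthetical remark before Definition~\ref{def:H}, is the unique $j$ with $v\in B_{j-1}$ and $v\notin B_j\cup\cdots\cup B_{p-1}$, i.e. $\ell(v) = 1 + \max\{j : v\in B_j\}$ (taking the max over the empty set to be $-1$). With that definition, $u\in V(\gamma_i)$ for some $\gamma_i\in\Comp_i$ precisely when $i\ge\ell(u)$, so your claim that $u\in V(\gamma)$ forces $\gamma$'s level into $[\ell(u),\ell(v)-1]$ becomes correct (with your stated $\ell$ it can fail, e.g. if $v\notin B_1$ but $v\in B_3$). Also note the per-level uniqueness you need is simply that the components of $\Comp_i$ are pairwise disjoint; Lemma~\ref{lem:Cprops}(1) (unique ancestors) is what guarantees $\gamma^v$ itself is well defined and that descendants form a chain. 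With those adjustments the charging yields $\le 2(p+1)$ entries per edge and the rest of the argument, including the construction-time bound, is sound and identical in spirit to the paper's.
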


\begin{proof}
The number of vertices in $H$ is at most $(p+1)n$, $n$ per $\LDForest_i$.  (This is a pessimistic bound.  We are unable to conceive
of any graph $G$ for which this is achieved.)  
The number of original edges in $H$ is $m$.
Each original edge contributes a vertex to at most $p$ lists
$A(\gamma)$, and each member of $A(\gamma)$ contributes at most $\dmax+1$ edges to $\Lambda(\gamma)$.
The number of vertices and edges in $H$ is therefore at most $m + (p+1)n + p(\dmax+1)m = O(\dmax m\log n)$.
By Theorem~\ref{thm:reconnect}, each edge in $H$ contributes $O(\log\log n)$ or $O(1)$ space to $\ET(H,\LDForest)$.
Regarding construction time, by Theorem~\ref{thm:lowdegreetree} the time to compute $(T_0,B_0)$ is $O(mn\log n)$,
and more generally, the time to compute $(T_{i+1},B_{i+1})$ is
$O(m|B_i|\log|B_i|)$ time, where $|B_i| < n/(s-2)^i = n/2^i$ decays geometrically with $i$.
Thus, the total time to compute $\LDForest$ and $\Comp$ is $O(mn\log n)$.
\end{proof}

\section{Recovery From Failures}\label{sect:ops}

In this section we describe how, given a set of $d\leq \dmax$ failed vertices, the data structure
can be updated in time $\tilde{O}(d^2\dmax^2)$ such that connectivity queries
can be answered in $O(d)$ time.
Section~\ref{sect:preprocess} gives the algorithm to delete failed vertices
and Section~\ref{sect:answer} gives the query algorithm and proof of correctness.
In Section~\ref{sect:updatetime} we describe several ways to improve the update time to $\tilde{O}(d^3)$.

\subsection{Deleting Failed Vertices}\label{sect:preprocess}

Let $D\subset V$ be the set of $d$ failed vertices.

\paragraph{Step 1.} Begin by marking any $\gamma\in \Comp$ {\em affected} if $V(\gamma)\cap D \neq \emptyset$,
and mark the corresponding tree $\tau(\gamma) \in \LDForest$ affected as well.  
For each affected $\tau(\gamma)$, mark each $D$-node and its incident tree edges as deleted.  
This breaks up $\tau(\gamma)$ into {\em affected subtrees},
which must be reconnected, if possible.

\begin{lemma}\label{lem:num-affected} 
The number of affected trees is at most $d(p+1)$.  The number of affected subtrees is at most $4d(p+1)$.
\end{lemma}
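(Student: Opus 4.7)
The plan is to bound both counts by exploiting two structural facts: first, a failed vertex $v \in D$ lies in at most one component of $\Comp_i$ at each level $i \in [0,p]$; and second, it lies in at most one tree of $\LDForest_i$ at each level. Since there are $p+1$ levels, any single $v$ can touch at most $p+1$ components and at most $p+1$ trees.

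For the bound on affected trees, I would argue as follows. Fix $v \in D$. If $v \in B_i \cup \cdots \cup B_{p-1}$ then $v \notin V(\gamma)$ for any $\gamma \in \Comp_i$ by the definition of $\Comp_i$; otherwise the components of $\Comp_i$ are vertex-disjoint (they are connected components of a fixed graph), so $v$ lies in exactly one component at level $i$. Summing over the $p+1$ levels and the $d$ failed vertices, the total number of affected components is at most $d(p+1)$, which upper bounds the number of affected trees $\tau(\gamma)$.

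For the subtree bound, I use that every $\tau \in \LDForest$ has maximum degree at most $s = 4$ by Theorem~\ref{thm:lowdegreetree} applied with $s=4$. Let $k_\tau = |V(\tau) \cap D|$ for an affected tree $\tau$. A straightforward induction on $k_\tau$ shows that removing the failed vertices breaks $\tau$ into at most $1 + \sum_{v}(\deg_\tau(v) - 1) \le 1 + 3k_\tau$ affected subtrees, since each deletion raises the component count by $\deg_\tau(v) - 1 \le 3$. Summing,
\[
\#\text{(affected subtrees)} \;\le\; \sum_\tau (3k_\tau + 1) \;=\; 3\sum_\tau k_\tau \;+\; \#\text{(affected trees)}.
\]
Because each $v \in D$ lies in at most one tree of each $\LDForest_i$ (forests have vertex-disjoint trees), we have $\sum_\tau k_\tau \le d(p+1)$, and combining with the previous paragraph yields the target bound $3d(p+1) + d(p+1) = 4d(p+1)$.

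The only subtle point is confirming that a failed vertex really does lie in at most one component (resp.\ tree) per level; this is immediate from the disjointness of components in $\Comp_i$ and of trees within the forest $\LDForest_i$, so there is no genuine obstacle, just bookkeeping.
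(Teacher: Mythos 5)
Your proof is correct and follows essentially the same approach as the paper: each failed vertex lies in at most $p+1$ components/trees (one per level), and the degree-$4$ bound from the low degree hierarchy controls how many subtrees removal can create. Your derivation via $1+3k_\tau$ is somewhat more explicit than the paper's terse argument, but it relies on the same two structural facts and reaches the same bound.
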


\begin{proof}
By Lemma~\ref{lem:Cprops}, any $u\in D$ appears in at most $p+1$ components of $\Comp$.
Since all failed vertices have degree at most $s=4$ in the $\LDForest$ trees in which they appear, 
there are at most $4d(p+1)$ affected subtrees.
\end{proof}

Recall from the discussion above that if $\gamma$ is affected then $V(\gamma)$ contains failed vertices and the connectivity provided by
$\Lambda(\gamma)$ is presumed invalid.  By Lemma~\ref{thm:ET-adjacency} we can logically delete $\Lambda(\gamma)$ by ignoring 
$O(\dmax^2)$ edges for {\em each} of $O(pd)$ breaks in the list $A(\gamma)$.
Since there are at most $O(pd)$ affected (sub)trees, the number of edges that need to be ignored is $O((pd)^2 \dmax^2)$.
Let $H'$ denote the graph $H$ with these $O((pd)^2 \dmax^2)$ edges removed.

\paragraph{Step 2.}
We now attempt to reconnect all affected subtrees using valid edges, i.e., those in $H'$.
Let $R$ be a graph whose vertex set $V(R)$ represents the $O(pd)$ affected subtrees such that $\{t_1,t_2\}\in E(R)$ if $t_1$ and $t_2$ are connected by an edge in $H'$.
Using the structure $\ET(H,\LDForest)$ (see Theorem~\ref{thm:reconnect}) we populate the edge set of $R$ in time 
$O(|V(R)|^2 q + (pd)^2 \dmax^2)$, where $q=\log\log n$ or $\log^\epsilon n$, depending on the space of the 2D range structure~\cite{ChanLP11}.
For each 2D range query, we halt the enumeration of points/edges as soon as an $H'$-edge is reported.
Recall that a point/edge is tagged with its provenance, so we can check in $O(1)$ time whether it came from an affected $\Lambda(\gamma)$ and must be discarded.
Since $|V(R)| = O(pd)$ and $p < \log n$, the time to perform these queries is
$O(d^2(q + \dmax^2)\log^2 n)$. 
In $O(|E(R)|) = O((pd)^2)$ time we determine the connected components of $R$.\\

This concludes the deletion algorithm.  The running time is dominated by Step 2.

\subsection{Answering a Connectivity Query}\label{sect:answer}

To answer a connectivity query between $u$ and $v$
we first check to see if there is a path between them that avoids affected trees, then consider paths that intersect one or more affected trees.

\paragraph{Step 1.}
We first find the components in $\Comp$ containing $u$ and $v$ as terminals; let them be $\gamma(u)$ and $\gamma(v)$.
If $\gamma(u)$ is unaffected, let $\hat\gamma(u)$ be the most ancestral unaffected ancestor of $\gamma(u)$, and let $\hat\gamma(v)$ be defined analogously.
If $\hat\gamma(u),\hat\gamma(v)$ exist and are equal, then $V(\hat\gamma(u))$ contains $u$ and $v$ but no failed
vertices.  If this is the case we declare $u$ and $v$ {\em connected} and stop. 

We can find $\hat\gamma(u)$ and $\hat\gamma(v)$ in $O(\log p) = O(\log\log n)$ time using a binary search over the ancestors of
$\gamma(u)$ and $\gamma(v)$.  Alternatively, we can find them in time $O(\log d)$, independent of $n$, using relatively simple
data structures.  Fix any postordering of the nodes of $\Comp$.
Find the predecessor $\gamma_{\operatorname{pred}}$ and successor $\gamma_{\operatorname{succ}}$ of $\gamma(u)$ among all components whose {\em terminal set} contains a $D$-vertex.
There are at most $d$ such nodes, so the cost to find them is $O(\log d)$ via binary search.  
Let $\gamma_{\operatorname{pred}}^{\operatorname{lca}},\gamma_{\operatorname{succ}}^{\operatorname{lca}}$ be the least common ancestors of $\gamma(u)$ and 
$\gamma_{\operatorname{pred}},\gamma_{\operatorname{succ}}$, respectively.  Without loss of generality suppose $\gamma_{\operatorname{pred}}^{\operatorname{lca}}$
is closer to $\gamma(u)$.  Since $V(\gamma_{\operatorname{pred}}^{\operatorname{lca}})\cap D\neq \emptyset$, $\gamma_{\operatorname{pred}}^{\operatorname{lca}}$ is affected.
If $\gamma_{\operatorname{pred}}^{\operatorname{lca}}$ is at depth $k$ from its root in $\Comp$, the node $\hat\gamma(u)$ that we are looking for is the ancestor of $\gamma(u)$ at depth $k+1$.
Refer to~\cite{BF-C00,BenderF04} for linear space data structures for least common ancestor and level ancestors.

\paragraph{Step 2.}
We now try to find vertices $u'$ and $v'$ in affected subtrees that are connected to $u$ and $v$ respectively.  
If $\gamma(u)$ is affected then $u'=u$ clearly suffices,
so we only need to consider the case when $\gamma(u)$ is unaffected and $\hat\gamma(u)$ exists.
Recall from Definition~\ref{def:H} that $A(\hat\gamma(u))$ is the list of terminals in proper ancestors of $\hat\gamma(u)$ that are adjacent to some vertex in $V(\hat\gamma(u))$.
We scan $A(\hat\gamma(u))$ looking for {\em any} non-failed vertex $u'$ adjacent to $V(\hat\gamma(u))$.  Since $V(\hat\gamma(u))$ is unaffected, $u$ is connected to $u'$,
and since all of $\hat\gamma(u)$'s proper ancestors are affected, $u'$ must appear in an affected subtree in $\LDForest$.
Since there are at most $d$ failed vertices we must inspect at most $d+1$ elements of $A(\hat\gamma(u))$.
This takes $O(d)$ time to find $u'$ and $v'$, if they exist.  
If one or both of $u'$ and $v'$ does not exist we declare $u$ and $v$ {\em disconnected} and stop.

\paragraph{Step 3.}
We have the terminal copies of $u'$ and $v'$ in $\LDForest$.
In $O\paren{\min\left\{\f{\log\log n}{\log\log\log n},\, \f{\log d}{\log\log n}\right\}}$ time we find the affected 
subtrees $t_1'$ and $t_2'$ containing $u'$ and $v'$, respectively, via predecessor search over the left endpoints
of the Euler-tour intervals that remain after deleting $D$ and their incident tree edges.
Note that $t_1'$ and $t_2'$ are vertices in $R$, from Step 2 of the deletion algorithm.  
We declare $u$ and $v$ to be connected if and only if
$t_1'$ and $t_2'$ are in the same connected component of $R$.  
This takes $O(1)$ time.\\

\begin{lemma}
The query algorithm correctly determines whether $u$ and $v$ are connected in $G-D$, in $O(d)$ time.
\end{lemma}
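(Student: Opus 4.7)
My plan is to prove correctness by case analysis on which of the three query steps issues the verdict, leveraging Lemma~\ref{lem:Cprops} (hierarchy of $\Comp$), Lemma~\ref{thm:fault-tolerant} (robustness of $\Lambda_{\dmax}$), and Lemma~\ref{thm:ET-adjacency}(1) (unaffected $\Lambda(\gamma)$ certifies connectivity through $V(\gamma)$), and then to sum the step costs for the running time.

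For Step 1, if $\hat\gamma(u)=\hat\gamma(v)=\gamma$, I will note that both $u$ and $v$ lie in $V(\gamma)$ by Lemma~\ref{lem:Cprops}(2), that $V(\gamma)\cap D=\emptyset$ since $\gamma$ is unaffected, and that $G[V(\gamma)]$ is connected; hence $u$ and $v$ are connected in $G-D$. For Step 2 when $u'$ is not found, I will first show that no edge of $G-D$ leaves $V(\hat\gamma(u))$: for any $\{x,y\}\in E$ with $x\in V(\hat\gamma(u))$ and $y\notin V(\hat\gamma(u))$, Lemma~\ref{lem:Cprops}(3) together with the fact that $\hat\gamma(u)\in\Comp_i$ is a connected component of $G-(B_i\cup\cdots\cup B_{p-1})$ forces $y\in B_i\cup\cdots\cup B_{p-1}$; setting $j^\star=\max\{j:y\in B_j\}$ makes $y$ a terminal at level $j^\star+1$ of the ancestor of $\hat\gamma(u)$ at that level and adjacent to $V(\hat\gamma(u))$, so $y\in A(\hat\gamma(u))\subseteq D$. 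I will then rule out $v\in V(\hat\gamma(u))$: otherwise Lemma~\ref{lem:Cprops}(1)--(2) forces $\gamma(v)\preceq\hat\gamma(u)$ with $V(\gamma(v))\subseteq V(\hat\gamma(u))$, making $\gamma(v)$ unaffected and $\hat\gamma(v)=\hat\gamma(u)$, contradicting that Step 1 did not fire. Thus $u$ and $v$ are disconnected in $G-D$; the case of a missing $v'$ is symmetric.

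Step 3 is the heart of the argument: I must show $u$ and $v$ are connected in $G-D$ iff $t_1'$ and $t_2'$ share a component of $R$. The backward direction is routine: any $R$-path walks through affected subtrees joined by $H'$-edges, each either an original edge (present in $G-D$) or a $\Lambda(\gamma)$-edge for some unaffected $\gamma$ (witnessed in $G-D$ by a detour through $V(\gamma)\subseteq V\setminus D$ via Lemma~\ref{thm:ET-adjacency}(1)); prepending the short $u$-to-$u'$ and $v$-to-$v'$ paths through $V(\hat\gamma(u))$ and $V(\hat\gamma(v))$ (or doing nothing when $u'=u$ or $v'=v$) completes the connection. The hard part will be the forward direction: given a path $P=(u'=x_0,\ldots,x_k=v')$ in $G-D$, I will walk along $P$ and build the $R$-walk piece by piece. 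Whenever $\{x_{i-1},x_i\}$ directly joins two affected subtrees it is an original edge in $H'$ and serves as an $R$-edge. Whenever $P$ first enters an unaffected region $V(\hat\gamma)$ at $x_i$, I will take the earliest subsequent exit $x_j\notin V(\hat\gamma)$; the Step-2 argument gives $x_{i-1},x_j\in A(\hat\gamma)$, and neither lies in $D$, so because $\hat\gamma$ is unaffected and $\Lambda(\hat\gamma)\subseteq H'$ survives the $H\to H'$ deletions, Lemma~\ref{thm:fault-tolerant}(2) (with $|A(\hat\gamma)\cap D|\le d\le\dmax$) yields a path in $\Lambda(\hat\gamma)[A(\hat\gamma)\setminus D]$ from $x_{i-1}$ to $x_j$, each of whose edges is an $R$-edge between two affected subtrees. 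Concatenating these pieces produces the desired $R$-walk.

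Finally, I will bound the running time. Step 1 requires a constant number of predecessor searches over the at most $d$ failed-terminal positions of $\Comp$ (in postorder) plus LCA and level-ancestor queries in $\Comp$, all in $O(\log d)$ time~\cite{BF-C00,BenderF04}. Step 2 scans at most $d+1$ entries of $A(\hat\gamma(u))$ and symmetrically of $A(\hat\gamma(v))$, for $O(d)$. Step 3 performs two predecessor searches over Euler-tour interval endpoints in sublogarithmic time and an $O(1)$ lookup in $R$. These costs sum to $O(d)$.
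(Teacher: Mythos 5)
Your proposal is correct and follows essentially the same structure as the paper's proof: all three verdict points are handled identically, and the hard direction of Step~3 (your walk-and-replace argument) is the paper's partition of $P$ into maximal type-(i)/type-(ii) subpaths in a different guise, with $\Lambda(\gamma)$-robustness (Lemma~\ref{thm:fault-tolerant}(2), as packaged in Lemma~\ref{thm:ET-adjacency}(1)) doing the same work. You additionally spell out two facts the paper asserts without argument---that $A(\hat\gamma(u))$ catches every edge leaving $V(\hat\gamma(u))$ via Lemma~\ref{lem:Cprops}(3), and that $v\notin V(\hat\gamma(u))$ when Step~2 fires---which is a welcome strengthening rather than a deviation.
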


\begin{proof}
If the query algorithm halts in Step 1 it is because both $u$ and $v$ are in the unaffected component $\hat\gamma(u)$, and since $V(\hat\gamma(u))\cap D = \emptyset$,
all vertices in $\hat\gamma(u)$ are still connected.  If the query algorithm halts in Step 2 it is because $u \in V(\hat\gamma(u))$, $v\not\in V(\hat\gamma(u))$, 
and $A(\hat\gamma(u)) - D = \emptyset$.  Since $A(\hat\gamma(u))$ contains {\em all} vertices adjacent to $\hat\gamma(u)$ there can be no path from $u$ to $v$ in $G-D$.

At Step 3 we have discovered $u',v'$ such that $u$ is connected to $u'$, which appears as a terminal in some affected subtree $t_1'$
and similarly for $v,v',$ and $t_2'$.  Since $t_1',t_2'$ are vertices in $R$, the correctness of the query algorithm hinges on whether the graph $R$
correctly represents the connectivity between affected subtrees.  

We first argue that if $t_1'$ and $t_2'$ are connected by a path in $R$ then they are connected in $G-D$.
Each edge on this path is either an original edge or a 
$\Lambda(\gamma)$-edge for some unaffected $\gamma$.
All original edges not incident to $D$ are still valid and each $\Lambda(\gamma)$ edge can, when $\gamma$ is unaffected,
be replaced by a path in $G-D$ using intermediate nodes in $V(\gamma)$.  

We now argue that if $P=(u'=u_0,u_1,\ldots,u_{|P|} = v')$
is a $u'$-$v'$ path in $G-D$, that there exists a $t_1'$-$t_2'$ path in $R$.
Partition $P=P_1P_2\ldots P_{\omega}$ into maximal subpaths $(P_i = (u_{a(i)},\ldots,u_{b(i)}))$ such 
that $V(P_i)$ is either (i) contained in a single affected subtree, or (ii) contained in $V(\gamma)$ for some unaffected $\gamma\in \Comp$.
Observe that because of the maximality criterion, no two type-(ii) subpaths can be adjacent.  
Since $P_1$ and $P_\omega$ contain $u'$ and $v'$, they must be type-(i) subpaths. 
We want to show that all type-(i) subpaths are connected in $R$ by considering how 
consecutive type-(i) subpaths could be connected by valid edges in $H'$.
(Recall that $H'$ is $H$ after deleting all $\Lambda(\gamma)$ edges for affected $\gamma\in \Comp$.)
There are two cases to consider.

\paragraph{Case 1.} Suppose $P_i$ and $P_{i+1}$ are type-(i) subpaths.  Then $\{u_{b(i)}, u_{a(i+1)}\}$ is an original edge in $H'$,
so it or some other edge will be discovered that puts the affected subtrees of $P_i$ and $P_{i+1}$ in the same connected component in $R$.
\paragraph{Case 2.} Suppose $P_i$ and $P_{i+2}$ are type-(i) subpaths, but $P_{i+1}$ is a type-(ii) subpath.
Let $\gamma\in\Comp$ be the component for which $V(P_{i+1}) \subset V(\gamma)$, so $u_{b(i)}, u_{a(i+2)} \not\in V(\gamma)$. 
It must be that $u_{b(i)}, u_{a(i+2)} \in A(\gamma)$, and since $\Lambda(\gamma)$ remains connected after any $d$ vertex deletions, $u_{b(i)}$ and $u_{a(i+2)}$ 
are connected by a path in $\Lambda(\gamma) - D$.  All the $\Lambda(\gamma)-D$ edges straddling two affected subtrees are eligible to be discovered
when populating the edge-set of $R$, so the affected subtrees of $P_i$ and $P_{i+2}$ must be in the same connected component in $R$.

\end{proof}

\section{Improving the Update Time}\label{sect:updatetime}

In this section we present not one, not two, but three different methods to reduce the update time from $\tilde{O}(d^2\dmax^2)$ to $\tilde{O}(d^3)$.
Each of the three methods uses a different, more sophisticated orthogonal range searching structure.
In Section~\ref{sect:method1} we show how $\tilde{O}(d^3)$ time can be achieved with a 2D {\em colored} (aka {\em categorical})
range searching structure~\cite{LarsenW13}.  Section~\ref{sect:method2} 
uses a 2D range {\em counting}~\cite{ChanW13} data structure,
and Section~\ref{sect:method3} uses a 3D range {\em emptiness} 
data structure~\cite{ChanLP11}.  The method of Section~\ref{sect:method3} was suggested to us by Shiri Chechik.

\ignore{
Larsen-X (SODA13) 2D colored range reporting:
 O(nlog n) space; O(loglog U + output) query time.

Chan-Wilkinson (SODA13) 2D range-counting
 O(nloglog n) space; O(loglog U + log k/loglog n) if output count is "k".

Chan-Patrascu-Larsen (SoCG 2013?) 
 3D range emptiness
 O(n log^{1+eps} n) space; O(loglog n) time
 O(n log n poly(loglog n)) space; O((loglog n)^2) time. ???
}

\subsection{Method 1: Colored Range Searching}\label{sect:method1}

We use the following theorem from Larsen and van Walderveen~\cite{LarsenW13}.

\begin{theorem} (\cite{LarsenW13})
Given a multiset $P\subset [U]\times[U]$ of $n$ points and coloring $\phi : P \rightarrow \mathbb{N}$,
there is a data structure occupying space $O(n\log n)$ that answers the following type of query.
Given $x,x',y,y'$, report the color set $\Phi = \{\phi(p) \;|\; p \in P\cap [x,y]\times [x',y']\}$.
The query time is $O(\log\log U + |\Phi|)$.
\end{theorem}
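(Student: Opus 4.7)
The plan is to reduce 2D colored (categorical) range reporting to a small number of uncolored problems over rank-space, then solve each with a persistent range-tree structure whose leaves are navigated by a van Emde Boas index.

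First, I would perform rank-space reduction. Replace the $x$- and $y$-coordinates in $[U]$ by their ranks in $\{1,\ldots,n\}$, storing the two sorted sequences inside y-fast tries. A query rectangle $[x,y]\times[x',y']$ is translated into its rank-space counterpart by four predecessor searches, costing $O(\log\log U)$ once per query. From this point on we only deal with queries on $[n]\times[n]$, where additive $O(1)$-time per output element on the word RAM is realistic.

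Second, I would use the Gupta--Janardan--Smid chaining trick in one dimension to reduce ``colored'' to ``uncolored'' reporting. For each color class $P_c$ sorted by $x$, define $\operatorname{pred}(p)$ as the previous color-$c$ point. Then a 1D query $[a,b]$ returns exactly the distinct colors corresponding to the points $p$ with $p.x\in[a,b]$ and $\operatorname{pred}(p).x<a$, a 2D dominance problem. To lift this to 2D, I would build a balanced range tree on $x$-coordinates and, at each of the $O(n)$ canonical vertical slabs, store a 1D colored reporting structure on the $y$-coordinates of its points; the 2D query $[x,y]\times[x',y']$ decomposes into $O(\log n)$ canonical slabs, each answered by a 1D colored subquery. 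Using persistence across sibling slabs keeps the aggregate space at $O(n\log n)$ rather than $O(n\log^2 n)$, because the 1D colored structures along any slab chain differ only by insertions and deletions of individual colored points.

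Third, to bring the non-output overhead down from $O(\log n)$ to $O(\log\log U)$, I would fractionally cascade the $y$-coordinates across the slabs on a root-to-leaf path, and replace the top-level range-tree descent by a vEB-indexed locator on $x$-ranks. The initial locate costs $O(\log\log U)$, after which each slab is visited in $O(1)$ amortized ``entry'' cost; the actual reporting work in each slab is proportional to the number of colors it contributes, so the total reporting cost telescopes to $O(|\Phi|)$.

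The main obstacle is the space bound. A straightforward two-level structure, even after the chaining reduction, wants $O(n\log^2 n)$ words; the delicate step is arranging the persistent 1D structures so that each input point is responsible for only $O(\log n)$ memory cells, while still permitting constant-time per-output navigation inside a slab. Getting this accounting right — essentially a careful sharing of the chains and of the fractional-cascading catalogues between adjacent slabs — is where the real work lies.
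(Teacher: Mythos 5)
The theorem you are trying to prove is cited from Larsen and van Walderveen (\cite{LarsenW13}); the paper does not prove it, so there is no in-paper proof to compare against. Evaluating your proposal on its own terms, it contains a genuine gap in the step that lifts 1D colored reporting to 2D.

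Your decomposition of $[x,y]\times[x',y']$ into $\Theta(\log n)$ canonical $x$-slabs, each answered by a 1D colored subquery, breaks the claimed query bound in two independent ways. First, visiting $\Theta(\log n)$ slabs already costs $\Omega(\log n)$ even if each ``entry'' is $O(1)$ after fractional cascading; that overhead dominates the budgeted $O(\log\log U)$. Second, and more fundamental for categorical reporting, a single color class can have points spread over $\Omega(\log n)$ of the canonical slabs, so the 1D subqueries report that color once per slab. The aggregate reporting cost is $\sum_i |\Phi_i|$, which can be $\Theta(|\Phi|\log n)$; nothing ``telescopes'' to $O(|\Phi|)$, and fractional cascading accelerates navigation but does not deduplicate output. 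This is the classical obstruction that makes colored range searching harder than uncolored, and it is precisely what a proof of the cited theorem must overcome.

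The standard way around both problems is a \emph{split-node} decomposition rather than a slab decomposition: build a balanced tree on one coordinate, locate the split node for the $y$-interval with one predecessor search plus an LCA query (total $O(\log\log U)$), and issue exactly \emph{two} 3-sided colored queries, one into each child subtree at the split node. Since only two subqueries are made, each color is reported at most twice, and the non-output overhead is $O(\log\log U)$; each point lies in $O(\log n)$ tree nodes, giving $O(n\log n)$ space. The remaining (and genuinely hard) work is then a linear-space 3-sided colored structure with $O(\log\log U + k)$ query time, which is the technical core of \cite{LarsenW13} and requires more than the naive $x$-predecessor chaining you describe (the chain must encode Pareto-frontier/staircase structure, not just 1D order). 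Finally, your persistence argument is a red herring: with a range tree, each point already appears in only $O(\log n)$ nodes, so linear-space per-node colored structures give $O(n\log n)$ total without any persistence.
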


Assign each component $\gamma\in \Comp$ a distinct color $\phi(\gamma) \in \{1,\ldots,|\Comp|\}$.
Recall that each edge in $H$ is tagged with its provenance.  
All {\em original} edges receive color zero and all $\Lambda(\gamma)$ edges receive color $\phi(\gamma)$.
Each 2D range query now returns a list of colors in the query rectangle.  We halt the search the moment 
it returns color 0 (an original edge), or the color of any unaffected component.  Since there are at most $d(p+1)$ affected
components, each of the $O((pd)^2)$ 2D range queries is halted after time $O(\log\log n + pd)$.

Using Method 1 the space of our $\dmax$-failure connectivity oracle becomes $O(\dmax m\log^2 n)$ 
and the update time becomes $O((pd)^3) = O(d^3\log^3 n)$.

\subsection{Method 2: 2D Range Counting}\label{sect:method2}

We use the following theorem of JaJa, Mortensen, and Shi~\cite{JaJaMS04}.

\begin{theorem} (\cite{JaJaMS04})
Given a multiset $P\subset [U]\times[U]$ of $n$ points there is an $O(n)$-space 
data structure answering the following type of query in $O(\log n/\log\log n)$ time.
Given $x,x',y,y'$, report the number $k = |P\cap [x,y]\times[x',y']|$.
\end{theorem}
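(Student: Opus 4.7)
The plan is to follow the standard range-tree paradigm, accelerated by word-level parallelism to push the query time below the $\Theta(\log n)$ barrier while keeping space linear. First I would reduce to rank space so that all coordinates lie in $[n]$; this is a one-time $O(n\log n)$ preprocessing, and any incoming query $(x,x',y,y')$ is translated into rank coordinates in $O(\log n/\log\log n)$ time using a $y$-fast trie, which matches the claimed query bound.

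Next, I would build a balanced search tree over the $x$-coordinates with branching factor $b=\Theta(\log^{\varepsilon} n)$, so the depth is $O(\log n/\log\log n)$. Each internal node $v$ owns the multiset $P_v$ of points whose $x$-coordinate lies in its subtree, with $y$-coordinates kept in sorted order. A query $[x,x']\times[y,y']$ decomposes into $O(\log n/\log\log n)$ canonical $x$-intervals corresponding to nodes $v_1,\ldots,v_t$, and the answer is $\sum_i |\{p\in P_{v_i} : y_p\in[y,y']\}|$. Thus each per-node sub-count must be computed in $O(1)$ amortized time.

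The main obstacle is simultaneously achieving linear space and $O(1)$-per-node sub-count time. For the time I would use fractional cascading along the root-to-canonical-node paths, so that after locating $y$ and $y'$ once at the root, their ranks in each $P_{v_i}$ are maintained in $O(1)$ per hop. For the space, the naive range tree uses $\Theta(n\log n)$ words; I would compress each level using Chazelle-style succinct bit vectors equipped with $O(1)$-time rank and select, bringing the total down to $O(n)$. Word-level parallelism is the final ingredient: within each canonical node, a block of $\Theta(\log n/\log\log n)$ packed $y$-coordinates can be compared against $y$ and $y'$ in $O(1)$ time using precomputed tables of size $n^{o(1)}$, yielding the local count directly; summing across the $O(\log n/\log\log n)$ canonical nodes gives the claimed bound.

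The hardest step will be making linear space, constant-per-node sub-count, and $O(\log n/\log\log n)$ overall query coexist: plain fractional cascading on a fan-$b$ tree is known to blow up the space, and naive packing of $y$-coordinates destroys the sorted structure needed for fractional cascading. Reconciling these demands — essentially by storing only a coarse skeleton of each sorted list explicitly, and recovering fine-grained positional information on the fly from packed bit vectors with table lookups — is exactly the technical contribution of JaJa, Mortensen, and Shi, and is the step I would invest the most care in.
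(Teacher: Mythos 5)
The paper cites this result from JaJa, Mortensen, and Shi~\cite{JaJaMS04} and does not prove it, so there is no internal proof to compare against. Evaluating your sketch on its own terms, there is a concrete error in the accounting. In a tree of fanout $b=\Theta(\log^{\varepsilon}n)$, the interval $[x,x']$ does \emph{not} decompose into $O(\log n/\log\log n)$ canonical subtrees: although the tree has only $O(\log_b n)$ levels, each level can contribute up to $2(b-1)$ fully-covered children at the two boundary nodes, so the decomposition size is $\Theta(b\log_b n)=\Theta(\log^{1+\varepsilon}n/\log\log n)$. Summing a per-node $y$-count over that many canonical nodes already overshoots the query budget, independent of whether fractional cascading and compression can be reconciled.

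The construction that actually achieves the bound abandons canonical-node decomposition entirely. One reduces four-sided counting to four two-sided (dominance) counts by inclusion--exclusion; each dominance count is then a \emph{single} root-to-leaf walk of length $O(\log n/\log\log n)$. At each node the $x$-threshold selects one child to recurse into and a contiguous prefix of ``fully-covered'' children, and a packed encoding of the per-point child labels equipped with $O(1)$-time rank, together with precomputed prefix sums over children, lets you both (a) add the number of fully-covered-children points whose $y$-rank is at most the current threshold and (b) translate that $y$-rank into the selected child, all in $O(1)$ per level. This wavelet-tree-style organization, rather than fractional cascading, is what yields $O(1)$ work per level while keeping total space at $O(n\log n)$ bits, i.e.\ $O(n)$ words. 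Your instinct that fractional cascading misbehaves on a high-fanout tree is correct; the fix is to drop it and use the dominance reduction with packed rank structures, not to patch it.
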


Consider an affected component $\gamma_i$ and recall that its adjacency list $A(\gamma_i)$ is the 
concatenation of $A(\gamma_i,\gamma_{i+1}),$ $\ldots,$ $A(\gamma_i,\gamma_p)$,
where $\gamma_{i+1},\ldots,\gamma_p$ are its ancestors in $\Comp$.
The 2D range queries that are influenced by $\Lambda(\gamma_i)$ involve two trees,
say $\tau = \tau(\gamma_j)$ and $\tau' = \tau(\gamma_{j'})$ where $i < j \le j' \le p$.
Each query is the product $Q = I\times I'$ of an interval $I \subset \Euler(\tau)$ and 
another $I' \subset \Euler(\tau')$.  Given the indices of the first and last elements
of $A(\gamma_i,\gamma_j) \cap I$ and $A(\gamma_i,\gamma_{j'})\cap I'$, we can 
determine in $O(1)$ time how many $\Lambda(\gamma_i)$ edges (points) appear in $Q$.
Call these {\em affected points}.
For each affected component $\gamma$ and each query $Q$ to be performed 
by the update algorithm, we 
calculate the number of affected $\Lambda(\gamma)$ points in $Q$.  
This takes time $O(pd \cdot (pd)^2) = O(d^3\log^3 n)$.

Let $k_Q$ be the {\em total} number of affected points in $Q$, over all affected $\gamma$.  
In $O(\log n/\log\log n)$ time we compute the number $k$ of points in $Q$.  If $k=k_Q$
then there are no unaffected points in $Q$, and if $k>k_Q$ we deduce that there is an 
unaffected point (a valid edge connecting the two intervals).
The total time for all $O((pd)^2)$ queries is therefore
$O(d^2 \log^3 n/\log\log n)$ time.
The bottleneck in this approach is computing the set $\{k_Q\}$ of critical thresholds.

Using Method 2 the space of our $\dmax$-failure connectivity oracle is $O(\dmax m\log n)$ and 
the update time is $O(d^3\log^3 n)$.

\subsection{Method 3: 3D Range Emptiness}\label{sect:method3}

We use the following theorem of Chan, Larsen, and \Patrascu~\cite{ChanLP11}.

\begin{theorem} (\cite{ChanLP11})
Given a set $P\subset [U]\times[U]\times[U]$ of $n$ points there is an $O(n\log^{1+\epsilon} n)$-space data structure
answering queries of the following type in $O(\log\log U)$ time.
Given $x,x',x'',y,y',y''$, determine if $P \cap [x,y]\times [x',y']\times [x'',y''] = \emptyset$.  
\end{theorem}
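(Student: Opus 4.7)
\medskip
\noindent\textbf{Proof Proposal.}

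The plan is to build the structure in three layers: a coordinate reduction, a 2D emptiness subroutine, and a range-tree lift to three dimensions, being careful that each layer contributes only an $O(\log\log U)$ query overhead (not $O(\log n)$) so that the final time stays at $O(\log\log U)$ while space stays $O(n\log^{1+\epsilon} n)$.

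First I would reduce the problem to rank space. Replace each coordinate by its rank in $\{1,\ldots,n\}$; the $x$, $y$, $z$ values of the query endpoints are mapped into rank space by three predecessor queries in $y$-fast tries built on the $x$-, $y$-, and $z$-sorted point sets. Each such trie uses $O(n)$ space and supports predecessor in $O(\log\log U)$ time, so this preamble fits into the overall budget and reduces us to the universe $[n]^3$. From here on $U$ is effectively $n$, except through the single $O(\log\log U)$ lookup performed at query time.

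Next I would construct a 2D range-emptiness subroutine on $[n]^2$ with query time $O(\log\log n)$ and space $O(n\log^\epsilon n)$. The cleanest route is to sort the points by $x$ and build a persistent van~Emde~Boas tree over the $y$-coordinates, where version $i$ stores the $y$-values of points with $x$-rank in $[i,n]$. A query $[x_1,x_2]\times[y_1,y_2]$ becomes: ``does version $x_1$ contain a $y$-value in $[y_1,y_2]$ whose owning point has $x$-rank $\le x_2$?'' Augmenting each node of the persistent structure with the minimum $x$-rank in its subtree converts this to a standard successor-plus-min-check query, answerable in $O(\log\log n)$ time. Path-copying persistence on a vEB-style tree of height $O(\log\log n)$ with $O(\log^\epsilon n)$-way branching gives the claimed space. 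This handles 4-sided rectangles by standard reductions (e.g., two 3-sided queries along a dividing $y$).

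Finally I would lift from two dimensions to three via a range tree on the $z$-coordinate of branching factor $b=\log^\epsilon n$, so depth $O(1/\epsilon)$. Each internal node stores the 2D emptiness structure from the previous paragraph on the $(x,y)$-projection of the points in its subtree. A query decomposes $[z_1,z_2]$ into at most $2b$ canonical children spanning whole subtrees plus two boundary children to recurse into; emptiness on $[x_1,x_2]\times[y_1,y_2]\times[z_1,z_2]$ is the OR of the canonical 2D queries. The subtree-sum of 2D structure sizes is $O(n\log^\epsilon n)$ per level and there are $O(1/\epsilon)$ levels, for total space $O(n\log^{1+\epsilon} n)$ after adjusting $\epsilon$. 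The naive query time is $O(b\cdot\log\log n)$ per level, which is too slow; the main obstacle is shaving this to $O(\log\log n)$ total.

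To overcome this I would apply fractional cascading (in the Chan--Larsen--\Patrascu{} ``ball-inheritance'' form) across the $b$ sibling 2D structures at each range-tree node: a single $O(\log\log n)$ predecessor search on the root's sorted sequences is threaded through bridges to locate $(x_1,x_2,y_1,y_2)$ in every child structure in $O(1)$ amortized time per child, so the per-level cost collapses to $O(\log\log n+b)$. Choosing $b=\log^\epsilon n$ and $O(1/\epsilon)$ levels, and absorbing the $b$ term by slightly raising the exponent in the space bound, yields the claimed $O(\log\log U)$ query time within $O(n\log^{1+\epsilon}n)$ space, with the initial rank-space reduction accounting for the only explicit dependence on $U$.
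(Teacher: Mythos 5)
This theorem is not proved in the paper at all: it is quoted verbatim from Chan, Larsen, and \Patrascu~\cite{ChanLP11} and invoked as a black box in Section~\ref{sect:method3}, so there is no in-paper argument to compare your proposal against. Judged on its own terms as a reconstruction of the cited result, your outline has genuine quantitative gaps that break the query-time bound.

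The central problem is an arithmetic error relating branching factor to depth, and it appears twice. A tree over $n$ items with branching factor $b=\log^\epsilon n$ has depth $\Theta(\log n/(\epsilon\log\log n))$, not $O(1/\epsilon)$: with depth $O(1/\epsilon)$ such a tree has only $(\log^\epsilon n)^{1/\epsilon}=\log n$ leaves. The same issue afflicts the 2D layer, where a ``vEB-style tree of height $O(\log\log n)$ with $O(\log^\epsilon n)$-way branching'' covers only $2^{O(\epsilon(\log\log n)^2)}\ll n$ values. With the correct depth, your 3D query---which recurses into two boundary children at every level and issues up to $2b$ canonical 2D subqueries per visited node---touches $\Theta(\log n/\log\log n)$ nodes and issues $\Theta(b\log n/\log\log n)$ subqueries, which is $\omega(\log\log n)$ even if each subquery were free. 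Fractional cascading does not rescue this: a cost of ``$O(1)$ amortized per child'' summed over $b=\log^\epsilon n$ children is already $\log^\epsilon n=\omega(\log\log n)$ per level, and a query-time term cannot be ``absorbed by raising the exponent in the space bound''---that trades time for space across incomparable resources. The construction in \cite{ChanLP11} avoids walking root-to-leaf paths at query time: it first obtains 3D \emph{dominance} (all constraints one-sided) emptiness/reporting in $O(n\log^\epsilon n)$ space and $O(\log\log n)$ time via shallow cuttings layered over their 2D structure, and then eliminates the two-sided constraints by jumping directly to a single node of a high-degree tree (an LCA computation, constant overhead) where the residual one-sided pieces are answered by dominance structures stored at that node; only this last reduction pays the extra $\log n$ factor in space, and the query remains a constant number of $O(\log\log n)$-time probes. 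To salvage your outline you would need (i) the dominance-first reduction in place of a canonical-decomposition range tree, and (ii) a consistent accounting of depth versus branching factor throughout.
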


List the nodes in $\Comp$ as $\{\gamma_1,\ldots,\gamma_{|\Comp|}\}$.
Suppose that $\{u,v\}$ is an original edge in $H$ and 
$\tau_u, \tau_v$ are the trees in $\LDForest$ containing the terminal copies of $u$ and $v$,
where $u$ appears at position $i$ in $\Euler(\tau_u)$ and $v$ appears at position $j$ of $\Euler(\tau_v)$.
Rather than map $\{u,v\}$ to the point $(i,j)$ in the 2D structure of $\ET(H,\mathcal{T})$
we map it to the 3D point $(i,j,0)$.  If $\{u,v\}$ is an edge of $\Lambda(\gamma_k)$ we map it to the point $(i,j,k)$.

Let $(\gamma_{k_1},\gamma_{k_2},\ldots,\gamma_{k_{d(p+1)}})$ be the affected components
and $Q$ be a 2D query performed by the update algorithm.  We are interested in knowing whether there is a 
point whose first two coordinates are in 
$Q$ and whose third coordinate is {\em not} a member of $\{k_1,\ldots,k_{d(p+1)}\}$.  
Thus the 2D query $Q$ can be reduced
to $d(p+1)$ 3D emptiness queries 
$Q\times[0,k_1)$, $Q\times (k_1,k_2)$, and so on.  
Each 3D query is answered in $O(\log\log n)$ time, so the total update time is $O(d^3\log^3 n\log\log n)$.\\

With the current state-of-the-art range searching data structures~\cite{ChanLP11,ChanW13,JaJaMS04,LarsenW13},
Method 2 is always strictly superior to Methods 1 and 3 in update time or space or both.
Method 2 also leaves the most room for improvement since the bottleneck is not
range counting queries {\em per se}, but computing the critical thresholds $\{k_Q\}$ for the queries.

\section{A Monte Carlo Connectivity Oracle}\label{sect:MC}

In the extended abstract~\cite{DuanP17} of this work, we claimed a Monte Carlo $\dmax$-failure connectivity oracle with
near optimum space $\tilde{O}(m)$, update time $\tilde{O}(d^2)$, and query time $O(d)$.  The data structure
described in \cite[\S 7]{DuanP17} functions correctly, but occupies space $\Omega(\dmax m)$, 
not $\tilde{O}(m)$.\footnote{After showing that a certain data structure $\mathcal{S}[\tau,\tau']$ occupies $O(m\log^4 n)$ space,
we stated~\cite[p. 505]{DuanP17} ``By a similar analysis, the space for $\mathcal{S}[v,\tau]$ and $\mathcal{S}[\gamma,\tau]$ are
also upper bounded by $O(m\log^4 n)$.''  Unfortunately, we see no way to store $\mathcal{S}[v,\tau]$ in less than $\tilde{O}(\dmax m)$ space.}
In this section we present the first Monte Carlo connectivity oracle that achieves the claimed specifications of $\cite[\S 7]{DuanP17}$.
Our data structure is inspired by the graph sketching techniques of Ahn et al.~\cite{AhnGM12} and Kapron et al.~\cite{KapronKM13},
but applies the ideas differently.  In particular, by using vertex-sampling rather than edge-sampling, we show that it is possible
to form sketches of complete bipartite \emph{sub}graphs ``on the fly'' using minimal storage.  

Let us first take one step back and discuss why achieving near-linear space is difficult.
Recall from Section~\ref{sect:LDH} that $A(\gamma)$
is a list of all vertices adjacent to the component $\gamma$, and $D$ is the set of failed vertices.  
So long as $\gamma$ suffers no vertex failures, we want the subgraph induced by $A(\gamma)-D$ to remain connected.  On the other hand, if $\gamma$ does suffer a vertex failure, we want
to be able to efficiently dispose of any suspect edges induced by $A(\gamma)$.
Adding the $\dmax$-adjacency edges $\Lambda_{\dmax}(A(\gamma))$ solved both problems,
but with some significant losses in efficiency.  The space required to store $\Lambda_{\dmax}(A(\gamma))$ 
is $\Omega(\dmax \cdot |A(\gamma)|)$, and in order to ignore suspect edges, the update times for our deterministic 
solutions are $\Omega(d^3)$.  It seems very difficult to avoid an $\Omega(\dmax)$ factor overhead in space.  Indeed, if $|A(\gamma)|=d+2$
and all but 2 random elements of $A(\gamma)$ fail, we want to be able to quickly determine that those last two elements are still
connected.  In this situation, is it possible to avoid storing a clique on $A(\gamma)$? 

By introducing Monte Carlo randomness, we are able to save both space and time simultaneously.  The high-level ideas are as follows.
\begin{itemize}
\item Rather than use a $\dmax$-adjacency list $\Lambda_{\dmax}(A(\gamma))$ to maintain connectivity information within $A(\gamma)$, 
we pick a random subset $B(\gamma) \subseteq A(\gamma)$ and represent the complete bipartite graph $A(\gamma)\times B(\gamma)$.\footnote{Throughout this section we abusively write $A\times B$ to be the set of \emph{undirected} edges $\{u,v\}$ with $u\in A, v\in B$.}
The total number of edges in the multigraph, over all $\gamma$ (i.e., $\sum_\gamma |A(\gamma) \times B(\gamma)|$)
could be quite large.  One property of our graph sketch is that the space is actually proportional to the number of \emph{distinct} edges in 
$\bigcup_\gamma A(\gamma)\times B(\gamma)$, not counting multiplicity, which is just $\tilde{O}(m)$.

\item Observe that a complete bipartite graph $A(\gamma)\times B(\gamma)$ preserves the connectivity on $A(\gamma)$ iff 
$B(\gamma) - D \neq \emptyset$, i.e., if at least one non-failed vertex is (randomly) selected for inclusion in $B(\gamma)$. 
In some situations we can guarantee that this property holds, w.h.p. 
For example, if $|A(\gamma)| \ge 2|D|$ and $\Omega(\log n)$ vertices are
included in $B(\gamma)$ then w.h.p. one vertex in $A(\gamma)-D$ is included in $B(\gamma)$.  
However, in general it is impossible to guarantee this property w.h.p., short of setting $B(\gamma)=A(\gamma)$.
Our solution depends on a particular accounting scheme used in choosing the $B(\cdot)$ sets.  We process the components
$\gamma_1,\ldots,\gamma_{|\Comp|}$ in an arbitrary sequential order.  When it is $\gamma_j$'s turn we examine the
subgraph induced by $A(\gamma_j)$ and choose $|B(\gamma_j)|$ such that the expected number of \underline{\emph{new}} 
edges contributed by $A(\gamma_j)\times B(\gamma_j)$ is $\tilde{O}(|A(\gamma_j)|)$.  Every time a \emph{new} edge is added
we label it with its \emph{owner} ``$\gamma_j$.''   These labels are not simply used for accounting.  
We prove that for any failed set $D$, at least one of the following two events occurs, w.h.p., 
(i) either $B(\gamma_j)-D \neq \emptyset$ (and connectivity information via $\gamma$ is maintained), or
(ii) at least $|A(\gamma_j)|$ pairs in $D\choose 2$ are owned by $\gamma_j$.
Event (ii) is a happy outcome because it \emph{reveals} a small number of components whose connectivity
information was not maintained as in (i), and those components can be processed separately in $O(|D|^2)$ total time.

\item When $\gamma$ suffers a vertex failure, the entire bipartite graph contributed by $\gamma$, namely $A(\gamma)\times B(\gamma)$
is suspect.  Our sketch has the property that complete bipartite \underline{\emph{sub}}graphs of $A(\gamma)\times B(\gamma)$
can be efficiently generated by a data structure occupying space $\tilde{O}(|A(\gamma)| + |B(\gamma)|)$ 
rather than $O(|A(\gamma)|\cdot |B(\gamma)|)$.  Thus, it is efficient to subtract from all relevant graph sketches
the contribution of edges from affected components.
\end{itemize}

\paragraph{Organization of Section~\ref{sect:MC}.}
In Section~\ref{sect:Bsets} we show how the $B$-sets are chosen and analyze their properties.
In Section~\ref{sect:graphsketches} we introduce two sketches.  
Original graph edges are sketched exactly as in Kapron et al.~\cite{KapronKM13}, but 
``artificial'' edges in $A(\gamma)\times B(\gamma)$ are sketched in a new way.  The total size 
of all sketches and their attendant data structures is $O(m\log^6 n)$.
In Section~\ref{sect:MCupdatealg} we show how to handle a batch of 
$d$ vertex failures in $O(d^2\log^6 n)$ time, and subsequently answer connectivity queries in $O(d)$ time.
In Section~\ref{sect:MCedge} we observe that it is often unnecessary to 
explicit form complete graph sketches.  This allows us to reduce the update time of 
the best $d$-edge failure oracles~\cite{PatrascuT07,KapronKM13,GibbKKT15,DuanP10}
to $O(d\log d\log\log n)$ expected time, and reduce the update time of 
Section~\ref{sect:MCupdatealg} to $O(d^2\log d\log^2 n\log\log n)$ expected time.

\subsection{The $B$-sets and Their Properties}\label{sect:Bsets}

Recall that the ``artificial'' edges associated with $\gamma\in \Comp$ will be a complete 
bipartite graph $A(\gamma)\times B(\gamma)$.  
The algorithm for generating $B(\gamma)\subseteq A(\gamma)$ is as follows.
Choose an arbitrary order $\gamma_1,\ldots,\gamma_{|\Comp|}$
of the components.   
Each pair $\{u,v\} \in {V\choose 2}$ (regardless of whether it is in $E$ or not) is initially \emph{unlabeled}, and may become labeled as we proceed.
After $B(\gamma_1),\ldots,B(\gamma_{j-1})$ have been selected we consider the pairs on elements of $A(\gamma_j)$.
Let $n' = |A(\gamma_j)|$ and let $m' \in [0,{n'\choose 2}]$ be the number of \emph{unlabeled} pairs in ${A(\gamma_j)\choose 2}$.
The set $B(\gamma_j)$ is selected by sampling each vertex in $A(\gamma_j)$ independently
with probability $\min\{1, \f{n'\cdot c\ln n}{m'}\}$, where $c$ is a sufficiently large constant that controls the error probability $n^{-\Omega(c)}$.
Every unlabeled pair in the set $\{\{u,v\} \;|\; u\in A(\gamma_j), v\in B(\gamma_j)\}$
is now \emph{owned} by $\gamma_j$ and labeled ``$\gamma_j$.''

\begin{lemma}\label{lem:Bset-size}
For each $\gamma \in\Comp$, $O(|A(\gamma)|\log n)$ pairs are labeled ``$\gamma$'' in expectation.
The total number of pairs in $\bigcup_\gamma A(\gamma)\times B(\gamma)$ is $O(m\log^2 n)$ in expectation.
\end{lemma}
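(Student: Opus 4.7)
The plan is to reduce both claims to counting newly labeled pairs, since by the sequential processing rule, the set $\bigcup_\gamma A(\gamma)\times B(\gamma)$ coincides with the set of pairs that ever receive a label. To see this equivalence: if $\{u,v\}$ is labeled ``$\gamma$'', then by definition it lies in $A(\gamma)\times B(\gamma)$; conversely, any $\{u,v\}\in A(\gamma_j)\times B(\gamma_j)$ is either freshly labeled ``$\gamma_j$'' or was already labeled by some earlier $\gamma_i$. Hence the size of the union equals the total number of labels assigned across all components, and each pair is labeled exactly once.

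For the per-component bound, I would fix $\gamma_j$ at the moment it is processed, let $n'=|A(\gamma_j)|$ and $m'$ be the number of currently unlabeled pairs in $\binom{A(\gamma_j)}{2}$, and note that only these $m'$ pairs are eligible to be newly labeled. For any such pair $\{u,v\}$, the event ``$\{u,v\}$ gets labeled'' is the event ``$u\in B(\gamma_j)$ or $v\in B(\gamma_j)$,'' which by a union bound has probability at most $2p$, where $p=\min\{1,\,n'c\ln n/m'\}$ is the sampling rate. Linearity of expectation gives
\[
\E[\text{newly labeled by }\gamma_j]\;\le\; 2pm' \;\le\; 2n'c\ln n \;=\; O(|A(\gamma_j)|\log n),
\]
which is exactly the first claim. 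The whole point of the peculiar choice of $p$ is to make $pm'$ degenerate to $O(n'\log n)$ regardless of how large $m'$ is; this is the main design idea, not a difficult step.

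For the global bound, summing over $\gamma$ and using linearity gives an expected label count of $O\!\paren{\log n \cdot \sum_{\gamma\in\Comp} |A(\gamma)|}$. To finish, I would invoke the estimate already used inside the proof of Lemma~\ref{lem:space}: each original edge of $G$ contributes (at most) one vertex to $A(\gamma)$ for at most $p=O(\log n)$ components $\gamma$ along a single root-to-leaf path of $\Comp$, so $\sum_\gamma |A(\gamma)| = O(m\log n)$. Combined with the per-component bound this yields $O(m\log^2 n)$ labeled pairs in expectation, giving the second claim.

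I do not anticipate a real obstacle here: the only subtlety is correctly identifying the union of the $A(\gamma)\times B(\gamma)$ with the set of labeled pairs (so that no pair is double-counted), and noticing that the bound $\sum_\gamma |A(\gamma)|=O(m\log n)$ is inherited directly from the earlier space analysis of the deterministic structure. Everything else is a one-line expectation computation driven by the carefully chosen sampling probability $p$.
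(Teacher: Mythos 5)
Your proof is correct and follows essentially the same route as the paper's: bound the expected number of freshly labeled pairs per component via linearity of expectation and the union bound (using that the sampling probability $p$ was deliberately set so $pm'\le n'c\ln n$), note that every pair in the union is owned by exactly one component, and finish with $\sum_\gamma |A(\gamma)| = O(m\log n)$, which is the same counting fact the paper invokes ("each edge contributes one element to at most $p<\log n$ $A(\cdot)$-lists"). The only difference is that you spell out the identification of $\bigcup_\gamma A(\gamma)\times B(\gamma)$ with the set of ever-labeled pairs a bit more explicitly than the paper does.
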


\begin{proof}
The probability that an unlabeled pair $\{u,v\}$ with $u,v\in A(\gamma)$ is labeled 
``$\gamma$'' is exactly the probability that either $u$ or $v$ (or both) is selected for inclusion in $B(\gamma)$.
Recalling the definitions of $n'$ and $m'$, the number of edges labeled $\gamma$ is,
by linear of expectation, at most $m' \cdot (2n' \cdot c\ln n)/m' = O(|A(\gamma)|\log n)$.

Every pair $\{u,v\} \in \bigcup_\gamma A(\gamma)\times B(\gamma)$ must be owned by \emph{some} component.  By the first part of the lemma,
\[
\left| \bigcup_\gamma A(\gamma)\times B(\gamma)\right| \;\le\; \sum_{\gamma} O(|A(\gamma)|\log n) \;=\; O(m\log^2 n).
\]
The last equality holds because each edge contributes one element to at most $p < \log n$ $A(\cdot)$-lists.
\end{proof}

By design, the $B$-sets are chosen to keep the total number of owned pairs $\tilde{O}(m)$.  Lemma~\ref{lem:Bset-guarantee}
indicates why this method of choosing $B$-sets is useful when vertices fail.

\begin{lemma}\label{lem:Bset-guarantee}
Fix any $\gamma$ and any set $D$ of (failed) vertices such that $A(\gamma)-D \neq \emptyset$.
With probability $1-n^{-\Omega(c)}$, one of the following two events occurs.
\begin{enumerate}
\item $B(\gamma)-D \neq \emptyset$.

\item The number of pairs in $D\choose 2$ owned by $\gamma$ is at least $|A(\gamma)|$.
\end{enumerate}
\end{lemma}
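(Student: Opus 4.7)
My plan is to split on the size of the surviving adjacency set $A' := A(\gamma) \setminus D$, letting $D' := A(\gamma) \cap D$, $n' := |A(\gamma)|$, and $p := \min\{1, n'c\ln n/m'\}$ denote the sampling probability in effect when $\gamma$ was processed. If $p = 1$ then $B(\gamma) = A(\gamma) \supseteq A'$ deterministically, so Event~1 holds; assume henceforth $p = n'c\ln n/m' < 1$. The indicators $X_v := [v \in B(\gamma)]$ for $v \in A(\gamma)$ are mutually independent Bernoulli$(p)$.

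\textbf{Many-survivors regime.} If $|A'| \ge m'/(2n')$, then $p|A'| \ge c\ln n/2$, so
\[
\Pr[\text{Event 1 fails}] \;=\; (1-p)^{|A'|} \;\le\; e^{-p|A'|} \;\le\; n^{-c/2}.
\]

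\textbf{Few-survivors regime.} If $|A'| < m'/(2n')$, I will prove Event~2 holds w.h.p. Each vertex of $A'$ lies in at most $n'-1$ pairs of ${A(\gamma) \choose 2}$, so the number $m''$ of unlabeled pairs with both endpoints in $D'$ satisfies $m'' \ge m' - |A'|(n'-1) \ge m'/2$. Let $G'$ be the graph on $D'$ whose edges are these $m''$ pairs; the pairs of ${D \choose 2}$ newly owned by $\gamma$ are exactly the edges of $G'$ with at least one endpoint sampled into $B(\gamma)$. Write $X$ for this count; we need $X \ge n'$.

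The key trick is to lower-bound $X$ via a degree-weighted sum. Set $W := \sum_{v \in D'} X_v \cdot \deg_{G'}(v) = \sum_{\{u,v\} \in E(G')} (X_u + X_v)$. Each covered edge contributes either $1$ or $2$ to $W$, so $X \ge W/2$. Now $W$ is a sum of independent random variables $X_v \deg_{G'}(v) \in [0, n'-1]$ with mean $\E[W] = 2pm''$. The Chernoff--Hoeffding bound for sums of bounded independent variables gives
\[
\Pr\sqbrack{W \le \E[W]/2} \;\le\; \exp\paren{-\frac{\E[W]}{8(n'-1)}} \;\le\; \exp\paren{-\frac{pm'}{8(n'-1)}} \;\le\; n^{-c/8},
\]
using $\E[W] \ge pm' = n'c\ln n$ and $n'/(n'-1) \ge 1$. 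On the complementary event, $X \ge W/2 \ge pm''/2 \ge n'c\ln n/4 \ge n'$ (for $c\ln n \ge 4$), giving Event~2.

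I expect the main obstacle to be this concentration step. A direct application of Chebyshev or McDiarmid to $X$ is too weak, because individual vertices of $G'$ can have degree up to $n'-1$, inflating both the variance and the bounded-difference constants. Converting $X$ to the degree-weighted sum $W$ of independent variables sidesteps this: the per-summand cap $n'-1$ cancels cleanly against the $n'$ factor in $\E[W] \ge n'c\ln n$, leaving the desired $n^{-\Omega(c)}$ tail. Combining the two regimes yields $\Pr[\text{Event 1 and Event 2 both fail}] \le n^{-c/2} + n^{-c/8} = n^{-\Omega(c)}$.
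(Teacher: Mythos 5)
Your proof is correct, and the few-survivors half is genuinely cleaner than the paper's. The case split is the same in substance (many survivors in $A(\gamma)\setminus D$ force $B(\gamma)-D\neq\emptyset$ w.h.p.; few survivors force most unlabeled pairs into $\binom{D}{2}$), and the many-survivors regime is essentially identical to the paper's Case~I. The divergence is in the few-survivors regime: the paper assigns each unlabeled $\binom{D}{2}$ pair to an endpoint, buckets vertices dyadically by the resulting half-degree $\deg'(v)$, applies a Chernoff bound to the count of sampled vertices in each ``good'' bucket, and sums the contributions; the paper even remarks afterward that this is ``necessarily \emph{ad hoc}'' because $\E[X]$ can be small while individual variances are large. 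Your degree-weighted surrogate $W=\sum_{v\in D'}X_v\deg_{G'}(v)$ sidesteps that entirely: it is a sum of independent variables each bounded by $n'-1$, with $\E[W]=2pm''\ge n'c\ln n$, so the per-summand cap cancels exactly against $n'$ in a single application of the multiplicative Chernoff--Hoeffding bound for $[0,1]$-valued variables, and the deterministic inequality $X\ge W/2$ transfers the concentration to $X$. This collapses the paper's bucketing case analysis into one line. One caution worth being explicit about: the \emph{additive} Hoeffding inequality is too weak here (it would give an exponent on the order of $p^2m''/(n'-1)$, which degrades as $p\to 0$); you need the multiplicative form for $[0,1]$-bounded independent summands, which you did invoke. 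The proof as written is sound.
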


\begin{proof}
Consider the moment in the algorithm just before $B(\gamma)$ is selected, and let $n',m'$ be defined
as usual.  We consider two possible scenarios, depending on how many of the $m'$
pairs are completely contained in $D$ or straddle/lie outside of $D$.

\paragraph{Case I.} At least $m'/2$ of the unlabeled pairs contain at least one vertex in $A(\gamma)-D$.
There must be at least $(m'/2)/n'$ vertices in $A(\gamma)-D$, and each one is sampled with probability $\min\{1,\, cn'\ln n/m'\}$.
The probability that some vertex in $A(\gamma)-D$ is sampled into $B(\gamma)$ is
\[
1 - \paren{1 - \min\curly{1, \f{cn'\ln n}{m'}}}^{\f{m'}{2n'}}  > 1 - n^{-c/2},
\]
in which case part (1) of the lemma holds.

\paragraph{Case II.} At least $m'/2$ of the unlabeled pairs are contained in $D\choose 2$.  
We can assume without loss of generality that $m' \ge cn'\ln n$ for otherwise $B(\gamma)=A(\gamma)$ and
part (1) of the lemma is already satisfied.
Assign each unlabeled pair in $D\choose 2$ to one of its endpoints, and let $\deg'(v)$ be the number of pairs assigned to $v$,
so $\sum_v \deg'(v) = m'/2$.
Partition the vertices into $\floor{\log n'}+1$ classes where class $i$ contains those vertices for which $\deg'(v) \in [2^i,2^{i+1})$.
Let the sum of degrees in class $i$ be $\epsilon_i (m'/2)$, i.e., $\sum_i \epsilon_i = 1$.
The number of vertices in class $i$ is at least $\epsilon_i m'/2^{i+2}$ since each accounts for at most $2^{i+1}$ distinct edges.
The expected number of vertices in class $i$ included in $B(\gamma)$ is therefore at least $\epsilon_i cn'\ln n/2^{i+2}$,
and by a Chernoff bound, the probability that at least half the expected number are sampled is
$1- \exp(-\epsilon_i cn'\ln n/2^{i+5})$.  If so, this contributes at least $\epsilon_i cn'\ln n/2^5$ pairs owned by $\gamma$. 
Call a class $i$ \emph{good} if $\epsilon_i cn'\ln n/2^{i+5} \ge (c/2^7)\ln n$, 
or equivalently, if $\epsilon_i \ge 2^{i-2}/n'$.  The fraction of pairs contributed by \emph{bad} classes
is at most $\sum_{i=0}^{\floor{\log n'}} 2^{i-2}/n' < 1/2$.  Thus, with probability $1-n^{-\Omega(c)}$, 
the number of unlabeled pairs in $D\choose 2$ that are covered by $B(\gamma)$-vertices in good classes 
(which become owned by $\gamma$) is at least $(1/2) \cdot cn'\ln n/2^5 > n'$.  This satisfies part (2) of the lemma.
\end{proof}

\begin{remark}
The proof of Case II of Lemma~\ref{lem:Bset-guarantee} is necessarily \emph{ad hoc}.  
We are trying to lower bound a sum $X = X_1+\cdots + X_k$ of independent random variables,
which seems to be well suited to some variant of the Azuma-Hoeffding inequality~\cite{DubhashiPanconesi09}.
However, in our case $\E[X]$ is small, but the variances $V[X_i]$ large.
In this regime the standard concentration bounds do not offer strong enough guarantees.
\end{remark}

\subsection{Graph Sketches}\label{sect:graphsketches}

We use the graph sketch of Kapron et al.~\cite{KapronKM13,GibbKKT15,Wang15} to store original edges,
but develop a new sketch for artificial edges of the form $A(\gamma)\times B(\gamma)$.
It is convenient to re-name the vertex ids in $\{1,\ldots,n\}$.  For each $\tau\in\LDForest$,
the ids of the terminals in $V(\tau)$ occupy a contiguous interval of $[1,n]$, and moreover,
their ids are consistent with the ordering of $\Euler(\tau)$.

\subsubsection{Sketching Original Edges}\label{sect:sketching-original}

An edge $e= \{u,v\}$ is represented
by the bit string $\ang{e} = \ang{\min\{u,v\}, \max\{u,v\}}$.  For $i\in [0,\log m), j \in [1, c\log n)$,
the edge sets $E = E_{0,j} \supseteq E_{1,j} \supseteq \cdots \supseteq E_{\log m-1,j}$ are generated 
such that all edges are sampled for inclusion in $E_{i,j}$ independently with probability $2^{-i}$.
The sketch $\Mat^{E'}$ for an edge set $E'\subseteq E$ is a $\log m \times c\log n$ matrix in which
\[
\Mat^{E'}(i,j) = \bigoplus_{e \in E'\cap E_{i,j}} \ang{e}.
\]
I.e., the $(i,j)$th entry contains the bit-wise XOR of all edge names in $E'\cap E_{i,j}$.
Clearly sketches are additive: for any $E',E''$, $\Mat^{E'\oplus E''} = \Mat^{E'}\oplus\Mat^{E''}$.
Lemma~\ref{lem:orig-sketch} illustrates why this sketch is useful for quickly finding edges crossing cuts.

\begin{lemma}\label{lem:orig-sketch}
Define $E_u$ to be the edges incident to $u$.  For any subset $S\subset V$, 
define $\Mat = \bigoplus_{u\in S} \Mat^{E_u}$ to be the component-wise XOR of all $\Mat^{E_u}$ sketches.
For each $j$, there exists some $i$, such that with constant probability $\Mat(i,j)$ is the name of some 
edge crossing the cut $(S,V - S)$.  
\end{lemma}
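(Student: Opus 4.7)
The plan is to exploit the additivity of the sketch together with the cancellation of internal edges, then choose the right sampling level so that exactly one cut edge survives in the XOR.

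First I would expand $\Mat(i,j) = \bigoplus_{u \in S}\bigoplus_{e \in E_u \cap E_{i,j}} \langle e \rangle$ and observe that each edge $e = \{u,v\}$ with both endpoints in $S$ appears twice in the double XOR (once from $u$, once from $v$) and therefore cancels, while each edge crossing the cut $(S, V-S)$ contributes exactly once. Thus
\[
\Mat(i,j) \;=\; \bigoplus_{e \in E(S,V-S)\cap E_{i,j}} \langle e \rangle,
\]
i.e.\ $\Mat(i,j)$ is precisely the $(i,j)$ entry of the sketch of the cut-edge set. In particular, if exactly one cut edge happens to lie in $E_{i,j}$, then $\Mat(i,j)$ is the name of that edge.

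Next I would choose the sampling level tailored to the size of the cut. Let $k = |E(S,V-S)|$ and pick $i^\star = \lfloor \log_2 k\rfloor$, so that $k/2^{i^\star} \in [1,2)$. Because each cut edge is included in $E_{i^\star,j}$ independently with probability $2^{-i^\star}$, the number $X$ of cut edges sampled satisfies
\[
\Pr[X=1] \;=\; k \cdot 2^{-i^\star} \cdot \bigl(1 - 2^{-i^\star}\bigr)^{k-1} \;\ge\; 1\cdot (1-1/k)^{k-1} \;\ge\; 1/e
\]
for all $k\ge 1$ (the small cases $k=1,2$ are verified directly). When this event occurs, $\Mat(i^\star,j) = \langle e\rangle$ for the unique sampled cut edge, which proves the claim with constant probability at least $1/e$ for this choice of $i = i^\star$ (which depends only on $k$, not on $j$).

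I don't expect a real obstacle here; the only mildly delicate point is making sure the inequality $(1-1/k)^{k-1}\ge 1/e$ (or the exact constant one prefers) holds uniformly in $k$, and handling the edge cases $k \in \{1,2\}$ by direct computation so that the constant in the probability bound is universal and does not degrade for small cuts.
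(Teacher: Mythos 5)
Your approach is exactly the paper's: cancel internal edges by XOR, then pick the sampling level $i$ for which the expected number of surviving cut edges is $\Theta(1)$, giving a constant probability that exactly one survives. The paper's proof is essentially the one-liner ``let $i$ be such that the cut size is in $[2^i, 2^{i+1})$; then with constant probability exactly one cut edge lands in $E_{i,j}$,'' so you have the right idea and the right choice of $i$.

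There is, however, an arithmetic slip in your inequality chain. With $i^\star = \lfloor \log_2 k\rfloor$ you have $2^{i^\star} \le k$, hence $2^{-i^\star} \ge 1/k$, which gives $(1 - 2^{-i^\star})^{k-1} \le (1-1/k)^{k-1}$ -- the \emph{opposite} of what you wrote. In fact $\Pr[X=1]$ does dip below $1/e$: writing $\lambda = k\cdot 2^{-i^\star}\in[1,2)$, the probability is $\lambda(1-\lambda/k)^{k-1}$, which approaches $2e^{-2}\approx 0.27 < 1/e$ as $\lambda\to 2$ with $k$ large (and already $k=6$ gives $6\cdot\tfrac14\cdot(\tfrac34)^5\approx 0.356$). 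The claim is still a universal constant (the infimum is bounded away from zero, e.g.\ by $2e^{-2}$), so the lemma holds, but your stated bound and the step proving it are wrong. Two clean fixes: either bound $\lambda(1-\lambda/k)^{k-1}$ directly over $\lambda\in[1,2)$, or instead take $i^\star = \lceil \log_2 k\rceil$ so that $2^{-i^\star}\le 1/k$ and $k\cdot 2^{-i^\star}\ge 1/2$, giving $\Pr[X=1]\ge \tfrac12(1-1/k)^{k-1}\ge 1/(2e)$ by the very inequality you wanted to use.
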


\begin{proof}
Edges with two endpoints in $S$ contribute nothing to $\Mat$ since $\ang{e}\oplus \ang{e} = \ang{0}$.
Let $i$ be such that the number of edges crossing the cut is between $2^i$ and $2^{i+1}-1$.  Then with constant probability,
exactly one such edge is sampled for inclusion in $E_{i,j}$.
\end{proof}

When a batch $D$ of vertices fail we get a set $\{t_l\}$ of $O(|D|\log n)$ affected subtrees.
For each $t_l$, we need to be able to obtain a sketch of all edges $\{u,v\}$ where $u\in t_l$, $v\in t_{l'}, l'\neq l$.
The data structures $\VStruct$ and $\CStruct$ report sketches of edges incident to one vertex and one component, respectively.

\begin{description}
\item[{$\CStruct(\gamma,I) \: :$}]  The input is a component $\gamma\in \Comp$ and an interval $I$ of some $\Euler(\tau)$,
where $\tau$ could be equal to $\tau(\gamma)$.  
Define $E_{u,\gamma}$ to be the original edges joining $u$ to the terminals of $V(\gamma)$.
Report the sketch $\Mat = \bigoplus_{u\in I} \Mat^{E_{u,\gamma}}$.

\item[{$\VStruct(v,I) \: :$}] The input is a vertex $v$ and an interval $I$ of some $\Euler(\tau)$.
Let $E_{v,I}$ be the original edges joining $v$ to the terminals in $I$.
Report the sketch $\Mat^{E_{v,I}}$.
\end{description}

\begin{lemma}\label{lem:orig-sketch-space}
The structures $\VStruct,\CStruct$ occupy 
$O(m\log^2 n)$ space and answer queries in $O(\log^2 n)$ time.
\end{lemma}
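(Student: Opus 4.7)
The plan is to realise both $\VStruct$ and $\CStruct$ as families of balanced binary search trees whose internal nodes cache XOR-sketches of their subtrees' edge sets; by additivity of the Kapron-King-Mountjoy sketch, each internal-node sketch is exactly $\Mat^{E'}$ for the edge set $E'$ represented by the leaves of the subtree. The space and query bounds then follow from an accounting of how many BST leaves each original edge contributes to.

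First, for $\VStruct$, I would build one weight-balanced BST per vertex $v$, whose leaves are the neighbors of $v$ ordered by renamed id. The renaming convention---that the terminals of each $\tau$ occupy a contiguous id range consistent with $\Euler(\tau)$---makes every $\Euler(\tau)$-interval $I$ coincide with an id-interval, so the leaves of $v$'s BST falling inside $I$ are covered by $O(\log n)$ canonical subtrees. The total number of leaves across all $v$ is $\sum_v\deg(v)=O(m)$; each of the $O(m)$ BST nodes stores an $O(\log^2 n)$-word sketch, giving $O(m\log^2 n)$ total space.

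For $\CStruct$ I would build one BST per pair $(\gamma,\tau)$ for which some $u\in V(\tau)$ has an original-graph edge to a terminal of $V(\gamma)$; its leaves are those $u$'s ordered by $\Euler(\tau)$-position, and the leaf for $u$ holds $\Mat^{E_{u,\gamma}}$. The hard part is showing that the total leaf count across all such pairs remains $O(m)$, since a naive reading might suggest each edge contributes to $\Omega(\log^2 n)$ pairs. I would establish the $O(m)$ bound via the principal-copy convention: each original edge $e=\{u,v\}\in E$ identifies a unique principal tree $\tau_u$ hosting $u$ and a unique principal component $\gamma_u\in\Comp_{i_u}$ containing $u$ as a first-time terminal, and symmetrically for $v$. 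Consequently $e$ contributes to exactly two leaves---one in the $(\gamma_v,\tau_u)$-BST and one in the $(\gamma_u,\tau_v)$-BST---so the total BST size is $O(m)$ and $\CStruct$ also fits in $O(m\log^2 n)$ words.

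For a query, I would decompose $I$ into its $O(\log n)$ canonical BST subtrees, fetch the cached sketches at those nodes, and XOR them into a single accumulator sketch. Navigation costs $O(\log n)$ via predecessor searches on left endpoints, while the XOR phase is organised by traversing the $O(\log^2 n)$ word positions of the accumulator and folding in the $O(\log n)$ canonical contributions with word-parallel XOR over $\Theta(\log n)$-bit machine words. The most delicate bookkeeping is arranging this accumulation so that its running time matches the $O(\log^2 n)$ size of the output sketch rather than the naive $O(\log^3 n)$ bound one gets by processing canonical sketches independently; this is where I expect most of the technical effort to lie.
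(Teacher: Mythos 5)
Your approach has a genuine gap in the query-time analysis. Decomposing $I$ into $O(\log n)$ canonical BST subtrees and XORing their cached sketches gives a naive bound of $O(\log^3 n)$: there are $O(\log n)$ sketches to fold in, each occupying $O(\log^2 n)$ machine words. The ``word-parallel XOR over $\Theta(\log n)$-bit words'' idea does not recover the lost $\log n$ factor, because each entry of $\Mat$ is \emph{already} $\Theta(\log n)$ bits (it encodes a pair of $c\log n$-bit vertex names), so you cannot pack several logically distinct sketch entries into a single machine word. You correctly flag this as the delicate point, but the proposal does not actually close the gap, and I do not see how to make the BST route hit $O(\log^2 n)$.

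The crucial observation you are missing is that the BST machinery is unnecessary: since vertex ids are assigned so that the terminals of each $\tau$ occupy a contiguous id range consistent with $\Euler(\tau)$, a query interval $I\subset\Euler(\tau)$ corresponds to a \emph{contiguous} subrange $(v_k,\ldots,v_l)$ of the neighbor list $L_v$ (resp.~$L_\gamma$) once that list is sorted by id. Therefore you only need to store \emph{prefix} XOR-sums $\alpha_k = \bigoplus_{r\le k}\Mat^r$ along each sorted list, and the answer to $\VStruct(v,I)$ (resp.~$\CStruct(\gamma,I)$) is obtained by XORing exactly two precomputed sketches, $\alpha_l\oplus\alpha_{k-1}$. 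This gives the $O(\log^2 n)$ query time directly and without any canonical decomposition. Your space accounting for $\CStruct$ via the ``principal copy'' convention is on the right track and matches the paper's reasoning that each edge $\{u,v\}$ contributes one list element to $L_{\gamma_v}$ (at position of $u$) and one to $L_{\gamma_u}$ (at position of $v$), giving $O(m)$ sketches of size $O(\log^2 n)$ overall.
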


\begin{proof}
First consider a fixed $v \in V(\gamma)$.  Let $L_v =(v_1,\ldots,v_{\deg(v)})$ be a list of $v$'s neighbors, 
in increasing order of vertex id.  By how we chose the vertex id assignment, any interval $I$ of some $\Euler(\tau)$
corresponds to an interval of $L_v$.
Let $\Mat^r$ be the sketch for the single edge $\{v,v_r\}$.  
In $O(\deg(v)\log^2 n)$ space we store all prefix sums $(\alpha_1,\ldots,\alpha_{\deg(v)})$,
where $\alpha_k = \bigoplus_{r\le k} \Mat^r$.
To answer a query $\VStruct(v,I)$, we simply need to identify the sublist of $(v_1,\ldots,v_{\deg(v)})$ 
covered by interval $I$, say it is $(v_k,\ldots,v_{l})$, and report $\alpha_{l}\oplus \alpha_{k-1}$ in $O(\log^2 n)$ time.

We now turn to $\CStruct$.    As before, let $L_\gamma = (v_1,v_2,\ldots)$ be a list of all neighbors of 
terminals in $V(\gamma)$,
listed in increasing order of vertex id, let $\Mat^r$ be the sample matrix for $E_{v_r, \gamma}$, 
and let $\beta_k = \bigoplus_{r\le k} \Mat^r$.
Suppose the query is $\CStruct(\gamma,I)$.
We do a binary search to find the sublist
of $(v_1,v_2,\ldots)$ covered by $I$, then 
report the interval-sum in $O(\log^2 n)$ time by XORing two $\beta$-sketches.

Each original edge $\{u,v\}$ may contribute two $O(\log^2 n)$-size sketches to 
$\VStruct$ and $\CStruct$.  The total space is therefore $O(m\log^2 n)$.
\end{proof}

\subsubsection{Sketching Artificial Edges}\label{sect:sketching-artificial}

Artificial edges are encoded differently than original edges.  
Let $e = \{u,v\}$ be an artificial edge in $A(\gamma)\times B(\gamma)$.
The encoding $\ang{e} = \ang{u,v,\gamma}$ puts $u\in A(\gamma)$ before $v\in B(\gamma)$, and includes
the \emph{provenance} identifier $\gamma$.\footnote{Here ``$\gamma$'' refers to a $\log n$-bit identifier for the component $\gamma$.}  
Given a bit-string $\ang{u,v,\gamma}$, we can easily 
verify whether it corresponds to a legitimate edge by checking whether $u\in A(\gamma), v\in B(\gamma)$.

The sketches for artificial edges are obtained via \emph{vertex} sampling rather than edge sampling.
For $i \in [0,\log n),$ and $j\in [1,c\log n]$, we choose sets $A_{i,j}, B_{i,j}, \Comp_{i,j}$ such that
\begin{align*}
V &= A_{0,j} \supseteq A_{1,j} \supseteq \cdots \supseteq A_{\log n-1,j},\\
V &= B_{0,j} \supseteq B_{1,j} \supseteq \cdots \supseteq B_{\log n-1,j},\\
\Comp &= \Comp_{0,j} \supseteq \Comp_{1,j} \supseteq \cdots \supseteq \Comp_{\log n-1,j}.
\end{align*}
Each $\gamma \in \Comp$ is included in $\Comp_{i,j}$ independently with probability $2^{-i}$.
Similarly, each $u\in V$ is included in $A_{i,j}$ and $B_{i,j}$ independently with probability $2^{-i}$.
Define $E_{i_a,i_b,i_c,j}$ to be the edge set
\[
E_{i_a,i_b,i_c,j} = \{\ang{u,v,\gamma} \;|\; u \in A_{i_a,j}, v\in B_{i_b,j}, \gamma \in \Comp_{i_c,j}\}.
\]

Let $\HE = E_{0,0,0,\cdot}$ be the union of all edges contained in $A(\gamma)\times B(\gamma)$ over all 
$\gamma\in\Comp$.\footnote{Because each edge in $A(\gamma)\times B(\gamma)$ is tagged with its provenance 
$\gamma$, edges with the same endpoints but different provenances are distinguishable edges.  
Thus, we usually think of $\HE$ as a set rather than a multiset.}
The sketch of $E' \subset \HE$ is a 4-dimensional matrix $\HMat^{E'}$, where
\[
\HMat^{E'}(i_a,i_b,i_c,j) = \bigoplus_{e \in E'\cap E_{i_a,i_b,i_c,j}} \ang{e}.
\]

Lemma~\ref{lem:artificial-sketch} is the analogue of Lemma~\ref{lem:orig-sketch} for vertex-sampled sketches.

\begin{lemma}\label{lem:artificial-sketch}
Let $E_u$ be the edges adjacent to $u$ in $\HE$, and $\HMat^{E_u}$ be the sketch for $E_u$.
Suppose that for $S\subset V$, the cut $(S,V-S)$ is non-empty, and let $\HMat = \bigoplus_{u\in S} \HMat^{E_u}$
be the component-wise XOR of the sketches of $S$-vertices.
For each $j$, with constant probability there exists $i_a,i_b,i_c$ such that $\HMat(i_a,i_b,i_c,j)$
is the name of some edge crossing the cut $(S,V-S)$.
\end{lemma}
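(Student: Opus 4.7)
The proof parallels Lemma~\ref{lem:orig-sketch} but must account for the three-dimensional vertex-based sampling.

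\textbf{Cancellation.} I would first verify that the only contributions to $\HMat$ come from edges crossing the cut. An artificial edge $e\in A(\gamma)\times B(\gamma)$ is encoded $\ang{u,v,\gamma}$ and belongs to both $E_u$ and $E_v$, so it contributes $\ang{e}$ to both $\HMat^{E_u}$ and $\HMat^{E_v}$. In $\HMat=\bigoplus_{w\in S}\HMat^{E_w}$, an edge with both endpoints in $S$ is XORed twice and cancels, while an edge with both endpoints outside $S$ never appears at all. Letting $F\subseteq\HE$ denote the set of edges crossing $(S,V-S)$,
\[
\HMat(i_a,i_b,i_c,j) \;=\; \bigoplus_{\substack{e=\ang{u,v,\gamma}\in F\\ u\in A_{i_a,j},\,v\in B_{i_b,j},\,\gamma\in\Comp_{i_c,j}}} \ang{e}.
\]

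\textbf{Isolation.} This cell is the encoding of a single crossing edge exactly when $X_{i_a,i_b,i_c}$, the number of surviving $F$-edges, equals $1$; such a valid cell is recognized by decoding its bit string and checking $u\in A(\gamma)$, $v\in B(\gamma)$. I would exhibit a level $(i_a,i_b,i_c)$, chosen adaptively from $F$, at which isolation occurs with constant probability. Let $a^\star,b^\star,c^\star$ be the sizes of the three coordinate projections of $F$; since $|F|\leq a^\star b^\star c^\star$, one can choose $(i_a,i_b,i_c)$ with $2^{i_x}=O(x^\star)$ for each $x\in\{a,b,c\}$ and $2^{i_a+i_b+i_c}=\Theta(|F|/\epsilon)$ for any desired small constant $\epsilon$, yielding $\E[X]=\Theta(\epsilon)$. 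The elementary identity $\Pr[X=1]\geq \E[X]-\E[X(X-1)]$ then shows that a bound $\E[X(X-1)]=O(\epsilon^2)$ suffices to conclude $\Pr[X=1]=\Omega(\epsilon)=\Omega(1)$.

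\textbf{Main obstacle.} The crux is bounding $\E[X(X-1)]$ in the presence of correlations: edges of $F$ that share one or more of their $(u,v,\gamma)$ coordinates have positively correlated survivals, and hence contribute more than $p^2$ per pair to the second moment. I would enumerate the six non-trivial coordinate-sharing patterns (the singletons $\{A\},\{B\},\{C\}$ and the pairs $\{A,B\},\{A,C\},\{B,C\}$; sharing all three is impossible since a triple determines an edge) and, for each, bound (pair count)$\times$(joint-survival probability). The calibration $2^{i_x}=\Theta(x^\star)$ is precisely what makes each such product $O(\epsilon^2)$, by balancing the sampling rate $p_x$ against the marginal cardinality $x^\star$; the delicate point is to split $\log(|F|/\epsilon)$ among $i_a,i_b,i_c$ so that every sharing-pattern contribution is controlled \emph{simultaneously}, which ultimately relies on the inequalities $\sum_u n_A(u)\leq |F|$, $\sum_{u,v}n_{AB}(u,v)^2 \leq |F|\cdot c^\star$, and so on, where $n_A(u)$ (resp.\ $n_{AB}(u,v)$) counts edges of $F$ with that fixed coordinate (resp.\ pair of coordinates). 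The $c\log n$ independent columns indexed by $j$ are used in the update algorithm of Section~\ref{sect:MCupdatealg} to amplify this constant success probability to $1-n^{-\Omega(c)}$.
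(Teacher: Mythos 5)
Your approach is fundamentally different from the paper's, and unfortunately it has a genuine gap that cannot be patched. You attempt to exhibit a single triple $(i_a,i_b,i_c)$, chosen as a function of $F$ alone, for which the cell $\HMat(i_a,i_b,i_c,j)$ isolates exactly one crossing edge with constant probability, and you try to prove this via the second moment method. The paper's proof is instead \emph{sequentially adaptive}: it fixes $i_c=\lfloor\log|C|\rfloor$ to isolate a single component $\gamma$ with constant probability, then chooses $i_a=\lfloor\log|A'|\rfloor$ \emph{depending on which $\gamma$ was isolated} to isolate a single $v_a\in A'$, then chooses $i_b=\lfloor\log|B'|\rfloor$ \emph{depending on $\gamma$ and $v_a$} to isolate a single $B$-neighbor of $v_a$. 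This adaptivity is exactly what the existential quantifier over $(i_a,i_b,i_c)$ in the lemma statement permits, and it is what makes the three-stage isolation succeed with a constant probability independent of the structure of $F$.

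The specific failure in your plan is the bound $\E[X(X-1)]=O(\epsilon^2)$, which cannot be achieved in general. Consider a ``bowtie'' inside a single component $\gamma_0$: $F=\{(u_0,v_i,\gamma_0)\}_{i=1}^k\cup\{(u_j,v_0,\gamma_0)\}_{j=1}^k$, so that $a^\star=b^\star=k+1$ and $c^\star=1$ (forcing $p_C=1$). Any choice with $\E[X]=|F|p_Ap_B=\Theta(\epsilon)$ has $p_Ap_B=\Theta(\epsilon/k)$. The $k(k-1)$ ordered pairs sharing $u_0$ each have joint survival probability $p_Ap_B^2$, contributing $\Theta(k^2 p_Ap_B^2)=\Theta(k\epsilon p_B)$; symmetrically, the pairs sharing $v_0$ contribute $\Theta(k\epsilon p_A)$. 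Making both $O(\epsilon^2)$ would force $kp_A,kp_B=O(\epsilon)$ and hence $k^2p_Ap_B=O(\epsilon^2)$, but $k^2p_Ap_B=\Theta(k\epsilon)$ --- a contradiction for $k$ large. By AM--GM the two terms sum to $\Omega(k\epsilon\sqrt{p_Ap_B})=\Omega(\epsilon^{3/2}\sqrt{k})$, which dominates $\E[X]=\Theta(\epsilon)$ whenever $\epsilon=\Omega(1/k)$. So the inequality $\Pr[X=1]\geq\E[X]-\E[X(X-1)]$ yields nothing: any constant $\epsilon$ gives a negative lower bound, and shrinking $\epsilon$ below $\Theta(1/k)$ caps $\Pr[X=1]\le\E[X]=O(1/k)$, which is not $\Omega(1)$. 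The same obstruction appears with other component configurations (e.g., $F_{\gamma_j}\cong K_{2^j,2^{\ell-j}}$), where your calibration $2^{i_x}=\Theta(x^\star)$ forces $\E[X]$ to be subconstant. Your cancellation argument and the observation that a candidate entry can be validated by checking $u\in A(\gamma)$, $v\in B(\gamma)$ are both correct, but the probabilistic core of the argument does not go through; you need the paper's nested, outcome-dependent choice of $i_a$ and $i_b$.
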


\begin{proof}
In contrast to the proof of Lemma~\ref{lem:orig-sketch}, there is not necessarily a \emph{specific} triple $(i_a,i_b,i_c)$
that satisfies the lemma; we only claim that one of the $O(\log^3 n)$ triples will work, with constant probability.
Let $C \subseteq \Comp$ be the subset of components such that for each $\gamma\in C$, some edge of 
$A(\gamma)\times B(\gamma)$ crosses the cut.
With constant probability, $|\Comp_{i_c,j} \cap C| = 1$, where $i_c = \floor{\log|C|}$.
Suppose that $\gamma\in C$ is the component isolated by $\Comp_{i_c,j}$.
Let $A' \subseteq A(\gamma)$ be the subset of vertices 
adjacent to edges with provenance $\gamma$ crossing the cut.  Note that $A'$ may include vertices on both sides of the cut.
With constant probability $|A_{i_a,j} \cap A'| = 1$, where $i_a = \floor{\log|A'|}$.  
Let $v_a \in A'$ be the vertex isolated by $A_{i_a,j}$,
and let $B' \subseteq B(\gamma)$ be the neighbors of $v_a$ on the other side of the cut.
With constant probability $|B_{i_b,j} \cap B'|=1$, where $i_b = \floor{\log|B'|}$, isolating some vertex $v_b \in B'$.  
Thus, in this case $\HMat(i_a,i_b,i_c,j) = \ang{v_a,v_b,\gamma}$ is the name of an edge crossing the cut.
\end{proof}

The structures $\HCStruct$ and $\HVStruct$ are analogues of $\CStruct$ and $\VStruct$, 
but report sketches of edges in $\HE$.  The structure $\BipStruct$ is new, and is used to efficiently 
generate sketches of complete bipartite subgraphs of $A(\gamma)\times B(\gamma)$ on the fly.

\begin{description}
\item[{$\HCStruct(\gamma,I) \: :$}] The input is a component $\gamma\in\Comp$ and interval $I$ of some $\Euler(\tau)$.
Define $E_{u,\gamma}$ to be the $\HE$-edges joining $u$ to the terminals of $V(\gamma)$.  
Report the sketch $\HMat = \bigoplus_{u\in I} \HMat^{E_{u,\gamma}}$.

\item[{$\HVStruct(v,I) \: :$}] The input is a vertex $v$ and interval $I$ of some $\Euler(\tau)$.
Let $E_{v,I}$ be the $\HE$ edges joining $v$ to terminals in $I$.
Report the sketch $\HMat^{E_{v,I}}$.

\item[{$\BipStruct(\gamma,I,D) \: :$}]  The input is a component $\gamma$, an 
interval $I \subseteq A(\gamma)$, and a set $D$ of failed vertices such that $I\cap D = \emptyset$.
Let $E_{I,D} = I \times (B(\gamma) - D) \oplus (A(\gamma) - D) \times (I\cap B(\gamma))$
be the subset of provenance-$\gamma$ edges in 
$(A(\gamma)-D)\times(B(\gamma)-D)$ crossing the 
cut $(I,A(\gamma)-I)$; see Figure~\ref{fig:BipStruct}.  Report the sketch matrix $\HMat^{E_{I,D}}$.
\end{description}

\begin{lemma}\label{lem:HStructs}
The structures $\HVStruct, \HCStruct,$ and $\BipStruct$ 
occupy $O(m\log^6 n)$ space.  The query time for $\HVStruct$ and $\HCStruct$ 
is $O(\log^4 n)$, whereas the query time of 
$\BipStruct$ is $O(|D|\log^2 n + \log^4 n)$.
\end{lemma}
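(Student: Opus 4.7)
The plan is to handle $\HVStruct$ and $\HCStruct$ by mimicking the proof of Lemma~\ref{lem:orig-sketch-space}, and to tackle $\BipStruct$ separately using a closed-form identity for XORs over complete bipartite subgraphs. For the first two structures: each sketch $\HMat$ is a $\log n\times \log n\times \log n\times c\log n$ matrix of $O(\log^4 n)$ cells, each holding a single $O(\log n)$-bit edge label $\ang{u,v,\gamma}$, hence $O(\log^4 n)$ words in total. For $\HVStruct$ I would store, at each vertex $v$, the prefix sums of sketches along $v$'s $\HE$-neighbor list ordered by vertex id; by the vertex-id renumbering introduced in Section~\ref{sect:graphsketches}, every Euler-tour interval becomes contiguous, so a query is a single sketch XOR costing $O(\log^4 n)$ time. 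Summing $\deg_{\HE}(v)\cdot O(\log^4 n)$ over $v$ and using $\E[|\HE|]=O(m\log^2 n)$ from Lemma~\ref{lem:Bset-size} yields space $O(m\log^6 n)$. The analysis for $\HCStruct$ is identical, with prefix sums organized per component rather than per vertex.

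For $\BipStruct$ the key identity is that, for any $A'\subseteq V$, $B'\subseteq V$, and fixed provenance $\gamma$,
\begin{equation*}
\bigoplus_{u\in A',\; v\in B'}\ang{u,v,\gamma}\;=\;\bigl([|B'|\text{ odd}]\cdot\!\!\!\bigoplus_{u\in A'}\!\ang{u},\;\;[|A'|\text{ odd}]\cdot\!\!\!\bigoplus_{v\in B'}\!\ang{v},\;\;[|A'||B'|\text{ odd}]\cdot \ang{\gamma}\bigr),
\end{equation*}
where the three slots correspond to the three fields of the edge encoding. Thus each entry $\HMat^{E_{I,D}}(i_a,i_b,i_c,j)$ is determined in $O(1)$ time from four partial XORs and their cardinalities: one over $(I\cap A_{i_a,j})-D$, one over $(B(\gamma)-D)\cap B_{i_b,j}$, the symmetric pair arising from the second summand $(A(\gamma)-D)\times (I\cap B(\gamma))$, and the bit $[\gamma\in \Comp_{i_c,j}]$.

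To supply the $D$-free parts of these XORs, I would precompute, for each $\gamma$ and each $(i_a,j)$, a prefix-sum array giving $\bigoplus\ang{u}$ and the running count over $A(\gamma)\cap A_{i_a,j}$ in $A(\gamma)$-order, and symmetrically over $B(\gamma)\cap B_{i_b,j}$, so that any interval-sum is retrievable in $O(1)$ time. Since each vertex lies in $A_{i_a,j}$ with probability $2^{-i_a}$, the expected number of entries per $\gamma$ is $O((|A(\gamma)|+|B(\gamma)|)\log n)$, and summing over $\gamma$ gives $O(m\log^2 n)$ words, which is absorbed by the $O(m\log^6 n)$ bound already incurred by $\HVStruct$ and $\HCStruct$.

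The hard part is incorporating the ``$-D$'' corrections without paying $\Omega(|D|)$ per sketch cell, which would blow up the query time to $\Theta(|D|\log^4 n)$. The remedy is a single $O(|D|\log^2 n)$-time preprocessing pass that, for every $(i_a,j)$ and every $(i_b,j)$, tabulates the XOR and cardinality of $D\cap A(\gamma)\cap A_{i_a,j}$ and of $D\cap B(\gamma)\cap B_{i_b,j}$; each $u\in D$ contributes to the at most $O(\log^2 n)$ pairs $(i_a,j)$ for which $u\in A_{i_a,j}$. Once these correction scalars are tabulated, each of the $O(\log^4 n)$ cells of $\HMat^{E_{I,D}}$ is assembled in $O(1)$ time via the identity above by XORing the appropriate interval prefix-sum with the precomputed $D$-correction and applying the parity rule, yielding the claimed query time $O(|D|\log^2 n+\log^4 n)$.
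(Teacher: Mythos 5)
Your proposal is correct and follows essentially the same strategy as the paper: the prefix-sum implementation of $\HVStruct$ and $\HCStruct$ parallels Lemma~\ref{lem:orig-sketch-space} with $O(\log^4 n)$-size cells, and $\BipStruct$ is handled by the closed-form XOR identity for complete bipartite blocks (your Iverson-bracket form is exactly the paper's $x^k$ notation), with a one-time $O(|D|\log^2 n)$ pass to tabulate the $D$-corrections followed by $O(1)$ work per cell.

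Two small remarks. First, where you say the $\HCStruct$ space analysis is ``identical'' to the per-vertex case, the paper instead gives a more delicate case analysis on the ancestry relation between the component $\gamma$ being queried and the terminal-component $\gamma'$ of a neighbor $v$. Your shortcut is nonetheless valid: since each vertex is a terminal of exactly one component, each $\HE$-edge $\{u,v\}$ contributes $v$ to at most one list $L_{\gamma_u}$ and $u$ to at most one list $L_{\gamma_v}$, so $\sum_\gamma |L_\gamma| \le 2|\HE| = O(m\log^2 n)$, giving the same $O(m\log^6 n)$ bound (the paper's finer argument shaves a $\log$ factor, which the lemma does not need). Second, your $D$-correction intersects $D$ with $A(\gamma)$ and $B(\gamma)$ before tabulating, which is actually slightly more careful than the paper's $\sigma''(i_b,j)=|D\cap B_{i_b,j}|\bmod 2$; both fit within the $O(|D|\log^2 n)$ budget.
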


\begin{proof}
The implementation of $\HVStruct(v,I)$ is exactly like $\VStruct(v,I)$, except that $\HMat^r$ occupies $O(\log^4 n)$ space,
and is the sketch for \emph{all} edges joining $v$ and $v_r$ (with different provenances).  According to Lemma~\ref{lem:Bset-size},
the number of edges in $\HE$ (ignoring multiplicity) is $O(m\log^2 n)$.  
Thus, the space for $\HVStruct$ is $O(m\log^6 n)$.
The query time is still linear in the sketch size: $O(\log^4 n)$.

The implementation of $\HCStruct(\gamma,I)$ is also similar to $\CStruct(\gamma,I)$, 
with a $O(\log^4 n)$ query time.  
We now analyze its space.
Let $\gamma$ be a component and $v$ be a neighbor of $\gamma$ that is a terminal in $V(\gamma')$.  
Each such pair $(v,\gamma)$ contributes $O(\log^4 n)$ space to $\HCStruct$. 
We consider the pairs when $\gamma' \preceq \gamma$ and $\gamma' \succ \gamma$ separately.
There are at most $O(pn)$ pairs $(v,\gamma)$ when $\gamma' \preceq \gamma$ since $v$ has at most $p$ ancestral components,
so the contribution of these is $O(pn\log^4 n) = O(n\log^5 n)$.
Now suppose $\gamma' \succ \gamma$.  Let $u$ be some vertex in $V(\gamma)$ adjacent to $v$, 
and let $\gamma''$ be the provenance of the edge $\{u,v\}$.  It must be that $\gamma'' \prec \gamma$ is a strict descendant
of $\gamma$, and that both $u,v \in A(\gamma'')$.  This also implies that $v\in A(\gamma)$, hence the contribution
of all pairs $(v,\gamma)$ when $\gamma'\succ \gamma$ is $O(|A(\gamma)|\log^4 n)$, which is $O(m\log^5 n)$ over all $\gamma$.

We now turn to the new structure that answers the query $\BipStruct(\gamma,I,D)$.  
Let $\HMat_0$ be the sketch for $I \times (B(\gamma)-D)$
and $\HMat_1$ be the sketch for $(A(\gamma)-D)\times (I \cap B(\gamma))$.  The output sketch is exactly $\HMat_0\oplus \HMat_1$.
We focus on the computation of $\HMat_0$; computing $\HMat_1$ is symmetric.
Figure~\ref{fig:BipStruct}(a,b) illustrate $\HMat_0$ and $\HMat_1$ respectively.

\begin{figure}
\centering
\begin{tabular}{cc}
\scalebox{.28}{\includegraphics{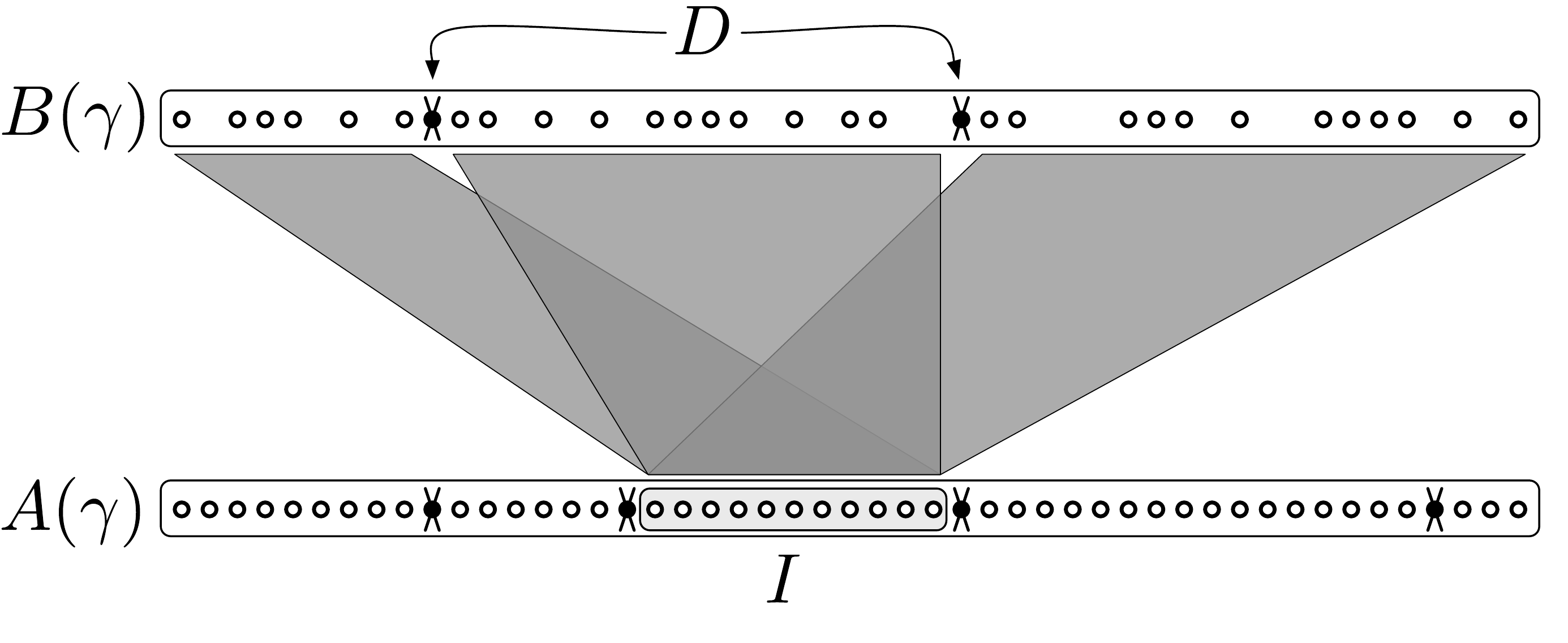}}
&
\scalebox{.28}{\includegraphics{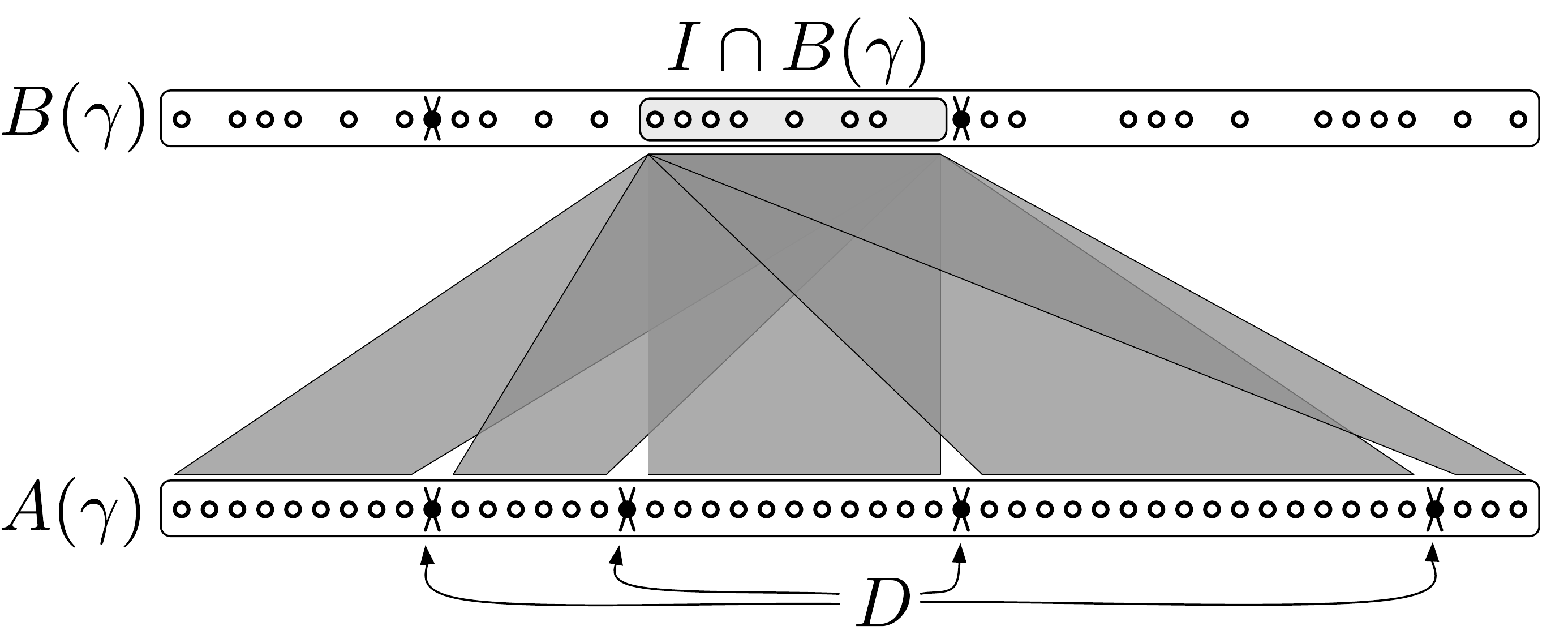}}\\
(a)
&
(b)
\end{tabular}
\caption{\label{fig:BipStruct}(a) The complete bipartite graph $I\times (B(\gamma)-D)$ sketched by $\HMat_0$,
(b) The complete bipartite graph $(I\cap B(\gamma))\times (A(\gamma)-D)$ sketched by $\HMat_1$.  
Note that in $\HMat = \HMat_0\oplus\HMat_1$, edges with both endpoints in $I$ are included twice, and cancel each other out.}
\end{figure}

If $\gamma \not\in \Comp_{i_c,j}$ then
$\HMat_0(i_a,i_b,i_c,j) = \ang{0}$.  Otherwise, define $A' = I \cap A_{i_a,j}$ and $B' = B(\gamma) \cap B_{i_b,j} - D$.  Then
\[
\HMat_0(i_a,i_b,i_c,j) = 
\ang{
\left(\bigoplus_{u \in A'} \ang{u}\right)^{|B'|},
\left(\bigoplus_{v \in B'} \ang{v}\right)^{|A'|},
\ang{\gamma}^{|A'|\cdot |B'|}
}
\]
where $x^k$ is short for $\overbrace{x\oplus \cdots \oplus x}^k$.  Note that to compute this entry of $\HMat_0$, we only
need to be able to compute the parities of $|A'|$ and $|B'|$, and the 
sums $\bigoplus_{u\in A'} \ang{u}$ and $\bigoplus_{v\in B'} \ang{v}$.

Let $A(\gamma) = (u_1,\ldots,u_{|A(\gamma)|})$ and $B(\gamma) = (v_1,\ldots,v_{|B(\gamma)|})$.
We store parity prefix sum matrices 
$(\sigma_k)_{k\in [1, |A(\gamma)|]}$ and $(\sigma'_k)_{k\in [1, |B(\gamma)|]}$
and name prefix sum matrices 
$(\rho_k)_{k\in [1, |A(\gamma)|]}$, and 
$(\rho'_k)_{k\in [1, |B(\gamma)|]}$,
where
\begin{align*}
\sigma_k(i_a,j) &= |\{u_1,\ldots,u_k\} \cap A_{i_a,j}| \mod 2\\
\sigma'_k(i_b,j) &= |\{v_1,\ldots,v_k\} \cap B_{i_b,j}| \mod 2\\
\rho_k(i_a,j) &= \bigoplus_{k' \in [1,k] : u_{k'}\in A_{i_a,j}} \ang{u_{k'}}\\
\rho'_k(i_b,j) &= \bigoplus_{k' \in [1,k] : v_{k'}\in B_{i_b,j}} \ang{v_{k'}}
\end{align*}
Suppose $I = (u_k,\ldots,u_l)$ is the query interval.
First compute the parity matrices $\sigma = \sigma_{l} \oplus \sigma_{k-1}$ and
$\sigma' = \sigma'_{|B(\gamma)|}$ in $O(\log^2 n)$ time, and then compute $\sigma''$ in $O(|D|\log^2 n)$ time, 
where $\sigma''(i_b,j) = |D\cap B_{i_b,j}| \mod 2$.  
Next compute name matrices $\rho = \rho_l\oplus \rho_{k-1}$ and $\rho' = \rho'_{|B(\gamma)|}$ in $O(\log^2 n)$ time,
and $\rho''$ in $O(|D|\log^2 n)$ time, where $\rho''(i_b,j) = \bigoplus_{v \in D\cap B_{i_b,j}} \ang{v}$.
One may easily verify that $\sigma,\rho$ are the correct parity and name matrices for $I$,
and that $\sigma' \oplus \sigma'', \rho' \oplus \rho''$ are the correct parity and name matrices for $B(\gamma)-D$.
Each entry of the output matrix $\HMat_0$ is then computed in $O(1)$ time as follows:
\begin{align*}
\lefteqn{\HMat_0(i_a,i_b,i_c,j)}\\
&= \left\{
\begin{array}{ll}
\ang{0} 									&  \mbox{if $\gamma \not\in \Comp_{i_c,j}$}\\
\ang{
(\rho(i_a,j))^{\sigma'(i_b,j) + \sigma''(i_b,j)},
(\rho'(i_b,j) \oplus \rho''(i_b,j))^{\sigma(i_a,j)},
\ang{\gamma}^{\sigma(i_a,j)\cdot (\sigma'(i_b,j)+\sigma''(i_b,j))}
}											& \mbox{otherwise}\\
\end{array}
\right. 								
\end{align*}
The overall time to compute $\HMat_0$ is therefore $O(|D|\log^2 n + \log^4 n)$.
\end{proof}

\begin{remark}\label{rem:BipStruct}
Observe that in the proof of Lemma~\ref{lem:HStructs}, the matrices $\sigma'',\rho''$
depended only on $D$, not $I$.  Thus, once they are computed we can answer a query
for a different triple $(\gamma,I',D)$ in just $O(\log^4 n)$ time.  
This fact will be used in Sections~\ref{sect:MCupdatealg} and \ref{sect:MCedge}.
\end{remark}

\subsection{Update and Query Algorithms}\label{sect:MCupdatealg}

At a high level, the deletion algorithm has four major steps.
\begin{enumerate}
\item \label{del-step1} The first task is to mark up to $(p+1)d \le d\log n$ components $\gamma_1,\ldots,\gamma_{(p+1)d}$ as \emph{affected}, 
as well as the corresponding trees $\tau_1,\ldots,\tau_{(p+1)d}$.  (Because the component-to-tree mapping is not injective,
the number of distinct trees may be smaller.)  We mark all tree edges incident to $D$ as deleted, which breaks up
$\Euler(\tau_1),\ldots,\Euler(\tau_{pd})$ into $O(pd)$ intervals, call them $I_1,\ldots,I_{O(pd)}$, with the property that
each affected subtree (i.e., those in $\tau_1 - D,\ldots, \tau_{(p+1)d}-D$) 
is the union of some subset of the intervals.

\item \label{del-step2}The next task is to generate two sketches $\Mat[I_q], \HMat[I_q]$ for 
each interval representing valid edges in $E$ and $\HE$, respectively, joining $I_q$ to another interval.  
In other words, we do not want to consider original or artificial edges adjacent to $D$,
nor invalid artificial edges with provenance $\gamma$ for some affected $\gamma$,
nor valid artificial edges joining $I_q$ to an unaffected tree in $\LDForest$.  
The structures $\VStruct,\CStruct,\HVStruct,\HCStruct,\BipStruct$ are used to build these sketches.

\item \label{del-step3} Let $t_1,\ldots,t_{O(pd)}$ be the affected subtrees.  We form the sketches $\Mat[t_q]$ and $\HMat[t_q]$ for each tree,
by XORing the sketches of the constituent intervals of $t_q$.  According to Lemmas~\ref{lem:orig-sketch} and \ref{lem:artificial-sketch}, these sketches reveal
one edge crossing the cut defined by $V(t_q)$, with constant probability.
We can implement a probabilistic version of \Boruvka's algorithm 
in order to compute the connected components among the affected subtrees.  The $j$th \Boruvka{} step only
examines parts of the sketch with matching $j$-coordinate.  Using ``fresh'' randomness for each \Boruvka{} step is essential for
showing the procedure succeeds w.h.p.

\item \label{del-step4} Lastly, we must account for any unaffected components $\gamma\in \Comp$ that were unlucky enough to see all vertices
in $B(\gamma)$ fail.  According to Lemma~\ref{lem:Bset-guarantee}, at least $|A(\gamma)|$ of the pairs in $D\choose 2$ 
are owned by $\gamma$, w.h.p.
We scan all $|D|\choose 2$ labels, tallying up how many times each owner label occurs.
Any owner label $\gamma$ that appears $|A(\gamma)|$ times might provide additional connectivity 
not captured by the components discovered at the end of step~\ref{del-step3}.  
We merge any connected components from step~\ref{del-step3} 
that contain at least one $A(\gamma)$ vertex.  This takes $O(|A(\gamma)|)$ time to process $\gamma$, 
and hence $O(d^2)$ time overall.
\end{enumerate}

\subsubsection{Generating sketches}

We show how to generate $\HMat[I_q]$.  The process for $\Mat[I_q]$ is analogous, but simpler and faster.
For each $z \le (p+1)d$, consult with $\HCStruct$ to get a sketch $\HMat(\gamma_z,I_q)$ 
covering edges in $\HE$ joining $I_q$ to terminals in $V(\gamma_z)$.  These sketches include two 
types of edges we must subtract off (i) those incident to $D$, and (ii) those with provenance $\gamma$
for some affected $\gamma$.\footnote{Note that the intersection of (i) and (ii) is generally non-empty, so it is 
not sufficient to subtract off (i) and (ii) separately as this will inadvertently add back edges in (i) $\cap$ (ii).}
For each $v\in D$, consult with $\HVStruct$ to get a sketch $\HMat(v,I_q)$ covering edges in 
$\HE$ joining $I_q$ to $v$.  These sketches cover type (i) bad edges.
Suppose $I$ is an interval containing terminals of $V(\gamma_y)$.
For each $z\le (p+1)d$, if $\gamma_z \prec \gamma_y$ is a strict descendant of $\gamma_y$,
consult $\BipStruct$ to get a sketch $\HMat(\gamma_z,I_q,D)$.
This covers all remaining edges with provenance $\gamma_z$ not already covered by $\{\HMat(v,I_q)\}_{v\in D}$.
Finally, we compute $\HMat[I_q]$ by combining these sketches.
\[
\HMat[I_q] = 
\paren{\bigoplus_{z\le (p+1)d} \HMat(\gamma_z,I_q)} 
\oplus 
\paren{\bigoplus_{v\in D} \HMat(v,I_q)}
\oplus
\paren{\bigoplus_{z\le (p+1)d} \HMat(\gamma_z,I_q,D)}
\]

For the time analysis, recall that there are $O(pd)$ affected components, $O(pd)$ affected subtrees,
and $O(pd)$ relevant Euler tour intervals.  By Lemma~\ref{lem:HStructs}, the time to compute all 
$\HMat(\gamma_z,I_q)$ sketches is $O((pd)^2 \log^4 n) = O(d^2 \log^6 n)$, and 
the time to compute $\HMat(v,I_q)$ sketches $O(pd^2 \log^4 n) = O(d^2\log^5 n)$.
By Lemma~\ref{lem:HStructs} and Remark~\ref{rem:BipStruct}, the time to compute
all $\HMat(\gamma_z,I_q,D)$ sketches is $O((pd)d\log^2 n + (pd)^2 \log^4 n) = O(d^2\log^6 n)$.

\subsubsection{Executing \Boruvka's algorithm}\label{sect:Boruvka}

Once the sketches for each interval are generated we can combine them to form
sketches $\Mat[t_l], \HMat[t_l]$ for each affected subtree $t_l$.

We proceed as in \Boruvka's MST algorithm~\cite{Bor26} and many parallel connectivity algorithms that use 
the ``hook and contract'' technique~\cite{CHL01,CL95,JM97,PR02d}.
In each round, each affected subtree will pick an arbitrary edge joining it to a different affected subtree.  The affected subtrees will be merged
into larger affected subtrees, which participate in the next round.  Under error-free conditions---which we do not have---this 
process will halt after $\log_2(O(dp))$ rounds since each round reduces the number of non-isolated affected subtrees by at least half. 

The formal procedure is as follows.  Let $C_{j-1} = \{t_{j-1,1}, t_{j-1,2}, \ldots, t_{j-1,|C_{j-1}|}\}$ be the affected trees after $j-1$ rounds,
where $C_0 = \{t_{0,1},\ldots,t_{0,O(dp)}\}$.  
We maintain the invariant that we have, for each $t_{j-1,l}$, 
sketches $\Mat[t_{j-1,l}],\HMat[t_{j-1,l}]$ covering original and artificial edges joining $t_{j-1,l}$ to a different tree.
In the $j$th round, loop over each $t_{j-1,l}\in C_{j-1}$ and look for the name of any valid original/artificial edge
in the $\log n$ entries of $\Mat[t_{j-1,l}](\star, j)$ and the $\log^3 n$ entries of $\HMat[t_{j-1,l}](\star,\star,\star,j)$.
Such an edge $e_{j-1,l}$, if it exists, has one endpoint in $V(t_{j-1,l})$.
Let $C_{j}$ be the components induced by the $C_{j-1}$ trees and the inter-tree edges $\{e_{j-1,l}\}$ just selected.
Suppose the constituent trees of some $t_{j,r}\in C_{j}$ are $S \subseteq C_{j-1}$.
The sketches for $t_{j,r}$ are computed as 
$\Mat[t_{j,r}] = \bigoplus_{t \in S} \Mat[t]$ and 
$\HMat[t_{j,r}] = \bigoplus_{t \in S} \HMat[t]$.
The total time to compute sketches for $C_{j}$ is just $O((|C_{j-1}|-|C_{j}|)\log^4 n)$.

Observe that just before executing the $j$th round we have only examined sketch entries whose final coordinate is in $\{1,\ldots,j-1\}$.
Hence, the contents of the sketches with final coordinate $j$ reflect ``fresh'' randomness, and we can apply 
Lemmas~\ref{lem:orig-sketch} and \ref{lem:artificial-sketch}.  If there exists at least one edge crossing the cut defined
by $t_{j-1,l}$, then with constant probability, either $\Mat[t_{j-1,l}]$ or $\HMat[t_{j-1,l}]$ will reveal the name of one such edge.
Letting $\|C_k\|$ denote the number of non-isolated components in $C_k$, we have 
$\E[\|C_{j}\|] \le (1-\epsilon) \|C_{j-1}\|$ for some absolute constant $\epsilon>0$.
Thus, after $c\log n$ rounds
$\E[\|C_{c\log n}\|] \le (1-\epsilon)^{c\log n} \|C_0\| < n^{-\Omega(c)}$ and by 
Markov's inequality, the probability that $C_{c\log n}$ has non-isolated components (an error) is $n^{-\Omega(c)}$.

\subsubsection{Recapitulation}

The high level update algorithm in Section~\ref{sect:MCupdatealg} was divided into four major steps.
Step 1 (marking affected components and subtrees, enumerating relevant intervals) takes $O(d\log n)$ time.
Step 2 (generating sketches) takes $O(d^2 \log^6 n)$ time.
Step 3 (\Boruvka's algorithm) takes time linear in the sum of the sketches: $O(d\log^4 n)$.
Finally, Step 4 (processing $\gamma$ with $B(\gamma)\subseteq D$) takes $O(d^2)$ time.
Observe that due to the probabilistic nature of Lemmas~\ref{lem:Bset-guarantee}, \ref{lem:orig-sketch}, and \ref{lem:artificial-sketch}, 
Steps 3 and 4 can have both detected and undetected errors, with probability $n^{-\Omega(c)}$.\footnote{One undetected
error that has nothing to do with sketching is if $B(\gamma)\subseteq D$, but $\gamma$ is not processed in Step 4.
An undetected sketch failure occurs if $t_{j,l}$ is not an isolated tree, but nonetheless
$\Mat[t_{j,l}]$ and $\HMat[t_{j,l}]$ are the all-zero matrices.  
A detected error would be 
if $\Mat[t_{c\log n+1,l}]$ or $\Mat[t_{c\log n+1,l}]$ were not the all-zero matrices, 
indicating that $c\log n$ \Boruvka{} steps failed to detect all connected components.}

The final output of this algorithm (a partition of the affected subtrees into connected components) is exactly
the same as in the deterministic algorithms of Sections~\ref{sect:LDH}--\ref{sect:updatetime}.
Thus, the same deterministic query algorithm works in $O(d)$ time.
In the next section we shall see some general methods to shave $\poly(\log n)$-factors off some 
algorithms that use graph sketches.

\subsection{Improving Update Times with On-demand Sketching}\label{sect:MCedge}

Recall that existing $d$-edge failure connectivity oracles have update times that are linear in $d$ but have 
$\poly(\log n)$ factors ($O(d\log^2 n\log\log n)$~\cite{PatrascuT07} or $O(d\log d\log^3 n)$~\cite{KapronKM13}) or 
have a quadratic dependence on $d$, but better dependence on $n$, namely $O(d^2\log\log n)$~\cite{DuanP10}.
In this section we show how to improve all of these bounds and use sublinear space, as in~\cite{GibbKKT15}.

\begin{theorem}\label{thm:MCedge}
A connectivity oracle for $G=(V,E)$ with size $O(n\log^2 n)$ can be constructed in $O(m\log n + n\log^2 n)$ time.
Any set $D\subseteq E(G)$ of $d$ edges can be processed in $O(d\log d\log\log n)$ time in expectation (and $O(d\log n\log\log n)$ time w.h.p.)
such that connectivity queries in $(V,E - D)$ can be answered in $O(\min\{\log\log n, \log d/\log\log n\})$ time.
With high probability, the query is answered correctly.
\end{theorem}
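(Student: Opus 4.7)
The edge-failure case is substantially simpler than the vertex-failure construction, since neither the low-degree hierarchy $\Comp$ nor the artificial bipartite sketches of Section~\ref{sect:sketching-artificial} are needed: only the original-edge sketch of Section~\ref{sect:sketching-original} is required. My plan is to fix an arbitrary spanning tree $T$ of $G$, compute its Euler tour $\Euler(T)$, and assign vertex ids compatible with the tour. For each vertex $v$ compute the sketch $\Mat^{E_v}$ of dimensions $\log m\times c\log n$ (each entry an $O(\log n)$-bit XOR of edge names), and maintain, for each row $i$ and column $j$, a prefix-XOR array over $\Euler(T)$ of these single-entry contributions (exactly as in the proof of Lemma~\ref{lem:orig-sketch-space}). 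The total space is $O(n\log^2 n)$ words and preprocessing is $O(m\log n + n\log^2 n)$ time. With this layout the $(i,j)$-entry of the sketch for any Euler-tour interval is extracted in $O(1)$ time, and the entry for a union of $r$ intervals in $O(r)$ time.

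To process a failure set $D = D_T\sqcup D_{NT}$ I would first remove $D_T$, which splits $\Euler(T)$ into at most $2|D_T|+1$ intervals $I_1,\ldots,I_{O(d)}$, and build a van~Emde~Boas (or fusion-tree) predecessor structure over their endpoints together with a hash table of $D_{NT}$. I would then run the randomized \Boruvka-style loop of Section~\ref{sect:Boruvka}. Maintain the current partition of the $O(d)$ intervals into components in a union-find. In round $j$, for each non-isolated component $C$ with intervals of total count $r_C$, scan levels $i=0,1,\ldots,\log m-1$: for each $i$ compute the $(i,j)$ sketch entry of $C$ by XORing $r_C$ prefix-XOR lookups, interpret it as a candidate name $\{u,v\}$, look $\{u,v\}$ up in the edge hash table of $E\setminus D_{NT}$ and---if the candidate is a real non-failed edge---run two $O(\log\log n)$-time predecessor queries to check that $u$ and $v$ lie in different current components; on success, union and move on. By Lemma~\ref{lem:orig-sketch}, each active component merges with constant probability per round, so $O(\log d)$ rounds suffice in expectation and $c\log n$ rounds suffice w.h.p.\ for every connected pair of subtrees to be merged.

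The main technical obstacle is compressing the per-round per-component work to $O(r_C + \log\log n)$. Two facts are required: (i) in expectation a constant number of levels $i$ are probed before a valid edge is produced (at the ``right'' level $i^\star \approx \log|E(C,\overline C)|$ the sampled set is of expected size $O(1)$ and the decoded entry is a genuine crossing edge with constant probability, while at lower levels the random XOR hashes into the true edge set with probability only $O(m/2^{2\log n})$, so false positives are negligible); and (ii) each validation is $O(\log\log n)$ because the predecessor structures are built over the $O(d)$-sized set of interval endpoints, not over $[n]$. Summing $\sum_C(r_C+\log\log n) = O(d+k_r\log\log n) = O(d\log\log n)$ per round, and multiplying by $O(\log d)$ expected (respectively $c\log n$ w.h.p.) rounds, yields the claimed $O(d\log d\log\log n)$ expected and $O(d\log n\log\log n)$ w.h.p.\ update bounds. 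Unlike the vertex-failure setting of Section~\ref{sect:MCupdatealg}, no analogue of Step~4 (handling components with $B(\gamma)\subseteq D$) is required, because in the edge-failure regime every crossing edge is directly representable in the vertex sketches.

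The query algorithm mirrors Corollary~\ref{cor:ET}: for input $u,v$, two predecessor queries on the $O(d)$ interval endpoints locate the subtrees of $T-D_T$ containing them in $O(\min\{\log\log n,\,\log d/\log\log n\})$ time by~\cite{PatrascuT06,PatrascuT14}, after which a single union-find lookup decides whether those two subtrees have been merged. Correctness w.h.p.\ follows because, by the Section~\ref{sect:Boruvka} analysis, after $c\log n$ rounds every pair of subtrees connected in $G-D$ has been certified as merged except with probability $n^{-\Omega(c)}$, and every reported connection corresponds to an actually validated edge in $E-D$.
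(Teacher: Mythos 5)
Your overall architecture — prefix‑XOR sketches indexed by $\Euler(T)$, a \Boruvka{} loop with fresh randomness per round, and predecessor search over interval endpoints for queries — is the same as the paper's. But two steps in your update routine don't work as stated.

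First, you never remove the failed edges' contributions from the sketches. An edge $e\in D$ with endpoints in two different intervals contributes $\ang{e}$ once to each interval's prefix‑difference, so $\Mat[C](i,j)$ is the XOR of both the valid and the failed edges crossing the cut around $C$. Your hash table of $D_{NT}$ lets you \emph{reject} a candidate that happens to decode to a failed edge, but it does not repair an entry that is the XOR of, say, one valid crossing edge and one failed crossing edge — that entry decodes to garbage and you miss the valid edge entirely. Worse, if there are many failed crossing edges and few valid ones, no level has ``exactly one valid, zero failed'' with constant probability, and the per‑round constant‑progress guarantee of Lemma~\ref{lem:orig-sketch} fails. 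The paper fixes this by XORing $\ang{e}$ for each $e\in D$ directly into the relevant interval sketches; in the on‑demand version, the list $\mathcal{L}[t]$ is seeded not only with $\mu$-prefix sketches but also with a one‑edge basic sketch for each $e\in D$ with an endpoint in $t$, and these $O(d)$ extra basic sketches are accounted for in the $O(db\log\log n)$ bound.

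Second, your claim that ``a constant number of levels $i$ are probed'' does not follow from scanning $i=0,1,\ldots$: you would have to pass through all $i<i^\star\approx\log|E(C,\overline C)|$ before reaching the level at which isolation is likely, and each probe costs $O(r_C)$, so this contributes an extra $\Theta(\log m)$ factor to the per‑component work. The paper instead binary‑searches for the largest $i$ with $\Mat[t](i,j)\neq\ang{0}$, which costs $O(\log\log n)$ probes per round per tree; correctness of the binary search relies on the nonzero entries forming a prefix of the $i$-range w.h.p., which in turn relies on encoding edges with \emph{random} vertex names $\phi(u)\in\{0,1\}^{c\log n}$ so that partial cancellations leave a nonzero residue. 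You assign ``vertex ids compatible with the tour'' for indexing, which is fine, but you then quietly invoke the random‑naming argument (``the random XOR hashes into the true edge set with probability only $O(m/2^{2\log n})$''); these two roles must be separated as the paper does: Euler‑tour position for prefix indexing, and an independent random injection $\phi$ for the bit strings inside the sketches.
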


\begin{proof}
Because the space is sublinear in $n$ we cannot afford to store the graph, 
nor can we explicitly record for each edge which samples it appears in.  
Assume the initial vertex ids are $\{1,\ldots,n\}$.  We assign $u$ the bit-string $\phi(u)$,
where $\phi : \{1,\ldots,n\}\rightarrow \{0,1\}^{c\log n}$ is a uniformly random injective function.
The encoding of an edge $e= \{u,v\}$ is $\ang{e} = \ang{\min\{\phi(u),\phi(v)\}, \max\{\phi(u),\phi(v)\}}$.

\paragraph{Sketching.}
We use hash functions to decide whether to include edges in sampled sets.  
Choose pairwise independent hash functions $h_1,\ldots,h_{c\log n} \;:\: \{0,1\}^{2\log n} \rightarrow \{0,\ldots,2^{w}-1\}$,
and for each $i \in [0,\log m)$ and $j\in [1,c\log n]$, let $E_{i,j}$ be the edge set
\[
E_{i,j} = \{e \in E \;|\; h_j(e) \in [0,2^{w-i})\}
\]
The sketch $\Mat^{E'}$ is a $\log m \times c\log n$ matrix defined exactly as before.
Pairwise independence suffices to guarantee the claim of Lemma~\ref{lem:orig-sketch}, that 
for any set $E'\subset E$ and any $j$, there exists an $i$ such that with constant probability, $\Mat^{E'}(i,j)$ is the name of one edge in $E'$.
(See~\cite[Appendix A]{GibbKKT15}) for a short proof.)  
Moreover, since $E = E_{0,j} \supseteq \cdots \supseteq E_{\log m-1,j}$, 
the right value of $i$ is, with high probability, the unique value for which $\Mat^{E'}(i,j) \neq \ang{0}$ and $\Mat^{E'}(i+1,j)=\ang{0}$.
We also need to be able to \emph{tell} that a bit string $\Mat^{E'}(i,j)$ encodes an edge rather than garbage.   
Since $\phi$ assigns random $c\log n$-bit strings, the XOR of multiple edge names is a random $2c\log n$-bit string.
Thus, the probability that a garbage string looks like an legitimate edge name is $n^{-2(c-1)}$.

\paragraph{The Construction.}
At preprocessing time, choose an arbitrary spanning tree $T\subseteq E(G)$, and an arbitrary tour $\Euler(T)=(v_1,\ldots,v_n)$.
Initialize sketch matrices $\Mat^{v_1},\ldots,\Mat^{v_n}$ to be all zero.
For each $e=(v_k,v_l) \in E(G)$, evaluate $h_1(e),\ldots, h_{c\log n}(e)$ to determine which sets 
$E_{i,j}$ contain $e$.  If $e\in E_{i,j}$, update $\Mat^{v_k}(i,j) \leftarrow \Mat^{v_k}(i,j) \oplus \ang{e}$
and likewise with $\Mat^{v_l}(i,j)$.  
Finally, compute all prefix sum sketches $(\mu_1,\ldots,\mu_n)$, where $\mu_k = \bigoplus_{k' \le k} \Mat^{v_{k'}}$.  
The data structure stores $T,\Euler(T),$ and $(\mu_k)$.  The space is dominated by $(\mu_k)$, which takes $O(n\log^2 n)$ words.
The construction time is $O(m\log n + n\log^2 n)$ in expectation.  Observe that each edge causes just $O(c\log n)$ entries of the sketches
to be updated, in expectation, and that computing $(\mu_k)$ takes $O(n\log^2 n)$ time once the $(\Mat^{v_k})$ are computed.

\paragraph{Handling Edge Failures.}  Suppose a subset $D\subseteq E(G)$ of edges are deleted.\footnote{We are promised
that $D\subseteq E(G)$, which cannot be verified with only $\tilde{O}(n)$ space.  Strictly speaking, we will be preparing a data structure
that answers connectivity queries in $G' = (V,E\oplus D)$, i.e., any edge $e\in E(G) - D$ is treated as an insertion, not a deletion.}
Removing $D$ partitions $\Euler(T)$ into a set of $2|D\cap T| + 1$ intervals, call them $\mathcal{I}$. 
For each interval $I\in\mathcal{I}$, suppose it is $\{v_p,\ldots,v_q\}$,
we compute its initial sketch $\Mat[I] \leftarrow \mu_q \oplus \mu_{p-1}$,
then proceed to delete $D$ from the sketches.  For each $e=(v_k,v_l)\in D$, 
find the intervals $I,I'\in\mathcal{I}$ containing $v_k,v_l$ respectively, and for each $E_{i,j} \ni e$, 
update $\Mat[I](i,j) \leftarrow \Mat[I](i,j)\oplus \ang{e}$ and update $\Mat[I']$ likewise.
Once we have sketches for all intervals, we execute \Boruvka's algorithm as in Section~\ref{sect:Boruvka}.  
The time to generate the sketches and execute \Boruvka's algorithm takes time linear in the size of all sketches,
namely $O(d\log^2 n)$. 

To improve the update time we calculate entries in sketch matrices in an on-demand fashion.  
Suppose $t$ is a tree encountered during \Boruvka's algorithm.  We maintain a linked list 
$\mathcal{L}[t]$ of sketches satisfying the invariant $\Mat[t] = \bigoplus_{\sigma \in \mathcal{L}[t]} \sigma$.
(For example, before the 1st \Boruvka{} step, $t$ is an interval in $\mathcal{I}$ and $\mathcal{L}[t]$ consists
of two $\mu$ sketches and possibly several single-edge sketches, one for each edge in $D$ with an endpoint in $t$.)
Thus, any entry $\Mat[t](i,j)$ can be looked up in $|\mathcal{L}[t]|$ time.  
In the $j$th \Boruvka{} step, for each current tree $t$ we do a binary search for the maximum $i$ such that 
$\Mat[t](i,j) \neq \ang{0}$, and check whether it is a legitimate encoding of an edge.
In this \Boruvka{} step, if trees $t_1,\ldots,t_r$ are merged into one tree $t'$, we simply 
set $\mathcal{L}[t']$ to be the concatenation of $\mathcal{L}[t_1],\ldots,\mathcal{L}[t_r]$.

The number of basic sketches appearing in any list $\mathcal{L}[\cdot]$ is $O(d)$: there are at 
most $2|\mathcal{I}| = O(d)$ $\mu$-sketches of interest, and at most $d$ single-edge sketches for edges in $D$.
If \Boruvka's algorithm terminates after $b$ steps, then we have probed 
$O(b\log\log n)$ locations in each of the basic sketches, for a total time of $O(db\log\log n)$.
The claimed update time follows from the fact that $b$ is $O(\log d)$ in expectation and $O(\log n)$ with high probability.

\paragraph{Queries.} A query $(u_k,u_l)$ simply needs to find the intervals $I,I'$ containing $u_k,u_l$, respectively,
and check whether $I,I'$ are in the same connected component discovered by \Boruvka's algorithm.
Finding $I,I'$ can be done with predecessor search, in $O(\log\log n)$ time~\cite{vEKZ77} or $O(\log d/\log\log n)$ time~\cite{PatrascuT14}.
\end{proof}

The same technique allows us to shave \emph{four} log factors off the update time from Section~\ref{sect:MCupdatealg}.

\begin{theorem}\label{thm:MCvertex}
A connectivity oracle for $G=(V,E)$ with size $O(m\log^6 n)$ can be constructed in $O(mn\log n)$ time.
Any set $D\subseteq V(G)$ of $d$ vertices can be processed in $O(d^2\log d\log^2 n\log\log n)$ time in expectation
(and $O(d^2\log^3 n\log\log n)$ time w.h.p.)
such that 
connectivity queries in $G-D$ can be answered in $O(d)$ time.
With high probability, the query is answered correctly.
\end{theorem}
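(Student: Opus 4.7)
The plan is to lift the on-demand sketching idea of Theorem~\ref{thm:MCedge} to the vertex-failure data structure of Section~\ref{sect:MCupdatealg}. The bottleneck there was Step~\ref{del-step2}: forming each of the $O(pd)$ interval sketches $\HMat[I_q]$ by XOR-ing $O(pd)$ basic contributions of size $O(\log^4 n)$, costing $O((pd)^2\log^4 n)=O(d^2\log^6 n)$ total. The key observation is that Boruvka's algorithm never needs the full sketches; it only ever reads a handful of entries per tree per round, so the full sketches need never be materialized.

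First I would ensure that every basic contribution from $\VStruct, \CStruct, \HVStruct, \HCStruct,$ and $\BipStruct$ supports $O(1)$-time entry lookup. The first four are already prefix-sum structures (Lemmas~\ref{lem:orig-sketch-space} and~\ref{lem:HStructs}). For $\BipStruct$ I invoke Remark~\ref{rem:BipStruct}: the $D$-dependent matrices $\sigma''_\gamma,\rho''_\gamma$ depend only on $D$ and the ambient affected component $\gamma$, so each can be computed once and reused across every interval query involving $\gamma$; thereafter any entry of a $\BipStruct$ basic sketch is $O(1)$ to read from the ambient prefix sums. I would compute these lazily so that the preparation cost is absorbed into the probing charges below.

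Next, rather than materialize $\Mat[I_q]$ or $\HMat[I_q]$, I maintain for every interval (and every tree grown during Boruvka's algorithm) a linked list $\mathcal{L}[t]$ whose formal XOR equals the desired sketch. The total list length is $\sum_t|\mathcal{L}[t]|=O((pd)^2)=O(d^2\log^2 n)$, and merging two trees in Boruvka's algorithm is a constant-time list concatenation. A single entry lookup in $\HMat[t]$ or $\Mat[t]$ thus costs $O(|\mathcal{L}[t]|)$ by scanning the list. Then run the Boruvka procedure of Section~\ref{sect:Boruvka} with on-demand probing. For the 1D-sampled original sketch $\Mat[t]$, binary search over $i$ isolates a valid entry in $O(\log\log n)$ probes, exactly as in Theorem~\ref{thm:MCedge}. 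For the 3D-sampled artificial sketch $\HMat[t]$, I would perform a guided nested search over the sampling coordinates $(i_a,i_b,i_c)$, also in $O(\log\log n)$ probes, exploiting the per-axis monotonicity of $E_{i_a,i_b,i_c,j}$ and using the provenance tag $\gamma$ to verify any candidate string $\langle u,v,\gamma\rangle$ in $O(1)$ time. Summed, one round costs $O(\log\log n\cdot\sum_t|\mathcal{L}[t]|)=O(d^2\log^2 n\log\log n)$; multiplied by $O(\log d)$ Boruvka rounds in expectation (respectively $O(\log n)$ w.h.p.) this yields the claimed update times. Steps~\ref{del-step1} and~\ref{del-step4} run unchanged in $O(d\log n)$ and $O(d^2)$ time, and the query algorithm of Section~\ref{sect:answer} still answers queries in $O(d)$ time; correctness w.h.p.\ follows as in Section~\ref{sect:MCupdatealg} from Lemmas~\ref{lem:Bset-guarantee}, \ref{lem:orig-sketch}, and~\ref{lem:artificial-sketch}.

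The principal technical obstacle will be justifying the nested search on $\HMat[t]$. Unlike the 1D monotone sketch, the artificial sketch is downward-closed in three independent coordinates and the ``right'' triple $(\lfloor\log|A'|\rfloor,\lfloor\log|B'|\rfloor,\lfloor\log|C|\rfloor)$ need not lie on any fixed diagonal, so a naive coordinate-by-coordinate binary search may fix too early on a coordinate value for which the conditional sketch is already garbage. Showing that a carefully ordered sequence of $O(\log\log n)$ probes---guided by provenance-tag legitimacy checks and by per-axis monotonicity---succeeds with constant probability, exploiting fresh randomness in column $j$ just as in Section~\ref{sect:Boruvka}, is the heart of the proof. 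If the probe count can only be bounded by $O(\log n)$, the w.h.p.~bound degrades by a factor of $\log n/\log\log n$, so tightening the probe count is what separates the stated bound from a crude $O(d^2\log^4 n)$ analysis.
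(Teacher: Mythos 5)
Your proposal is essentially identical to the paper's own proof. Both observe that Step~\ref{del-step2} is the bottleneck, both avoid materializing the full interval sketches by retaining the $O(pd)$ ``basic'' prefix-sum contributions (your $\mathcal{L}[t]$; the paper's two $\alpha$-, two $\beta$-sketches plus the $\sigma,\sigma',\sigma'',\rho,\rho',\rho''$ matrices for each $\BipStruct$ contribution), both note via Remark~\ref{rem:BipStruct} that $\sigma'',\rho''$ depend only on $(\gamma,D)$ and can be cached, both probe only $O(\log\log n)$ entries per tree per \Boruvka{} round via a coordinate-wise binary search over the $\log n$-length axes, and both arrive at the same $O(bd^2\log^2 n\log\log n)$ charge with $b=O(\log d)$ in expectation and $O(\log n)$ w.h.p. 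Your accounting (charge $O(\log\log n)\cdot\sum_t|\mathcal{L}[t]|=O(d^2\log^2 n\log\log n)$ per round) is the same calculation as the paper's ``each of the $O((d\log n)^2)$ basic sketches is probed in $O(b\log\log n)$ locations.''

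One comment on your flagged concern about the nested search over $(i_a,i_b,i_c)$: the concern is real, but it is not a point on which your proposal diverges from the paper --- the paper performs precisely the sequential binary search you describe (first the maximum $i_c$ with $\HMat[t](0,0,i_c,j)\neq\ang{0}$, then the maximum $i_a$ with $i_c$ fixed, then the maximum $i_b$) and simply asserts ``with constant probability, this entry contains the name of an edge'' without a detailed argument that the sequential maxima land on a triple that isolates one edge, or that non-zero entries form a downward-closed region in each coordinate (which a cancellation-free assumption must justify). If you were to write this up, the extra work you flag would need to go into strengthening Lemma~\ref{lem:artificial-sketch} into a constructive version compatible with coordinate-by-coordinate binary search, but that is gap-filling relative to the paper's own exposition, not a different proof strategy.
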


\begin{proof}
Consider how we construct the sketch matrix $\HMat[I]$ for an interval $I$.
For each affected component $\gamma_z$, $\HMat(\gamma_z,I)$ is the sum of two $\beta$ sketches,
and for each $v\in D$, $\HMat(v,I)$ is the sum of two $\alpha$ sketches.  
Recall that entries of $\HMat(\gamma_z,I,D)$ are computable in $O(1)$ time, given matrices
$\sigma,\sigma',\rho,\rho',$ and $\sigma'',\rho''$.  The first four matrices depend only on $\gamma_z$ and entries
in them can be computed in $O(1)$ time.  The last two matrices depend on both $\gamma_z$ and $D$, 
and each of their entries takes $O(d)$ time to compute.  Thus, if $b$ \Boruvka{} steps suffice, it takes
$O(bd^2\log^2 n)$ time to compute the relevant entries of the $\sigma'',\rho''$ matrices, 
over all $O(d\log n)$ affected $\gamma_z$.

Now consider a tree $t$ in the $j$th \Boruvka{} step.  We look for an edge with one endpoint in $t$
via three binary searches over $\HMat[t]$.
We find the maximum $i_c$ for which $\HMat[t](0,0,i_c,j) \neq \ang{0}$, 
then find the maximum $i_a$ for which $\HMat[t](i_a,0,i_c,j) \neq \ang{0}$,
then find the maximum $i_b$ for which $\HMat[t](i_a,i_b,i_c,j) \neq \ang{0}$.
With constant probability, this entry contains the name of an edge with one endpoint in $t$.
Thus, each of the $O((d\log n)^2)$ basic sketches is probed in $O(b\log\log n)$ locations,
for a total time of $O(d^2b\log^2 n\log\log n)$.  Once again, $b$ is $O(\log d)$ in expectation and $O(\log n)$ w.h.p.
\end{proof}

\section{Conclusions}\label{sect:conclusion}

In this paper we illustrated the power of a new graph decomposition theorem by giving time- and space-efficient connectivity oracles 
for graphs subject to vertex failures.  Our data structures perform well in all the major measures of efficiency
(space, update time, query time, and preprocessing time)
but leave many opportunities for improvement.  The following open problems are quite challenging.

\begin{itemize}
\item The \Furer-Raghavachari~\cite{FurerR94} algorithm \FRTree{} for computing 
near-minimum degree spanning tree takes $O(mn\log n)$ time, which is
the main bottleneck in our construction.  
Is it possible to reduce the running time of \FRTree{} to $\tilde{O}(m)$, or compute
spanning trees with similar decomposition properties in $\tilde{O}(m)$ time?  Would such a result contradict
a popular hardness conjecture?\footnote{See Open Problem 24 from the \emph{Structure and Hardness in P} open problems list~\cite{LewensteinPW16}.}

\item The conditional lower bounds of \cite{KopelowitzPP16,HenzingerKNS15} show that any connectivity oracle with reasonable update time
cannot have $\tilde{O}(1)$ query time, independent of $d$, but they do not preclude a data structure having both query and update time
$\tilde{O}(d)$.  Is it possible to reduce the update time below $O(d^2)$ without disturbing the space or query time?

\item Is it possible to reduce the space of our deterministic $\dmax$-failure connectivity oracle to $\tilde{O}(m)$ (independent of $\dmax$)
or perhaps $\tilde{O}(\dmax n)$?
\end{itemize}

A more accessible problem is to eliminate log-factors, especially in our Monte Carlo structure, 
which still has an extra $\log^6 n$ factor in space and $\log^2 n$ factor in update time.\\

\noindent {\bf Acknowledgement.} We would like to thank Kasper Green Larsen and Peyman Afshani for help with the navigating the 
range searching literature, Shiri Chechik for suggesting the reduction to 3D range searching in Section~\ref{sect:method3},
and Veronika Loitzenbauer for bringing~\cite{HenzingerN16} to our attention and pointing out the $\Omega(\min\{m,\dmax n\})$ lower bound
on $\dmax$-failure connectivity oracles.


\newcommand{\etalchar}[1]{$^{#1}$}

\end{document}